\documentclass[12pt, reqno,titlepage, hyperfootnotes=false]{amsart}
\usepackage{lmodern}
\usepackage[T1]{fontenc}
\usepackage{mathtools}
\usepackage[a4paper]{geometry}
\usepackage{color}
\usepackage{array}
\usepackage{bm}
\usepackage{amstext}
\usepackage{amsthm}
\usepackage{amssymb}
\usepackage{ragged2e}
\usepackage{multibib}

\newcites{supp}{References}

\usepackage{graphicx}
\usepackage{nicefrac}
\usepackage{setspace}
\usepackage{microtype}
\usepackage{xcolor}
\usepackage[textsize=tiny]{todonotes}
\usepackage[authoryear]{natbib}
\usepackage{xr}
\allowdisplaybreaks
\usepackage[unicode=true,pdfusetitle,
 bookmarks=true,bookmarksnumbered=false,bookmarksopen=false,
 breaklinks=false,pdfborder={0 0 0},pdfborderstyle={},backref=false,colorlinks=true]
 {hyperref}

\makeatletter

\numberwithin{equation}{section}
\numberwithin{figure}{section}

\usepackage{amsfonts}
\usepackage{amstext}
\usepackage{amsthm}

\usepackage{hyperref}
\hypersetup{
    colorlinks = true,
    citecolor = black
}

\usepackage{threeparttable}
\usepackage{graphicx}
\usepackage{enumerate}
\usepackage{listings}
\usepackage{subcaption}
\usepackage{caption}
\DeclareMathOperator*{\argmin}{arg\,min}

\DeclareMathOperator*{\argmax}{arg\,max}

\usepackage{float}
\newtheoremstyle{compact}
{4pt}   
{4pt}   
{\itshape}
{}
{\bfseries}
{.}
{.5em}
{}

\theoremstyle{compact}
\newtheorem{prop}{Proposition}
\newtheorem{thm}{Theorem}

\newtheorem{lem}{Lemma}
\newtheorem*{lem*}{Lemma}
\newtheorem*{thm*}{Theorem}

\newtheorem{assumption}{Assumption}

\theoremstyle{definition}

\newtheorem{example}{Example}[section]

\def\E{\mathbb{E}}
\def\D{\mathcal{D} }
\def\N{\mathcal{N} }
\def\R{\mathbb{R} }
\def\P{\mathbb{P}}
\def\G{\mathcal{G}}

\usepackage[textsize=tiny]{todonotes}
\newcommand{\red}[1]{\textcolor{red}{#1}}

\g@addto@macro\normalsize{%
	\setlength\abovedisplayskip{4pt}%
	\setlength\belowdisplayskip{4pt}%
	\setlength\abovedisplayshortskip{2pt}%
	\setlength\belowdisplayshortskip{2pt}%
}
\makeatletter
{
	\renewcommand\subsection{\@startsection{subsection}{2}{\z@}%
		{0pt}
		{0.4ex}
		{\normalfont\normalsize\bfseries}}
}
\makeatother

\renewcommand{\footnotesize}{\scriptsize}

\begin{document}
\pagenumbering{gobble} 
\title{Empirical Bayes for compound adaptive experiments}
\author{Karun Adusumilli\textsuperscript{*}}
\author{Jiaying Gu\textsuperscript{\textdagger}}
\author{Junfan Tao$^\ddagger$}

\begin{abstract}
We investigate Empirical Bayes (EB) methods in the context of compound adaptive experiments, where the arm distribution in each experiment follows a normal distribution with an unknown mean that we seek to estimate. There are two main EB strategies: $g$-modeling, which estimates the prior by maximizing the marginal likelihood, and $f$-modeling, which derives posterior means directly from the empirical distribution of the observations. We show that $g$-modeling continues to be a valid EB procedure even when it incorrectly assumes that data are collected exogenously; its validity does not depend on the particular sampling algorithm or on whether sample sizes are endogenous. In practice, one can apply standard 
$g$-modeling techniques by acting as though the data were exogenously sampled. We extend regret guarantees from exogenous sampling to adaptively generated data. By contrast, naively applying the Tweedie formula based on the marginal density of the observed data, as in standard $f$-modeling, can produce biased rules under adaptive sampling. We corroborate the robustness of $g$-modeling through simulations with widely used adaptive algorithms and demonstrate its applicability using a real-world dataset consisting of multiple sequential experiments.\end{abstract}

\thanks{\textit{This version}: \today{}. \\
\textsuperscript{*}Department of Economics, University
of Pennsylvania; akarun@sas.upenn.edu.\\
\textsuperscript{\textdagger}Department of Economics, University of Toronto;
jiaying.gu@utoronto.ca.\\
$^{\ddagger}$Institute of Economic Research, Kyoto University; 
tao.junfan.7j@kyoto-u.ac.jp.
}

\maketitle

\newpage 
\pagenumbering{arabic}

\section{Introduction \protect\label{sec:Introduction}}

Recent years have seen remarkable advancements in the theory and application of adaptive experiments. These cutting-edge designs are now widely adopted across a range of disciplines, including online advertising \citep{russo2017tutorial}, dynamic pricing \citep{ferreira2018online}, drug discovery \citep{wassmer2016group}, public health \citep{athey2021shared}, and economic interventions \citep{kasy2019adaptive}. Compared to traditional randomized trials, adaptive experiments offer a more efficient and flexible framework for balancing welfare, ethical, and economic considerations. Their potential has been widely recognized---e.g., since the launch of the Critical Path Initiative in 2006, the FDA has actively promoted the adoption of adaptive designs in clinical trials to reduce costs and mitigate risks for participants.

While a rich literature has introduced innovative algorithms for implementing adaptive designs, fundamental challenges remain in post-experiment estimation. For example, technology firms routinely run thousands of adaptive experiments, commonly known as Online Controlled Experiments (OCEs), to assess and improve digital products and services. In a standard OCE, visitors to a website are randomly assigned either to a control group or to one of $K$ treatment groups, and the allocation probabilities are continually adjusted in response to interim outcomes. Estimating treatment effects in such adaptive frameworks, however, poses distinctive challenges. Conventional methods, like simple differences in sample means, break down because the adaptive data collection process violates the assumptions underlying traditional statistical procedures. In practice, Bayesian methods are therefore often used, where a prior is placed on the treatment effect, and the posterior mean is reported. This strategy, however, makes the results sensitive to the choice of prior and can yield biased estimates when the prior is misspecified.

These challenges raise important methodological questions: How can treatment effects be estimated efficiently and reliably in adaptive experimental settings? Can information across multiple experiments be aggregated to enhance estimation accuracy? Moreover, can we learn the prior distribution of the treatment effects directly from the data?

To address these questions, this paper investigates the use of Empirical Bayes (EB) methods in the context of compound adaptive experiments, where multiple adaptive experiments share a common but unknown distribution of effect sizes. As noted by \cite{efron2014two}, there are two main strategies for EB estimation: $g$-modeling, which estimates the prior by maximizing the marginal likelihood, and $f$-modeling, which computes the posterior means directly from the sample distribution of the observations. Our key contribution is to establish that $g$-modeling remains valid even under adaptive sampling, whereas $f$-modeling fails.

Remarkably, we show that \( g \)-modeling requires no knowledge of the data-generating algorithm. Standard \( g \)-modeling procedures can be applied as if the data were exogenously drawn and the sample means were normally distributed - even when neither assumption holds in reality. Moreover, each treatment arm can be analyzed independently, even when some or all arms are part of a common adaptive design. Despite these apparent misspecifications, we establish that \( g \)-modeling retains similar risk guarantees in adaptive settings as it does under independently and identically distributed (i.i.d.) data.  

These two key properties---algorithm independence and the ability to treat each arm as an independent experiment---substantially extend the applicability of EB methods. For instance, \( g \)-modeling can estimate mean treatment effects within a single adaptive experiment, even when the design includes a large number of treatment arms. Such settings are common in practice; e.g., \citet{chapelle2011empirical} describe an application in display advertising where bandit algorithms allocate traffic across 5,910 possible ads, each functioning as a distinct treatment. 

Of course, algorithm independence is not unique to $g$-modeling. As established by the likelihood principle (see Chapter 7 in \citealp{berger2013statistical}), any Bayesian estimation strategy also remains valid regardless of the sampling algorithm. Since $g$-modeling is simply Bayesian estimation with an estimated prior, its posterior estimates naturally inherit this robustness. The novel insight of this paper, however, is that the prior itself can be estimated in an algorithm-independent manner.

To explain why $g$-modeling consistently estimates the prior despite the misspecification of the likelihood, we introduce a novel interpretation of $g$-modeling as a moment-matching procedure. Specifically, we show that $g$-modeling aligns the sample moments of the posterior distribution with those of the estimated prior. At the population level, the prior is identified through moments implied by the law of iterated expectations, which requires the prior moments to equal the expected posterior moments. The moment-matching perspective suggests that $g$-modeling effectively solves for the prior by leveraging the sample analogs of these identifying conditions. Consequently, for parametric prior classes, the validity of $g$-modeling can be established using standard generalized method of moments arguments.

When the class of candidate priors is left unspecified, as in non-parametric maximum likelihood (NPMLE) procedure, we show that $g$-modeling estimates a probability distribution satisfying a self-consistency property: it equals the average posterior distribution corresponding to itself. This is again a sample version of the population requirement that the prior be equal to the expected posterior.

We also formally demonstrate the regret consistency of the NPMLE procedure. This relies on a novel extension of the seminal results of \cite{jiang2009general} and  \cite{jiang2020general}, and may be of independent of interest. 

As is standard in the EB literature, we derive our results on regret consistency assuming Gaussian outcomes. However, we argue that even with general parametric models the Gaussian likelihood naturally emerges as an approximation to the true likelihood within a local asymptotic regime. Effectively, the $g$-modeling framework continues to be valid if we use the sample means of the score functions in place of the sample means of the outcomes.

Beyond treatment effect estimation, knowledge of the prior can improve future decision making and algorithmic design. For instance, \cite{azevedo2020b} utilized EB methods to estimate the prior across multiple static Online Controlled Experiments (OCEs). Their findings revealed that the distribution of treatment effects is typically fat-tailed, leading them to propose a ``lean'' experimentation strategy--conducting a larger number of experiments, each with smaller sample sizes. Our methods allow us to extend this analysis to adaptive experiments.

As an illustration of our methods, we apply the $g$-modeling strategy to the ASOS digital experiments dataset \cite{liu2021datasets}, which comprises 61 adaptive experiments run by a business unit within ASOS, a fashion retail platform. The specific algorithms used to generate this data are proprietary and not publicly disclosed, making this an ideal setting to evaluate the robustness and applicability of our approach. Our results appear to suggest a thick tailed distribution of treatment effects, though some caution is warranted given the small sample of experiments in our analysis.

\subsection{Related literature}
There is a substantial and growing literature on the design of adaptive experiments; see \cite{lattimore2020bandit} and \cite{wassmer2016group} for comprehensive surveys. In contrast, research on the estimation of treatment effects following adaptive experiments remains relatively limited. While the standard sample mean estimator is consistent at parametric rates, it is typically asymptotically biased and fails to be normally distributed. One approach to addressing this issue is inverse propensity weighting, as proposed by \cite{hadad2021confidence}, which restores asymptotic normality. However, this method requires knowledge of the data-generating algorithm and applies only to specific classes of non-deterministic algorithms, excluding widely used methods such as Upper Confidence Bound (UCB) and Bayes-optimal algorithms. 

Alternative approaches, such as those in \cite{deshpande2018accurate} and \cite{nie2018adaptively}, attempt to de-bias the sample mean estimator in adaptive experiments, producing asymptotically normal estimates but often at the cost of increased variance. Meanwhile, a large applied literature eschews frequentist methods in favor of Bayesian estimation. While Bayesian estimators are Bayes-optimal by construction, their performance is highly sensitive to the choice of prior, raising concerns about robustness in practical applications.

We contribute to this literature by developing methods to estimate the prior from the data and extending the Empirical Bayes framework to multiple adaptive experiments, thereby improving both the efficiency and robustness of estimation.

While the foundations of EB trace back to \cite{robbins1951asymptotically}, the methodology has recently experienced a resurgence, particularly in economics, driven by applications in labor economics and related fields. For comprehensive overviews, see \cite{efron2012large}, \cite{walters2024empirical}, \cite{koenker2024empirical} and \cite{koenker_gu_2026}. Beyond treatment effect estimation, the EB framework has broad applicability to various compound decision problems, including multiple testing \citep{efron2012large} and ranking \citep{gu2023invidious}.

\cite{efron2014two} describes $g$-modeling and $f$-modeling approaches to EB estimation. This paper provides a novel interpretation of $g$-modeling as a moment-matching procedure. This perspective is based on \cite{adusumilli2020unobserved}, itself an extension of the seminal work of \cite{neal1998view} on the variational interpretation of the EM algorithm. 

The non-parametric maximum likelihood estimator (NPMLE), first introduced by \cite{kiefer1956consistency}, is a flexible approach to \( g \)-modeling that estimates the prior without imposing parametric assumptions on its form. In recent years, the NPMLE has gained traction, driven by computationally efficient algorithms proposed by \cite{koenker2014convex} and strengthened by new theoretical analyses from \cite{jiang2009general}, \cite{jiang2020general}, and \cite{polyanskiy2020self}. \cite{gilraine2020new} apply this method to non-parametrically estimate distributions of teacher value-added. In this paper, we develop new theoretical results demonstrating that the regret consistency of the NPMLE carries over to adaptive experimental settings.

Our results indicate that standard \( g \)-modeling approaches, which assume a Gaussian likelihood for the sample means, remain valid even for adaptively sampled data, despite the fact that the sample mean may no longer be Gaussian or even a sufficient statistic. Because sample sizes are random, this mis-specified likelihood naturally introduces heteroskedasticity across experiments. Recently, \cite{chen2022empirical} proposed the CLOSE framework for analyzing such heteroskedastic models. However, our findings indicate that CLOSE loses its validity under our setup of compound adaptive experiments. In our setup, sample sizes are independent of the unknown parameters when conditioned on observed data . The breakdown of CLOSE occurs for two reasons. First, CLOSE assumes a specific relationship between parameters and sample sizes, which may not hold under adaptivity. Second, and more critically, CLOSE treats sample sizes as parameters, whereas in adaptive experiments, they are functions of the data. As we demonstrate in Section \ref{subsec:CLOSE}, this misalignment causes CLOSE to apply Bayesian updating twice to the same data, leading to estimates that are insufficiently shrunk.

\section{Compound Adaptive Experiments\protect\label{sec:Setup}}

\subsection{Motivating example: The ASOS Digital Experiments dataset} \protect\label{subsec:Motivating example}

Online Controlled Experiments (OCEs) are web-based randomized controlled trials designed to evaluate and improve digital products and services. These experiments often incorporate adaptive stopping rules, enabling data collection to stop when pre-specified criteria are met. Technology companies routinely conduct thousands of OCEs daily to refine user experiences and optimize product performance. In a typical OCE, website users are randomly assigned to either a control group or one of \( K \) treatment groups. OCEs with \( K = 1 \) are commonly referred to as A/B tests.

Between 2019 and 2020, the global fashion retailer ASOS.com conducted 
$78$ such OCEs, documented in the ASOS Digital Experiments Dataset \citep{liu2021datasets}. This dataset includes daily sample means for the treatment and control groups, alongside the number of samples per treatment arm. The retailer used an adaptive stopping criterion to determine the end of each experiment, but the algorithms used are proprietary and unknown (to us).

For simplicity, we assume each ASOS experiment involves a single treatment arm (i.e., it is an A/B test).\footnote{In reality, 17 experiments included multiple treatments; we drop these experiments, resulting in $n = 61$ experiments in total. See Section \ref{sec:Empirical illustration} for more details.} In each experiment, let $j$ index the periods of experimentation, and in each period a single treatment and control observation is drawn with equal proportion, but we only observe the difference $Y_{j,i} = Y_{j,i}^{(1)} - Y_{j,i}^{(0)}$ between the treated and control observation. The rationale for this assumption is discussed in Section \ref{subsec: Generalizing}. We assume that the outcome differences follow a normal distribution, \( Y_{j,i} \sim \mathcal{N}(\tilde{\theta}_i, \omega_i^2) \), where \( \omega_i^2 \) is known. The Gaussianity assumption is relaxed in Section \ref{subsec:local asymptotics}. 

Let $A_{j,i} \in \{0,1\}$ indicate whether experiment $i$ stops in period $j$, with $A_{j,i} = 1$ denoting a stopping decision. The stopping rule is governed by a policy $\pi_{j,i}$, which maps past information to a probability of stopping. Formally, stopping occurs when $ A_{j,i} = \mathbb{I}\{U_{j,i} < \pi_{j,i}\}$, where $U_{j,i} \sim \textrm{Uniform}[0,1]$ is an exogenous random variable (i.e., independent of all past data) encoding policy randomization.

Define the information set at period $j$ as $\mathcal{I}_{j,i} \equiv \sigma\{ Y_{1,i}, U_{1,i}, \dots, Y_{j-1,i}, U_{j-1,i} \}.$ The stopping policy is thus a function 
\[
\pi_{j,i}: \mathcal{I}_{j,i} \to [0,1].
\]
The policies are implicitly restricted to depend only on past outcome differences and not on the outcome levels themselves. As discussed in Section \ref{subsec: Generalizing}, this restriction holds for most stopping rules commonly used in A/B testing.  

The stopping time for experiment \( i \) is denoted by \( N_i \), and the realized sample from each experiment is given by \( \mathcal{D}_i \equiv \mathcal{I}_{N_i + 1,i} \). The sample mean of the treatment effects is defined as  
\[
\tilde{Z}_i = \frac{1}{N_i} \sum_{j=1}^{N_i} Y_{j,i}.
\]

In the ASOS dataset, the observed sample means are small (on the order of \( 10^{-3} \)), while the realized \( N_i \) values are large (on the order of \( 10^5 \)). Such relative magnitudes are common in online experimentation. \cite{deng2013improving} survey A/B testing practices in digital environments and report that treatment effects often amount to less than 1\% of the expected control outcome. Despite their small size, these effects can generate substantial revenue gains when implemented at scale, motivating companies to conduct large experiments to detect them with statistical precision. 

To facilitate accurate estimation of these small treatment effects, we rescale them as $\theta_i = \sqrt{N} \tilde{\theta}_i$, where
\begin{equation}  \label{eq:definition of N}
N := \frac{1}{n}\sum_{i=1}^n \mathbb{E}[N_i].	
\end{equation}
Correspondingly, we also rescale $N_i$ and $\tilde{Z_i}$ as 
\begin{align} \protect \label{Definition of t_i, Z_i}
	\tau_i & := \delta_\tau(\D_i) = N_i/N, \textrm{ and} \nonumber \\
	   Z_i &:= \delta_z(\mathcal{D}_i)  = \sqrt{N} \tilde{Z}_i.  
\end{align}	   
This ensures that the orders of magnitude of $\tau_i, Z_i$ are stable even as $N$ increases.\footnote{The scaling factor $N$ is used only as a conceptual tool to justify focusing on small treatment effects. Our methods do not depend on knowing its value.} 

To motivate our Empirical Bayes (EB) procedures, we start by discussing the estimation of $\theta_i$ in a Bayesian setting where each $\theta_i$ is an independent random draw from a known prior $G_0$. 
\subsubsection*{The likelihood principle and Bayes estimation}
We can write the likelihood of the observed history $\D_i$ as 
\begin{align}  \protect \label{eq:derivation of likelihood}
&p(\D_i | \theta_i)	 	 = \prod_{j=1}^{N_i}  p\Large(A_{j,i}, U_{j,i}, Y_{j,i}| \mathcal{I}_{j,i}, \theta_i \Large) \nonumber \\
	 	& = \left[\prod_{j=1}^{N_i-1} p(A_{j,i} = 0, U_{j,i}|\mathcal{I}_{j,i}, \theta_i)\right]
	 	  p(A_{N_i,i} = 1, U_{N_i,i}|\mathcal{I}_{N_i,i}, \theta_i)\cdot \prod_{j=1}^{N_i} p(Y_{j,i}|A_{j-1,i} =0, U_{j,i}, \mathcal{I}_{j,i}, \theta_i) \nonumber\\
	 	& = \left[ \prod_{j=1}^{N_i-1} 1\{U_{j,i}\geq \pi_{j,i}  \} \right]1\{U_{N_i,i}< \pi_{N_i,i} \}\cdot
	 	 \prod_{j=1}^{N_i} p(Y_{j,i}|\theta_i) ,
\end{align}  	
	 where the last equality follows from the facts: (1) $A_{j,i}$ is determined by the policy $\pi_{j,i}$ as a function solely of the past history and exogenous randomization $U_{j,i}$ (the actions and the policy can not depend on $\theta_i$ as it is unknown); (2) $U_{j,i}$ is uniformly distributed, implying its density is 1; and (3) the distribution of $Y_{j,i}$ given $\theta_i$ is independent of the past history of observations because, conditional on continuing the experiment, the outcomes are just a random draw from $\N(\theta_i/\sqrt{N}, \omega_i^2)$.

	Making use of the normality of outcomes, we can represent $p(\D_i \vert \theta_i)$ as 
	\begin{equation}\label{likelihoodPrinciple}
	  p(\D_i |\theta_i)= \pi_i(\D_i) \cdot \frac{1}{\sigma_i} \varphi \big(\frac{Z_i - \theta_i}{\sigma_i} \big) 
    \end{equation} 
where $\varphi(\cdot)$ represents the standard normal density, 
$$
\sigma_i^2 := \delta_\sigma(\D_i) = \frac{\omega_i^2}{\tau_i},
$$ and 
	$$
	\pi_i(\D_i) := p(\D_i \vert 0) \cdot  \sqrt{2\pi\sigma_i^2}  \exp \left\{\frac{Z_i^2}{2\sigma_i^2}  \right\}.
	$$
Note that $\pi_i(\D_i)$ is independent of $\theta_i$.

Suppose instead that the samples were drawn exogenously. In a such a scenario, we would have 
$$
Z_i \mid \theta_i \sim \mathcal{N}(\theta_i, \sigma_i^2),
$$
leading to the likelihood 
\[
\frac{1}{\sigma_i} \varphi \bigg(\frac{Z_i - \theta_i}{\sigma_i} \bigg)
\]
for the observed data. We refer to the above as the `working likelihood' for our Empirical Bayes procedures. Equation (\ref{likelihoodPrinciple}) implies that the working likelihood differs from the true likelihood $p(\D_i\vert \theta_i)$ by a multiplicative constant $\pi_i(\D_i)$ that is independent of $\theta_i$. 

Consider the Bayesian approach that assigns a prior \( G_0 \) to \( \theta_i \). Since the true and working likelihoods differ only by a multiplicative factor free from $\theta_i$, using either leads to the same posterior distribution over \( \theta_i \). This is because of the likelihood principle: Bayesian updating depends solely on the observed data and does not require detailed knowledge of the data-generating algorithm.

 This independence from the underlying sampling process extends naturally to estimation. When the goal is to estimate the scaled treatment effect \( \theta_i \) under squared error loss, the optimal Bayes estimator is the posterior expectation, \( \mathbb{E}_{G_0}[\theta_i | \mathcal{D}_i] \), where \( \mathbb{E}_{G_0}[\cdot| \mathcal{D}_i] \) denotes the expectation conditioned on the observed data \( \mathcal{D}_i \) and the prior \( G_0 \). Using (\ref{likelihoodPrinciple}), the posterior expectation of \( \theta_i \) conditional on \( \mathcal{D}_i \) can expressed as:  
\[
\mathbb{E}_{G_0}[\theta_i \mid \mathcal{D}_i] = \frac{\int \theta_i \frac{1}{\sigma_i} \varphi \left(\frac{Z_i - \theta_i}{\sigma_i} \right) \, dG_0}{\int \frac{1}{\sigma_i} \varphi \left(\frac{Z_i - \theta_i}{\sigma_i} \right) \, dG_0}.
\]
We observe that the posterior mean is identical to the one that would be obtained if the data were assumed to have been generated exogenously, irrespective of the actual adaptive nature of the sampling process.

The Bayesian approach to estimation is particularly appealing because it is independent of the data collection algorithm. However, it is highly sensitive to the choice of a subjective prior. To address this limitation, our EB methods, introduced in Section \ref{sec:Empirical Bayes}, offer a data-driven and robust alternative by estimating the prior directly from observed data. We also demonstrate that the $g$-modeling-based EB procedure retains a key advantage of Bayesian estimation: its independence from the data generation algorithm.

Before delving into the EB methodology, we first generalize the problem setup. This is followed by a detailed discussion of the crucial common prior assumption for $(\theta_1, \dots, \theta_n)$.

\subsection{Generalizing the setup: Compound adaptive experiments} \protect\label{subsec: Generalizing}

Consider a setting in which an analyst observes \( n \) experiments, each associated with an unknown parameter \( \theta_1, \dots, \theta_n \). Let \( \mathcal{D}_i \) denote the data generated from experiment \( i \) and \( \mathcal{D} \) the combined data across all experiments. Our methodology applies to any setting where the likelihood takes the form  
\begin{equation}\label{likelihoodPrinciple:generalization}
	  p(\mathcal{D} \mid \theta_1,\dots,\theta_n) = c(\mathcal{D}) \cdot  \prod_{i=1}^n  \frac{1}{\sigma_i} \varphi \left(\frac{Z_i - \theta_i}{\sigma_i} \right),
\end{equation}  
where \( Z_i \) and \( \sigma_i \) are scalar statistics that depend only on \( \mathcal{D}_i \).  

Equation \eqref{likelihoodPrinciple:generalization} implies that the likelihood is proportional to the usual working likelihood for Empirical Bayes methods, where the data from each experiment \( i \) follows a normal model,  
\begin{equation} \label{eq:working_likelihood}
Z_i \mid \theta_i \sim \mathcal{N}(\theta_i, \sigma_i^2),
\end{equation}
with observations across experiments being mutually independent. The formulation \eqref{likelihoodPrinciple:generalization} turns out to be broad enough to encompass nearly all known classes of compound adaptive experiments. Below, we discuss specific examples that illustrate its applicability.

\begin{example}[Multi-arm adaptive experiments]\label{example 1}

Consider a setting in which an analyst observes one or more adaptive experiments, each involving multiple treatment arms. Our methodology treats each treatment arm as an independent experiment, regardless of whether some or all arms belong to the same adaptive experiment. Let \( i \) index the treatment arms, and model the outcomes from arm \( i \) as  $Y_{j,i} \sim \mathcal{N} \left(\tilde{\theta}_i, \omega_i^2\right)$, 
where \( \tilde{\theta}_i \) represents the unknown mean outcome for arm \( i \), and \( \omega_i^2 \) is assumed to be known.

Let \( N_i \) denote the number of times arm \( i \) has been sampled, and define the observed data for arm \( i \) as \( \mathcal{D}_i := \{ Y_{1,i}, \dots, Y_{N_i,i} \} \), the set of all outcomes generated from that arm. Define \( \tilde{Z}_i \) as the sample mean of outcomes from arm \( i \).  

Our framework accommodates a broad range of adaptive sampling algorithms, allowing them to vary across arms. Moreover, the sampling algorithm for a given arm \( i \) may depend on data from other arms or experiments. For instance, if Arms 1 and 2 belong to the same adaptive experiment, data from Arm 1 may influence how often Arm 2 is sampled. Alternatively, if the arms correspond to separate experiments, data from an earlier experiment may inform the algorithm used in a subsequent experiment. Our setup allows for both scenarios. The only substantive restriction is that the choice of sampling algorithms cannot directly depend on the true treatment effects \( \tilde{\theta}_1, \dots, \tilde{\theta}_n \).  

Define $N$ as in (\ref{eq:definition of N}). In practice, \( N \) is typically large because the mean effects \( \tilde{\theta}_i \) are often close to each other, requiring large samples to reliably distinguish between treatment arms. As before, we employ a local-to-zero reparameterization, $\theta_i := \sqrt{N} \tilde{\theta}_i$, to account for such small difference, and also rescale \( N_i \) and \( \tilde{Z}_i \) to obtain the normalized variables \( \tau_i \) and \( Z_i \) as in \eqref{Definition of t_i, Z_i}. Our primary objective is to estimate the scaled treatment effects \( \theta_1, \dots, \theta_n \), postulated to be random draws from a prior \( G_0 \).  


By applying arguments similar to those in \eqref{eq:derivation of likelihood}, we establish that the overall conditional data density follows the structure of \eqref{likelihoodPrinciple:generalization}.

\begin{prop} \protect \label{Prop:likelihood principle}
In the context of compound adaptive experimentation with multiple treatment arms indexed by \( i \), suppose the sampling algorithm for any given arm \( i \) is conditionally independent of \( (\theta_1, \dots, \theta_n) \), given the data from all other arms and an exogenous randomization. Then, the overall conditional data density \( p(\mathcal{D} \mid \theta_1, \dots, \theta_n) \) takes the form  
\begin{equation}\label{likelihoodPrinciple:MAB}
p(\mathcal{D} \mid \theta_1, \dots, \theta_n) = c(\D) \cdot \prod_{i=1}^n \frac{1}{\sigma_i} \varphi \left(\frac{Z_i - \theta_i}{\sigma_i} \right),
\end{equation}  
for some function \( c(\D) \) that does not depend on $\theta_1, \dots, \theta_n$.  
\end{prop}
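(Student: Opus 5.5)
The plan is to repeat the sequential factorization of \eqref{eq:derivation of likelihood}, now with one global clock for the whole compound experiment rather than one clock per arm. Index the global sampling periods by $t = 1,2,\dots,T$, where $T$ is the (possibly random) total number of draws across all arms and experiments. Let $\mathcal{F}_t$ denote the full history up to period $t$: all past outcomes from every arm, all past actions, and all past exogenous randomizations. In each period, an action $A_t$ is chosen. It selects which arm to sample next, or whether to stop, and it is generated as a measurable function of $\mathcal{F}_t$ and a fresh exogenous $U_t \sim \textrm{Uniform}[0,1]$. If $A_t$ selects arm $i$, an outcome $Y_t \sim \N(\theta_i/\sqrt{N}, \omega_i^2)$ is drawn. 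By the chain rule, the joint density of $\D$ given $(\theta_1,\dots,\theta_n)$ factors as
\[
p(\D\mid\theta_{1:n})=\prod_{t}p(U_t\mid\mathcal{F}_t,\theta_{1:n})\,p(A_t\mid U_t,\mathcal{F}_t,\theta_{1:n})\,p(Y_t\mid A_t,U_t,\mathcal{F}_t,\theta_{1:n}).
\]

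The three factors are handled in turn. The first factor equals one, because $U_t$ is exogenous and uniform. The second factor is the key step. By the hypothesis of the proposition, the algorithm that decides whether and when arm $i$ is sampled uses only the data from all arms and the exogenous randomization, and is conditionally independent of $\theta_{1:n}$. Hence $p(A_t\mid U_t,\mathcal{F}_t,\theta_{1:n}) = p(A_t\mid U_t,\mathcal{F}_t)$. With the $U_t$ convention this is a product of indicators, exactly as in \eqref{eq:derivation of likelihood}, and in any case it is free of $\theta_{1:n}$. The third factor is the Gaussian density $\frac{1}{\omega_i}\varphi\bigl((Y_t-\theta_i/\sqrt{N})/\omega_i\bigr)$ with $i=i(A_t)$. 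This holds because, conditional on sampling arm $i$, the new draw is independent of the past given $\theta_i$.

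Collecting the $\theta$-free factors into $c_0(\D)$ and grouping the Gaussian factors by arm gives
\[
p(\D\mid\theta_{1:n}) = c_0(\D)\prod_{i=1}^n\prod_{j=1}^{N_i}\frac{1}{\omega_i}\varphi\Bigl(\frac{Y_{j,i}-\theta_i/\sqrt N}{\omega_i}\Bigr).
\]
For each arm, expand the square, $\sum_j (Y_{j,i}-\theta_i/\sqrt N)^2/\omega_i^2 = \sum_j Y_{j,i}^2/\omega_i^2 - 2\theta_i Z_i/\sigma_i^2 + \theta_i^2/\sigma_i^2$. This uses $Z_i=\sqrt N\tilde Z_i$, $\tau_i=N_i/N$ and $\sigma_i^2=\omega_i^2/\tau_i$, so that $N_i/(N\omega_i^2)=1/\sigma_i^2$. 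Completing the square then shows that the inner product equals
\[
h_i(\D_i)\,\frac{1}{\sigma_i}\varphi\Bigl(\frac{Z_i-\theta_i}{\sigma_i}\Bigr),
\qquad
h_i(\D_i)=\sigma_i\sqrt{2\pi}\,(2\pi\omega_i^2)^{-N_i/2}\exp\Bigl\{-\tfrac12\textstyle\sum_j Y_{j,i}^2/\omega_i^2+Z_i^2/(2\sigma_i^2)\Bigr\}.
\]
This is the same computation that produced $\pi_i(\D_i)$ in \eqref{likelihoodPrinciple}. Setting $c(\D)=c_0(\D)\prod_i h_i(\D_i)$ yields \eqref{likelihoodPrinciple:MAB}.

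The main obstacle is justifying the second-factor step rigorously. The hypothesis is stated per arm, but the actions form a single interleaved sequence. I would therefore formalize the hypothesis as a statement about the global action kernel: the conditional law of $A_t$ given $(\mathcal{F}_t,U_t)$ does not involve $\theta_{1:n}$. I would also check that cross-arm and cross-experiment dependence, such as data from arm $1$ driving the sampling of arm $2$, enters only through $\mathcal{F}_t$. Arms with $N_i=0$ contribute no factor and need separate handling. A random total length $T$ is covered because stopping is itself one of the actions.
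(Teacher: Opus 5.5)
Your proposal is correct and follows essentially the same route as the paper's proof. Both use a single global sequence of stages and factor the augmented likelihood so that the exogenous uniforms and the arm-choice/stopping probabilities, which are functions of $\mathcal{F}_k$ only, are free of $(\theta_1,\dots,\theta_n)$; this leaves $\prod_i\prod_{j\le N_i}p(Y_{j,i}\mid\theta_i)$, which is then rewritten by the Gaussian completing-the-square step. The differences are cosmetic: the paper writes the multiplier as $c(\mathcal{D})=p(\mathcal{D}\mid 0,\dots,0)\prod_i\sqrt{2\pi}\,\sigma_i e^{Z_i^2/(2\sigma_i^2)}$ and integrates out the augmented randomization $\{U_k,A_k\}$ at the end, while you compute $h_i(\mathcal{D}_i)$ explicitly and keep the $U$'s inside $\mathcal{D}$.
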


See Appendix \ref{sec:Appendix:A} for the formal proof. 
\end{example}

\begin{example}[Panel data with missingness and attrition]

Consider a panel data setting where outcomes are modeled as \( Y_{j,i} = \theta_i + \epsilon_{j,i} \), with \(\epsilon_{j,i} \sim \mathcal{N}(0, \omega_i^2)\). Here, the goal is to estimate the parameters \(\theta_i\). Observations may be subject to missingness or attrition, assumed to occur at random given the history of past observations. The probabilities of missingness and attrition are unspecified and can vary across both time periods \(j\) and individuals \(i\). This scenario aligns with the structure of an adaptive multi-arm experiment, implying that Proposition \ref{Prop:likelihood principle} applies to this setting as well.
\end{example}

\begin{example}[Correcting for p-hacking]
P-hacking refers to a range of practices by which researchers selectively report statistically significant results. Here, we focus on specific forms of p-hacking that arise when the experimental protocol is mis-characterized or incompletely reported---even as all relevant data are disclosed.

Examples include optional stopping, where data collection continues until statistical significance is achieved, and selective subgroup analysis, where only significant results from specific subgroups are reported. These forms of p-hacking are equivalent to adaptive experiments with an unknown sampling algorithm, placing them squarely within the scope of our framework.\footnote{Other forms of p-hacking include selectively replacing or failing to disclose relevant data. These possibilities fall outside the remit of our methods.} 

Bayesian methods are robust to selective reporting of this kind, but are sensitive to the choice of prior. Our Empirical Bayes approach enables estimation of the prior from a meta-analysis of potentially p-hacked studies. The resulting Bayes estimators inherently adjust for the distortions introduced by p-hacking.
\end{example} 
\begin{example}[Multiple A/B tests] 

The multiple A/B testing problem, introduced in Section \ref{subsec:Motivating example}, is a special case of multi-arm adaptive experiments in which each experiment consists of exactly two arms: a treatment and a control. While these experiments can therefore be analyzed similarly to Example \ref{example 1}, the assumption that mean effects across arms are drawn from the same prior may be less credible. This is because the control arm typically represents the status quo and is not necessarily exchangeable with the treatment arm. Moreover, in some cases, the same control arm may be used across multiple experiments, further complicating standard exchangeability assumptions.

A complete analysis of multiple A/B testing would require specifying a joint prior on the mean effects of both the treatment and control arms. However, multivariate EB procedures are more complex, and for tractability, we adopt a simpler approach, as in Section \ref{subsec:Motivating example}, based on the following assumptions:
\begin{enumerate}
\item The arms are sampled in equal (or known) proportions.
\item Only the outcome differences between the treatment and control arms are observed in each period.
\item The stopping time depends only on past treatment effect differences.
\end{enumerate}
The first assumption is standard for A/B tests and is known to hold in the ASOS example. The second assumption implies that we set aside information about the mean outcome in each individual arm and instead concentrate solely on the differences between arms. This strategy is standard in EB analysis, even though it entails some loss of efficiency because only part of the available data is used for estimation. It can be supported by an equivariance requirement stating that the decision rule should remain unchanged if a constant is added to all outcomes. The third assumption is unverifiable in the ASOS context, but in practice, almost all A/B testing stopping rules are based solely on the difference in sample means. This includes Wald's seminal sequential probability ratio test (SPRT; \citealt{wald1945statistical}), various group-sequential methods from the clinical trials literature \citep{wassmer2016group}, and the minimax-optimal stopping rule of \cite{adusumilli2022sample}. Moreover, it can be shown that stopping times based only on the difference in sample means form an asymptotically complete class.

Under these assumptions we can simplify the analysis by focusing on the difference in means and placing a prior directly on the treatment effects, as we did in Section \ref{subsec:Motivating example}. It then follows straightforwardly that an analog of Proposition \ref{Prop:likelihood principle} holds, where \( Z_i \) now represents the scaled difference in sample means at the conclusion of experiment \( i \).

\end{example}

\subsection{On the common prior assumption}

Throughout this paper, we adopt the standard Empirical Bayes (EB) assumption that \(\theta_1, \dots, \theta_n\) are i.i.d draws from a common prior. This assumption implies that any specific realization \((\theta_1, \dots, \theta_n)\) is just as likely as any of its permutations \((\theta_{\kappa(1)}, \dots, \theta_{\kappa(n)})\), ensuring exchangeability of the parameter vectors. As \cite{efron2012large} notes, this is closely tied to the question of `comparability': ``\textit{Empirical Bayes methods involve each case learning from the experience of others... To make this believable, the `others' have to be similar in nature to the case at hand, or at least not obviously dissimilar}''. 

The common prior seems particularly natural in the context of analyzing multiple arms within the same adaptive experiment: after all, there is no a priori reason to expect that the ordering of the arms carries any intrinsic significance. However, when considering multiple adaptive experiments, the validity of this assumption depends on the context. For instance, it would be highly questionable to pool adaptive experiments on ad-targeting with those on drug discovery. In the context of the ASOS dataset, the various experiments were run by a single business unit within the company. Moreover, it is reasonable to assume that the algorithms used were independent of the parameters, given the observed data (as required for Proposition \ref{Prop:likelihood principle}). Therefore, in this setting, we consider the common prior assumption to be much more justifiable. 

The term \textit{compound decision problem} refers to a setting in which an experimenter seeks to estimate \(\theta_1, \dots, \theta_n\) by minimizing an average frequentist mean-squared error criterion (i.e., averaged across the $n$ experiments, see Section \ref{sec:Theoretical results}). The compound decision problem is not associated with a prior and is generally different from the EB problem. However, the fundamental theorem of compound decisions \citep{robbins1951asymptotically} establishes that these two problems are equivalent if the experiments are exchangeable---meaning the conditional likelihood for experiment \(i\) has the same form across all experiments---and if the prior \(G_0\) is taken to be the empirical distribution \(n^{-1} \sum_i \delta_{\theta_i}\).  

In our setting, the sampling algorithms are allowed to differ across experiments, so exchangeability of experiments may not hold.\footnote{Note, however, that this is separate from the exchangeability of parameters, as required for Empirical Bayes.} However, if all arms belong to the same adaptive experiment or if identical algorithms were used across experiments with constant outcome variances (\(\omega_i^2 = \omega^2\) for all \(i\)), then exchangeability of  experiments is preserved too, and our methods also solve the compound decision problem.

\section{Empirical Bayes Methodology for Estimating the Prior\protect\label{sec:Empirical Bayes}}

Continuing with the general setup of Section \ref{subsec: Generalizing}, let $G_0$ denote the common prior for $\theta_1,\dots ,\theta_n$. For any given candidate prior \( G \), the marginal density of the data \( \D \) is given by 
$$
 p_{G}(\D) = \int p(\D|\theta_1, \dots, \theta_n)dG^{(n)}(\theta _1,\dots,\theta_n),
$$
where $G^{(n)}(\cdot)$ denotes the product prior over $\theta_1,\dots ,\theta_n$ corresponding to the marginal $G$. Using (\ref{likelihoodPrinciple:generalization}), $p_G(\D)$ can be written as:
\begin{equation} \label{eq:marginal_density}
   p_{G}(\mathcal{D}) = c(\D) \prod_{i=1}^n \int  \frac{1}{\sigma_i} \varphi \bigg(\frac{Z_i - \theta_i}{\sigma_i} \bigg) \, dG(\theta_i ) = c(\D) \prod_{i=1}^n  f_{G, \sigma_i}(Z_i),
\end{equation}
where
\begin{equation} \protect \label{eq:definition_of_f_G}
f_{G, \sigma_i}(Z_i) := \frac{1}{\sigma_i} \int  \varphi \bigg(\frac{Z_i - \theta_i}{\sigma_i} \bigg) \, dG(\theta_i).
\end{equation}
The term \( f_{G, \sigma_i}(Z_i) \) represents the marginal density of \( Z_i \) derived from the working likelihood. 

The core idea of the $g$-modeling approach is to estimate the true prior \( G_0 \) by maximizing the marginal likelihood \( p_{G}(\D) \) over a candidate family of priors \( \mathcal{G} \). Different choices of \( \mathcal{G} \) define different $g$-modeling procedures. For example, \( \mathcal{G} \) could represent the class of all Gaussian priors with mean 0, resulting in a parametric $g$-modeling approach. Alternatively, \( \mathcal{G} \) could remain unrestricted, leading to the non-parametric maximum likelihood estimation (NPMLE) approach. In all cases, the $g$-modeling framework estimates \( G_0 \) as:
\[
\hat{G} = \argmax_{G\in \mathcal{G}}\frac{1}{n}  \ln p_{G}(\D) = \argmax_{G\in \mathcal{G}} \frac{1}{n} \sum_{i} \ln f_{G,\sigma_i}(Z_i).
\]

Importantly, \( f_{G,\sigma_i}(Z_i) \) is independent of the structure of the adaptive experiment. This independence implies that, from a computational perspective, $g$-modeling does not require any knowledge of how the dataset was collected.

Our key insight is that this independence also supports statistical guarantees for $g$-modeling. Under the assumptions stated in Section \ref{sec:Theoretical results}, maximizing the Gaussian working likelihood yields a regret-consistent plug-in Bayes rule even when using the `wrong' marginal likelihood \( \prod_i f_{G,\sigma_i}(Z_i) \), which presumes the data were generated exogenously.

Why does $g$-modeling remain valid under this mis-specification? As shown below, this robustness stems from a novel interpretation of $g$-modeling as a moment-matching procedure. Specifically, we show that $g$-modeling aligns the moments of the posterior with those of the estimated prior. Because the posterior is algorithm independent, this ensures validity of g-modeling in adaptive settings even under a mis-specified likelihood. The moment matching interpretation builds on the work of \cite{neal1998view}, as further developed in \cite{adusumilli2020unobserved}.

\subsection{$g$-modeling as moment matching\protect\label{sec:g modeling moment matching}}

The Donsker-Varadhan variational formula states that 
\begin{equation}\label{DV_formula}
	  \ln f_{G,\sigma_i}(Z_i) = \max_{q_i(\cdot)} \bigg\{ \E_{q_i(\cdot)}\left[\ln \bigg\{ \frac{1}{\sigma_i} \varphi \big(\frac{Z_i - \theta_i}{\sigma_i} \big) \bigg\} \right] - \textrm{KL} \big(q_i(\cdot) \parallel G \big)  \bigg\}, 
 \end{equation}
 where $q_i(\cdot)$ denotes some arbitrary probability distribution and $\textrm{KL}(P \parallel Q)$ denotes the Kullback-Leibler (KL) divergence between two probability measures $P$ and $Q$. The optimal value of $q_i(\cdot)$ in (\ref{DV_formula}) is just the posterior distribution $q^*_{i,G}(\cdot)$ corresponding to the prior $G$ and the working likelihood $
\varphi \big(\frac{Z_i - \theta_i}{\sigma_i} \big)/\sigma_i.
 $ Based on the variational formula, we can rewrite the $g$-modeling optimization problem as
 \begin{align}\label{EM_interpretation}
	  & \max_{G \in \mathcal{G}} \frac{1}{n}\sum_i \ln f_{G,\sigma_i}(Z_i) =\max_{G \in \mathcal{G}} \max_{\{q_i(\cdot)\}_i } \frac{1}{n} \sum_i\bigg\{ \E_{q_i (\cdot)}\left[\ln \bigg\{ \frac{1}{\sigma_i} \varphi \big(\frac{Z_i - \theta_i}{\sigma_i} \big) \bigg\} \right] - \textrm{KL} \big(q_i(\cdot) \parallel G \big)  \bigg\}.
\end{align} 
The two max-operations have an EM interpretation. Conditional on the choice of the `posteriors' $\{q_i(\cdot)\}_i$,  maximization over the prior $G \in \mathcal{G}$ is equivalent to the M-step in an EM algorithm. The maximization over $\{q_i(\cdot)\}_i$ given $G$, which involving computing the posterior corresponding to the prior $G$, is equivalent to the E-step. 

At the $g$-modeling estimate \(\hat{G}\), the E-step and M-step updates constitute a fixed point. At this fixed point, the E-step ensures that the distributions \(\{q_i(\cdot)\}_i\) are the posterior distributions \(\{q^*_{i,\hat{G}}(\cdot)\}_i\) corresponding to \(\hat{G}\). Consequently, the M-step update at this fixed point implies that 
\begin{align}
\hat{G} &= \argmin_{G \in \mathcal{G}}  \frac{1}{n} \sum_{i=1}^{n}\textrm{KL}\left(q^*_{i,\hat{G}}(\cdot)\mid\mid G \right) \nonumber \\
  & =\argmin_{G \in \mathcal{G}}-\int\left(\frac{1}{n}\sum_{i=1}^{n}q^*_{i,\hat{G}}(\theta)\right)\ln g(\theta)d\nu(\theta) \nonumber \\
 & =\argmin_{G \in \mathcal{G}}\textrm{KL}\left(\bar{q}_{\hat{G}}(\cdot)\mid\mid G\right)
 \label{M_step_problem}
\end{align}
where $g(\cdot)$ denotes the density of $G\in \G$ with respect to some dominating measure $\nu$, and 
\[
\bar{q}_G(\cdot):=\frac{1}{n}\sum_{i=1}^{n}q^*_{i,G}(\cdot)
\]
denotes the average posterior given some prior $G$. We now explore the properties of the solution in two distinct settings: (1) when $\G$ is an exponential family, and (2) when $\G$ is unrestricted.

\subsubsection{Exponential family of priors}
For an exponential family of priors \( \mathcal{G} \), the solution to (\ref{M_step_problem}) is identified by matching the moments of the sufficient statistics \( u(\theta) \) between the prior and the average posterior distribution. This property, known as \emph{moment matching}, is a fundamental characteristic of exponential families \citep{bishop2006pattern}. Specifically, this condition implies that  
\[
\mathbb{E}_{\hat{G}}[u(\theta)] = \mathbb{E}_{\bar{q}_{\hat{G}}}[u(\theta)],
\]  
where \( \mathbb{E}_{\hat{G}} \) denotes the expectation with respect to the prior \( \hat{G} \), and \( \mathbb{E}_{\bar{q}_{\hat{G}}} \) denotes the expectation with respect to the average posterior \( \bar{q}_{\hat{G}} \).  

For the Gaussian prior family \( \mathcal{G} = \{\mathcal{N}(0, \gamma^{-1}) : \gamma > 0 \} \), with a zero mean, the sufficient statistic is \( u(\theta) = \theta^2 \). The posterior distribution of \( \theta_i \) given \( \gamma \) and data is  
\[
\theta_i \mid \mathcal{D} \sim \mathcal{N}\left(\frac{Z_i}{1 + \gamma \sigma_i^2}, \frac{\sigma_i^2}{1 + \gamma \sigma_i^2}\right).  
\]  
Moment matching for \( \theta^2 \) between the prior and the average posterior implies that the Empirical Bayes estimate \( \hat{\gamma} \) of the prior precision \( \gamma \) satisfies the equation  
\begin{align}
\protect \label{eq:sample_moment_Gaussian_prior}
\frac{1}{n} \sum_{i=1}^n m(Z_i, \sigma_i; \hat{\gamma}) &=0, \quad \textrm{where}  \\
m(Z_i, \sigma_i; \gamma) &:= \left( \frac{Z_i}{1 + \gamma \sigma_i^2} \right)^2 + \frac{\sigma_i^2}{1 + \gamma \sigma_i^2} - \frac{1}{\gamma} \nonumber
\end{align}  

The above expression is the sample counterpart of the law of iterated expectations involving $\theta_i^2$, which states that the true prior \( G_0 := \mathcal{N}(0, \gamma_0^{-1}) \) satisfies 
$\mathbb{E}_{G_0}\left[\mathbb{E}_{G_0}[\theta_i^2 \mid \mathcal{D}]\right] = \mathbb{E}_{G_0}[\theta_i^2]$. 
 This condition leads to the following necessary population moment condition for the true prior precision \( \gamma_0 \):  
\begin{align}
\protect \label{eq:population_moment_Gaussian_prior}
\mathbb{E}_{G_0^{(n)}}\left[\frac{1}{n} \sum_{i=1}^n m(Z_i, \sigma_i; \gamma_0)\right] = 0,
\end{align}  
where $\mathbb{E}_{G_0^{(n)}}[\cdot]$ denotes the marginal over $\D$ given the product prior $G_0^{(n)}$. Clearly, (\ref{eq:sample_moment_Gaussian_prior}) serves as the method of moments counterpart to the population moment condition (\ref{eq:population_moment_Gaussian_prior}). The population moment is only a necessary score condition. Showing that it vanishes at $\gamma_0$ does not prove uniqueness of its root. Identification instead follows from Lemma \ref{lem:prior-identification} in Appendix \ref{sec:Appendix:A}, which shows that the joint distribution of $(Z_i, \sigma_i)$ identifies the entire mixing distribution and hence identifies $\gamma_0$ in the Gaussian subfamily.

\subsubsection{Unconstrained family of priors}

When the class \(\mathcal{G}\) is unrestricted, as in the NPMLE procedure, (\ref{M_step_problem}) implies that \(\hat{G}\) satisfies a self-consistency property: it equals the average posterior distribution corresponding to itself. Formally, \(\hat{G} = \bar{q}_{\hat{G}}\). 

This expression represents the sample counterpart of the `martingale' property of Bayesian updating, which asserts that the expected posterior must equal the prior. This in turn follows from the law of iterated expectations, which states that for any measurable function \(h(\cdot)\),  $\mathbb{E}_{G_0}\left[\mathbb{E}_{G_0}[h(\theta_i) \mid \mathcal{D}]\right] = \mathbb{E}_{G_0}[h(\theta_i)].$  
Of course this is again a necessary fixed-point characterization, not a standalone identification result. For example, every point mass is unchanged by Bayesian updating when it is itself used as the candidate prior, irrespective of the true data law. Identification is established in Lemma \ref{lem:prior-identification} in Appendix \ref{sec:Appendix:A}.

\subsubsection{Taking stock}

The above discussion explains why \(g\)-modeling is still valid even when we substitute the true marginal likelihood \( p_G(\mathcal{D}) \) with the working marginal likelihood \( \prod_i f_{G,\sigma_i}(Z_i) \). In any adaptive experiment, Bayesian updating ensures that the sequence of posterior distributions forms a martingale (in the space of probability measures). Consequently, the prior equals the expected posterior, even under adaptive stopping; this is a basic informational constraint--Bayes consistency--that every experiment must obey. Viewed this way, \( g \)-modeling amounts to applying a sample-based version of Bayesian consistency, which delivers identifying restrictions for the true prior via the law of iterated expectations. Because the posterior is invariant to the choice of algorithm by the likelihood principle, the moment-matching interpretation implies that \( g \)-modeling remains valid irrespective of the specific algorithm employed.

For parametric families \( \mathcal{G} \), the regret consistency of \( \hat{G} \) is an immediate consequence of standard generalized method-of-moments (GMM) reasoning. Demonstrating regret consistency for the nonparametric maximum likelihood estimator (NPMLE) is more involved, since it corresponds to satisfying an uncountable collection of moment conditions (see Section \ref{sec:Theoretical results} for the rigorous analysis). However, the basic intuition is essentially the same. 

\subsection{Tweedie's formula and the failure of naive $f$-modeling}

The celebrated Tweedie's formula establishes a connection between the posterior expectation of $\theta_i$ and the derivative of the marginal working likelihood. Specifically, recalling the definition of $f_{G,\sigma_i}(Z_i)$ from (\ref{eq:definition_of_f_G}), some straightforward algebra gives
\begin{align} \label{Tweedie's formula}
	\sigma_i^2  \nabla_z\ln f_{G,\sigma _i}(z) \big\rvert_{z=Z_i} &= \frac{\int (\theta_i - Z_i) \varphi \left(\frac{Z_i - \theta_i}{\sigma_i} \right)dG(\theta_i)}{\int \varphi \left(\frac{Z_i - \theta_i}{\sigma_i} \right)dG(\theta_i)} =\E_{G}[\theta_i|\D] - Z_i. 
\end{align}

Tweedie's formula lies at the core of the Empirical Bayes \( f \)-modeling approach, which seeks to estimate the `true' posterior mean \( \mathbb{E}_{G_0}[\theta_i \mid \mathcal{D}] \) using a non-parametric approximation to \( \nabla_z \ln f_{G_0,\sigma_i}(z) \). In the classical exogenous sampling framework, \(f_{G_0,\sigma_i}(z) \) equals the marginal density, $p(Z_i)$, of \( Z_i \), which serves as a sufficient statistic for the data. Consequently, \( \nabla_z \ln f_{G_0,\sigma_i}(z) \) can be non-parametrically estimated  as the derivative of the log-marginal density $\ln p(Z_i)$ of  \( Z_i \). 

This approach, however, breaks down in the context of adaptive experimentation. Under adaptive sampling, $ f_{G_0,\sigma_i}(z)$ no longer equals the true marginal density of $Z_i$ because the true conditional distribution of $Z_i$ given $\theta_i$ is not really $\mathcal{N}(\theta_i, \sigma_i^2)$. In fact, as shown in \cite{adusumilli2021risk}, $Z_i$ even ceases to be a sufficient statistic for the data in adaptive settings. At a minimum, both $Z_i$ and $N_i$ are needed for Bayesian updating. As a result, the derivative of $\ln p(Z_i)$ need not estimate \( \nabla_z \ln f_{G_0,\sigma_i}(Z_i) \), rendering the \( f \)-modeling approach invalid under adaptive sampling.

\subsection{Heteroskedastic models and CLOSE} \protect \label{subsec:CLOSE}
Recently, \cite{chen2022empirical} proposed the CLOSE framework for analyzing heteroskedastic models of the form (\ref{eq:working_likelihood}). This approach assumes a location-scale dependence between \(\theta_i\) and \(\sigma_i\):  
\[
\frac{\theta_i - \E[\theta_i \vert \sigma_i]}{\textrm{sd}[\theta_i \vert \sigma_i]} \Bigg\vert  \sigma_i \sim G, \quad \text{independently of } \sigma_i.
\]
This effectively constrains \(p(\theta_i | \sigma_i)\), the conditional density of \(\theta_i\) given \(\sigma_i\).  

In our setting, however, \(\sigma_i\) depends on the data through \(\tau_i\), implying that \(p(\theta_i | \sigma_i)\) represents a partial posterior, where only a subset of the data (\(\tau_i\)) has been used for Bayesian updating. In adaptive experiments, \(\tau_i\) is determined in a complex manner by \(\theta_i\), so there is no reason to expect this assumption to hold.  

More critically, the CLOSE framework first estimates \(p(\theta_i | \sigma_i)\) and then treats it as a prior for an additional Bayesian update over the data. However, since computing \(p(\theta_i | \sigma_i)\) already involves updating over part of the data, this results in Bayesian updating being applied twice. Repeated updating on the same data will ultimately recover the marginal distribution of \(Z_i\). We would therefore expect CLOSE to lead to under-shrunk in the context of compound adaptive experiments, as compared to standard \(g\)-modeling.\footnote{This concern does not apply to the typical use-case of CLOSE, where $\tau_i$ is determined \textit{prior} to the experiment and may be correlated with $\theta_i$. In that scenario $\sigma_i$ is independent of the data given $\theta_i$.} 

\subsection{Efficiency of $g$-modeling}

Because $g$-modeling is a maximum-likelihood method over the space of priors, we expect it to be efficient when nothing is known about the adaptive algorithms in use. In fact, information about these algorithms would enhance the estimation of $G_0$ only if the algorithms were explicit, known functions of the true $G_0$. In realistic settings, however, characterizing how the algorithms depend on $G_0$ would be cumbersome, if not infeasible. Thus, any loss of efficiency arising from ignorance of the algorithms is likely to be negligible.

\section{Theoretical Results\protect\label{sec:Theoretical results}}

In this section, we examine the theoretical properties of $g$-modeling under two distinct scenarios: (1) when the class of priors is restricted to be Gaussian, and (2) when the class of priors is unrestricted, as in the NPMLE approach. For both scenarios, our primary focus is to evaluate the performance of $g$-modeling methods in terms of the average Bayes risk criterion, comparing their efficacy against an oracle benchmark that has full knowledge of the true prior distribution, $G_0$. 

We start with a formal definition of compound Bayes risk and regret.

\subsection{Compound Bayes risk and regret}

Recall that our setup consists of $n$ experiments with individual treatment effects $\bm{\theta} := (\theta_1, \dots ,\theta_n)$ that we aim to estimate. We view the parameters $\theta_i$ as i.i.d draws from an unknown prior $G_0$.

Let $\delta_i := \delta_{\theta}(\D_i)$ denote an estimator of $\theta_i$, and $\bm{\delta} := (\delta_1, \dots, \delta_n)$ the collection of estimators for each experiment. We define compound frequentist risk as 
$$
R(\bm{\delta}, \bm{\theta}) = \E \left[ \frac{1}{n} \sum_{i=1}^n  \vert \delta_i - \theta_i \vert^2 \ \right\vert \bm{\theta} \bigg].
$$
The compound Bayes risk criterion integrates $R(\bm{\delta}, \bm{\theta})$ over the joint prior $\bm{\theta} \sim G_0^{(n)}$:
$$
R(\bm{\delta}, G_0) = \E_{G_0^{(n)}} \left[ \frac{1}{n} \sum_{i=1}^n  \vert \delta_i - \theta_i \vert^2 \ \right].
$$ 
 Following \cite{jiang2020general},  we evaluate estimators $\bm{\delta}$ using the compound Bayes risk criterion. Consider an oracle who knows the true prior $G_0$. The oracle estimator of $\theta_i$ is clearly $\delta_i^* := \E_{G_0}[\theta_i \vert \D_i]$ and write $\bm \delta^* :=(\delta_1^*, \dots, \delta_n^*)$. The difference in compound Bayes risk between the oracle estimator and the proposed estimator $\bm{\delta}$ is known as regret:
$$
\mathcal{R}(\bm{\delta},G_{0})=R(\bm{\delta},G_{0})-R(\bm{\delta}^{*},G_{0}).
$$
From the form of $\delta_{i}^{*}$, some simple algebra indicates that 
\begin{equation}
\protect \label{eq:regret formula}
\mathcal{R}(\bm{\delta},G_{0}) = \E_{G_0^{(n)}}\left[ \frac{1}{n} \sum_{i=1}^n \left\vert \delta_{i}-\delta_{i}^{*}\right\vert ^{2}\right].
\end{equation}
We call an estimator $\bm{\delta}$ regret consistent if the regret goes to 0 asymptotically as $n\to \infty$. Note that the MLE estimator of $\theta_i$ is given by $\delta_i^\textrm{mle} = Z_i$. 

\subsection{Regret consistency of $g$-modeling with Gaussian priors\protect\label{sec:Theoretical results: Gaussian}}

We now establish results on regret consistency for \( g \)-modeling with a Gaussian prior family. Specifically, consider the Gaussian prior family defined as \(\mathcal{G} = \{\mathcal{N}(0, \gamma^{-1}): \gamma > 0 \}\), and assume that the true prior, \( G_0 \), belongs to this family, with \( G_0 = \mathcal{N}(0, \gamma_0^{-1}) \). The empirical Bayes (EB) estimate, \(\hat{\gamma}\), is then obtained by solving the corresponding sample moment equation provided in \eqref{eq:sample_moment_Gaussian_prior}. The \( g \)-modeling estimator for \(\theta_i\) is the posterior mean under the estimated prior:
\[
\hat{\delta}_i^{\text{EB}} = \frac{Z_i}{1 + \hat{\gamma} \sigma_i^2}.
\]

\subsubsection{Leave-one-out estimation}

We are able to derive a remarkably simple proof of regret consistency when the experiments are independent of each other, i.e., $(Z_i, \tau_i)$ are independent across $i$, and \(\gamma_0\) is estimated using a leave-one-out (LOO) methodology. Let \(\hat{\gamma}^{(-i)}\) denote the leave-one-out estimate of \(\gamma\), computed by excluding the \(i\)-th observation from the sample. The corresponding leave-one-out EB estimator is then denoted:
\[
\tilde{\delta}_i^{\text{EB}} = \frac{Z_i}{1 + \hat{\gamma}^{(-i)} \sigma_i^2}.
\]

To assess the performance of this estimator, we consider its regret ratio relative to the MLE estimator. Assume that experiments are indepdendent of each other,
\begin{align}
\label{eq:regret_ratio}
\frac{\mathcal{R}(\tilde{\bm{\delta}}^{\textrm{EB}},G_{0})}{\mathcal{R}(\bm{\delta}^{\textrm{mle}},G_{0})} & = \nicefrac{\sum_i \E_{G_{0}}\left[ \left\vert \frac{Z_{i}}{1 +\hat{\gamma}^{(-i)}\sigma _i^2 }-\frac{Z_{i}}{1+\gamma_{0}\sigma_i^2 }\right\vert ^{2}\right]}
{\sum_i \E_{G_{0}}\left[ \left\vert Z_i -\frac{Z_{i}}{1+\gamma_{0}\sigma_i^2 }\right\vert ^{2}\right]}
\nonumber \\
 & =\nicefrac{\sum_i \E_{G_{0}}\left[\frac{Z_{i}^{2}\left(\hat{\gamma}^{(-i)}-\gamma_{0}\right)^{2} \sigma_i^4 }{\left(1 +\hat{\gamma}^{(-i)}\sigma _i^2 \right)^{2}\left(1+\gamma_{0}\sigma_i^2 \right)^{2}}\right]}
 {\sum_i \E_{G_{0}}\left[\frac{Z_{i}^{2}\gamma_{0}^{2} \sigma_i^4}{\left(1+\gamma_{0} \sigma_i^2 \right)^{2}}\right]} \nonumber \\
 & \le \nicefrac{\sum_i \E_{G_{0}}\left[\frac{Z_{i}^{2}\left(\hat{\gamma}^{(-i)}-\gamma_{0}\right)^{2} \sigma_i^4 }{\left(1+\gamma_{0}\sigma_i^2 \right)^{2}}\right]}
 {\sum_i \E_{G_{0}}\left[\frac{Z_{i}^{2}\gamma_{0}^{2} \sigma_i^4}{\left(1+\gamma_{0} \sigma_i^2 \right)^{2}}\right]} 
 \le \sup_{1\le i \le n} \E_{G_{0}}\left[\left(\frac{\hat{\gamma}^{(-i)}-\gamma_{0}}{\gamma_{0}}\right)^{2}\right],
 \end{align}
where the first equality is due to (\ref{eq:regret formula}), the inequality is due to $\hat{\gamma}^{(-i)} > 0$, and the last inequality follows from the fact $\hat{\gamma}^{(-i)}$ is independent of $Z_{i},\sigma_{i}$ under the assumption that the experiments are independent of each other.

Remarkably, (\ref{eq:regret_ratio}) is both algorithm-independent and non-asymptotic. Since $\hat{\gamma}$ solves a sample moment condition based on $n$ observations, standard regularity conditions ensure that for all $i$:
\begin{equation}
\label{eq:gamma_bound}
\underset{1 \leq i \leq n}{\sup} \mathbb{E}_{G_0}\left[\left(\frac{\hat{\gamma}^{(-i)} - \gamma_0}{\gamma_0}\right)^2\right] = O(n^{-1}).
\end{equation}
Substituting this result into (\ref{eq:regret_ratio}), we find:
\begin{equation}
\label{eq:regret_bound_gaussian}
\mathcal{R}(\tilde{\bm{\delta}}^{\textrm{EB}}, G_0) = O(n^{-1}) \cdot \mathcal{R}(\bm{\delta}^{\textrm{mle}}, G_0),
\end{equation}
implying that the regret of the leave-one-out EB estimator is a vanishingly small fraction of the MLE regret as $n \to \infty$. 

In fact, in many practical scenarios, the regret of the MLE estimator, 
$$
\mathcal{R}(\bm{\delta}^{\textrm{mle}}, G_0) = \frac{1}{n} \sum_i \E_{G_{0}}\left[\frac{Z_{i}^{2}\gamma_{0}^{2} \sigma_i^4}{\left(1+\gamma_{0} \sigma_i^2 \right)^{2}}\right] < \frac{1}{n} \sum_i \E_{G_{0}}[Z_i^2] 
$$
is finite whenever $n^{-1} \sum_i \mathbb{E}_{G_0}[Z_i^2] = O(1)$. Under this additional moment condition, (\ref{eq:regret_bound_gaussian}) further establishes that the regret of $\tilde{\delta}_i^{\textrm{EB}}$ decays at an $O(n^{-1})$ rate. 

\subsubsection{General results}

The assumption that experiments are independent may be violated in practice, and the computational cost of leave-one-out estimation can also be substantial. Still, even when these conditions fail, we can derive the following bound on the regret of the empirical Bayes (EB) estimator:
\[
\begin{aligned}
\mathcal{R}(\hat{\bm{\delta}}^{\textrm{EB}},G_{0}) 
& =  \frac{1}{n} \sum_{i=1}^n \E_{G_{0}}\!\left[ \left| \frac{Z_{i}}{1 + \hat{\gamma}\sigma_i^2 } - \frac{Z_{i}}{1 + \gamma_0\sigma_i^2 } \right|^2 \right] \le  \E_{G_{0}}\!\left[ \left( \frac{1}{n} \sum_{i=1}^n \frac{Z_i^2}{\sigma_i^4} \right) \left( \frac{1}{\hat{\gamma}} - \frac{1}{\gamma_0} \right)^2 \right].
\end{aligned}
\]
As a result, we can still show that $\mathcal{R}(\hat{\bm{\delta}}^{\textrm{EB}},G_{0}) = O(n^{-1})$, 
albeit under stronger regularity conditions (we skip the formal statement for brevity).

\subsection{Regret consistency of NPMLE}\label{subsec:regret_consistency_NPMLE}

To study the regret consistency of NPMLE, we focus on the setting where the experiments are independent, so that \((Z_i,\tau_i)\) are independent across \(i\). Our treatment of the NPMLE follows \cite{jiang2009general} and \cite{jiang2020general}. Define  
\[
\ell_n(G) = \frac{1}{n} \sum_i \ln f_{G,\sigma_i}(Z_i),
\]
with $\sigma_i^2 := \omega_i^2/\tau_i$ and let the NPMLE of \(G_0\) be given by  
\begin{equation} \label{NPMLE}
\hat{G}_n \in \underset{G \in \mathcal{G}(\mathcal{A}_n)}{\argmax}\; \ell_n(G),
\end{equation}
where \(\mathcal{G}(\mathcal{A}_n)\) is the collection of probability measures supported on a finite, possibly data dependent set \(\mathcal{A}_n \subseteq \mathbb{R}\). In applications, the finiteness of \(\mathcal{A}_n\) facilitates the use of convex optimization \citep{koenker2014convex} to compute the NPMLE.

Let \(\bm{\delta}^{NPEB}\) denote the plug-in estimator of the optimal Bayes rule obtained by substituting the NPMLE \(\hat{G}_n\) for the true prior \(G_0\). Concretely,  
\[
\delta_i^{NPEB}
= \frac{\int \theta \,\frac{1}{\sigma_i}\, \varphi\!\left(\frac{Z_i - \theta}{\sigma_i}\right) d\hat{G}_n}
       {\int \frac{1}{\sigma_i}\, \varphi\!\left(\frac{Z_i - \theta}{\sigma_i}\right) d\hat{G}_n}.
\]

The central theoretical claim of this section is that \(\bm{\delta}^{NPEB}\) achieves regret consistency. To establish this, we impose the assumptions stated below. Let \(\bm{Z} := (Z_1,\dots,Z_n)\) and \(\bm{\tau} := (\tau_1,\dots,\tau_n)\). We write \(\lesssim\) to indicate an inequality that holds up to a multiplicative constant independent of \(n\).

\begin{assumption}\label{boundedG0}
	$\theta_1, \dots, \theta_n \sim _{iid} G_0$, where $G_0$ has bounded support on $[-R,R]$. Furthermore, $(Z_i,\tau_i)$ are independent across $i$.
	\end{assumption}

\begin{assumption}\label{boundVar}
	There exist $\omega_\ell >0$ and $\omega_u < \infty$ independent of $n$ such that $0 < \omega_\ell^2 \le \omega_i^2 \le  \omega_u^2< \infty$ for all $i$. Additionally,
	\[
	\mathbb P\left(
	\underline\tau\leq\tau_i\leq\bar\tau
	\text{ for every }i=1,\ldots,n
	\right)=1.
	\]
	Consequently, with $\sigma_\ell^2:=\frac{\omega_\ell^2}{\bar\tau}$ and $\sigma_u^2:=\frac{\omega_u^2}{\underline\tau}$, the random variances $\sigma_i^2:=\frac{\omega_i^2}{\tau_i}$	satisfy
	\[
	\mathbb P\left(
	\sigma_\ell^2\leq\sigma_i^2\leq\sigma_u^2
	\text{ for every }i=1,\ldots,n
	\right)=1.
	\]
\end{assumption}

\begin{assumption}\label{likfactor}
	The likelihood of the data collected across $n$ experiments takes the form 
	\[
	p(\D | \theta_1, \dots, \theta_n) =  c(\D) \cdot \prod_{i=1}^n \frac{1}{\sigma_i} \varphi\left(\frac{Z_i - \theta_i}{\sigma_i}\right), 
	\]
for some $c(\D)$ independent of $\theta_1, \dots, \theta_n$.  Furthermore, there exist $\bar{c} < \infty$ such that 
$$
\sup_{z_i, \tau_i \in [\underline{\tau}, \bar \tau]} \left\{ \sqrt{2\pi} \sigma_i  e^{z_i^2/2\sigma_i^2} \cdot p(z_i,\tau_i | \theta_i = 0) \right\} \le \bar{c}.
$$
Here $p(z, \tau | \theta)$ denotes the density with respect to $m(z)\otimes\nu(\tau)$, where $m(\cdot)$ denotes the Lebesgue measure, and $\nu(\tau)$ is some probability measure over the support of $\tau$.
\end{assumption}

\begin{assumption}\label{tailbound}
	Let $\bar Z_n = \underset{i}{\max} |Z_i| \vee 1$, then
		$\E[\bar Z_n^4] \lesssim (\ln n)^2$ and $\P(\bar Z_n \geq M_n) \lesssim \frac{1}{n^2}$

with	 $M_n = \sqrt{\kappa \ln n}$ for some $\kappa>0$. 
\end{assumption}

\begin{assumption}\label{likbound}
	Let $q_n = \Big(\frac{e \sqrt{2\pi}}{n^2}\Big) \land 1$ and $\kappa_n = \frac{1}{n} \ln \frac{1}{q_n}$. For all sufficiently large $n$, the NPMLE $\hat G_n$ satisfies, almost surely,
\[
\underset{G \in \mathcal{G}(\mathbb{R})}{\sup} \ell_n(G)  - \ell_n(\hat G_n) \leq \kappa_n.
\]
\end{assumption}



\begin{assumption}\label{support-geometry}
		Let $r_i(z,\tau) = p(z,\tau | \theta_i = 0)$. For every $i$ and for $\nu$-almost every $\tau$, there is an open set
	$\mathcal S_{i,\tau}\subseteq\mathbb R$ such that
	\[
	r_i(z,\tau)>0 \quad\text{for }z\in\mathcal S_{i,\tau},
	\qquad
	r_i(z,\tau)=0
	\quad\text{for Lebesgue-a.e. }z\notin\mathcal S_{i,\tau}.
	\]
The set $\{(z,\tau): z \in \mathcal{S}_{i,\tau}\}$ is measurable. There exists $\ell_*>0$, independent of $n,i,\tau$, such that every
bounded connected component $I$ of $\mathcal S_{i,\tau}$ satisfies
\[
|I|\ge\ell_*.
\]
\end{assumption}

\begin{assumption}\label{density-2}
	Let $r_i(z,\tau) = p(z,\tau | \theta_i = 0)$. There is a fixed integer $m\ge 2$ and constants
	$C_1,\ldots,C_m<\infty$, independent of $n$ and $i$, such that, for
	$\nu$-almost every $\tau$, the function $z\mapsto \ln r_i(z,\tau)$ is
	$m$-times continuously differentiable on each connected component $I$ of $\mathcal{S}_{i,\tau}$, and 
	\begin{equation}\label{eq:smoothness_assumption}
		\left|\partial_z^l\ln r_i(z,\tau)\right|
		\le C_l(1+|z|)^l, \qquad z \in I, 
		\qquad l=1,\ldots,m.
	\end{equation}
	
\end{assumption}

Assumption \ref{boundedG0} stipulates that $G_0$ has compact support.\footnote{This can be relaxed to a tail assumption of $G_0$ at the cost of more burdensome notations for the theoretical results to follow.} Assumption \ref{boundVar} requires the stopping time $\tau_i$ in each experiment to be bounded and bounded away from zero almost surely, uniformly over \(i\), with deterministic bounds independent of \(n\). This condition is met by many algorithms. Moreover, each $\omega_i$ is finite and bounded away from zero, a standard requirement in the empirical Bayes literature; see, e.g., \cite{jiang2020general}, \cite{soloff2024multivariate}, and \cite{chen2022empirical}.

The first part of Assumption \ref{likfactor} reiterates the likelihood principle (\ref{likelihoodPrinciple:generalization}), which we have already established in Proposition \ref{Prop:likelihood principle}. The second part of Assumption \ref{likfactor} is more substantive. It requires that the conditional density of $(Z_i, \tau_i)$ exhibit Gaussian tails when $\theta_i = 0$. Under this condition, Lemma \ref{lem:density_dominance} in Section \ref{sec:Appendix:SB} of the online supplement demonstrates that the marginal density of $(Z_i, \tau_i)$ can be bounded above by a multiplicative constant times the working marginal density which is useful in establishing the result in Theorem \ref{thm:NPMLE-consistency}. 
Lemma \ref{lemma:sufficient_condition_Asm3} in Appendix \ref{appendixC} demonstrates a bound on $\bar c$ when $\tau$ has finite support. 

Assumption \ref{likbound} requires $\hat G_n$ to be an
approximate maximizer of the likelihood over the full class
$\mathcal G(\mathbb R)$. 
It is used twice to establish the theoretical guarantee. It supplies the
true prior likelihood lower bound used to prove Proposition
\ref{HellingerAccuracy} in Section \ref{sec:Appendix:Hellinger} of the online supplement, and it also supplies the approximate generalized
MLE condition for Theorem 5 of \cite{jiang2020general}, both of which channels towards establishing Theorem \ref{thm:NPMLE-consistency}. Lemma \ref{lemma: qualityNPMLE} in the Appendix \ref{appendixC} constructs a data-dependent grid $\mathcal{A}_n$
using which the NPMLE satisfies Assumption \ref{likbound}.

Assumption \ref{support-geometry} places a mild restriction on the connected set over which the conditional density of $(Z_i, \tau_i)$ when $\theta_i=0$, denoted as $r_i(z,\tau)$, is strictly positive. If $r_i(z, \tau)$ is strictly positive in $z$ for $\nu$-almost every $\tau$, then we can take $\mathcal{S}_{i,\tau} = \mathbb{R}$. In this case there are no bounded connected components, so Assumption \ref{support-geometry} always holds. If instead $r_i(z,\tau)$ vanishes on part of the $z$-spce, the assumption prevents any of its bounded connected  components from being arbitrarily short as $n, i,$ or $\tau$ varies. The lower-length conditon is needed to keep a lower-order term in the Gagliardo-Nirenberg interpolation inequality uniformly controlled when applied to each connected sub-component; for details, see the proof of Lemma \ref{GNinequality}.

Assumption \ref{density-2} concerns the smoothness properties of $r_i(z,\tau)$. In particular, it requires polynomial control of its first
$m$ log-derivatives wherever that density is positive. The integer $m$ then governs the rate at which the Bayes regret converges to 0. 

Finally to show all these assumption can hold for adaptive experiments, as a demonstrating example, we verify all the Assumptions for group sequential trials with two stages in Appendix \ref{sec:two-stage-verification}.

Under the above assumptions, we show that the plug-in Bayes estimator \( \bm{\delta}^{NPEB} \) is regret-consistent. 
	
\begin{thm} \label{thm:NPMLE-consistency}
		Under Assumptions \ref{boundedG0}-\ref{density-2}, for some $m \geq 2$, 
		\[
		\mathcal{R}(\bm \delta^{NPEB}, G_0) \lesssim (\ln n) \cdot\Big( \frac{(\ln n)^2}{n} \Big)^{(1-\frac{1}{m})}.
		\]
\end{thm}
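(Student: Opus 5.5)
The plan follows the two-step architecture of \cite{jiang2009general} and \cite{jiang2020general}. First I show that $\hat G_n$ is close to $G_0$ in average Hellinger distance between the working marginals $f_{\hat G_n,\sigma_i}$ and $f_{G_0,\sigma_i}$. Then I convert Hellinger closeness into closeness of the Tweedie-type posterior means.

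\textbf{Step 1: Hellinger accuracy.} The difficulty is that the $Z_i$ are not draws from $f_{G_0,\sigma_i}$. The true density of $(Z_i,\tau_i)$ under $G_0$ is
\[
p_{G_0}(z,\tau)=r_i(z,\tau)\sqrt{2\pi}\sigma_i e^{z^2/2\sigma_i^2}\, f_{G_0,\sigma_i}(z),
\]
using the factorization in Assumption \ref{likfactor}. The second part of that assumption then gives the density dominance bound $p_{G_0}\le \bar c\, f_{G_0,\sigma_i}$ (Lemma \ref{lem:density_dominance}). This is what allows the large-deviation argument of Jiang and Zhang to go through.

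The argument runs as follows. Cover the class of working marginals on $[-M_n,M_n]$ by a sup-norm net whose log-cardinality is of order $(\ln n)^2$, using the standard moment-matching and discrete-approximation bound for Gaussian mixtures. For each net element $G$ far from $G_0$ in average Hellinger distance, bound the expected likelihood ratio $\E\prod_i \{f_{G,\sigma_i}(Z_i)/f_{G_0,\sigma_i}(Z_i)\}^{1/2}$. Here the true expectation is dominated, up to the factor $\bar c$, by integration against $f_{G_0,\sigma_i}$, where it is at most $1-h^2$. Independence across $i$ (Assumption \ref{boundedG0}) makes this a product. Finally, use Assumption \ref{likbound} to guarantee that $\ell_n(\hat G_n)\ge \ell_n(G_0)-\kappa_n$.

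Together with the tail control in Assumption \ref{tailbound}, this gives Proposition \ref{HellingerAccuracy}: the average Hellinger distance is of order $\epsilon_n\asymp \ln n/\sqrt n$ with high probability. I expect the constant $\bar c$ to enter only multiplicatively, so it is absorbed into the net-size term.

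\textbf{Step 2: From Hellinger to regret.} Write $\delta_i=Z_i+\sigma_i^2 f'_{G,\sigma_i}(Z_i)/f_{G,\sigma_i}(Z_i)$. Following Theorem 5 of \cite{jiang2020general}, regularize the denominator at level $\rho_n\asymp q_n$. The regret then splits into three parts:
\begin{enumerate}
\item a regularization error,
\item the error $\|\sqrt{f_{\hat G}}-\sqrt{f_{G_0}}\|$ propagated through derivatives,
\item a tail term on $\{\bar Z_n>M_n\}$, which is controlled by Assumption \ref{tailbound}.
\end{enumerate}
In the exogenous case the derivative of the mixture is controlled by a Gaussian-specific inequality. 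Here, however, expectations are taken under $p_{G_0}$, which carries the extra factor $r_i$, and integration by parts against $p_{G_0}$ produces derivatives of $\ln r_i$.

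\textbf{Main obstacle.} This is the interpolation step. I would bound $\int |\partial_z(\sqrt{f_{\hat G}}-\sqrt{f_{G_0}})|^2\, r_i\,e^{z^2/2\sigma^2}$ by a Gagliardo--Nirenberg inequality (Lemma \ref{GNinequality}) applied on each connected component of $\mathcal S_{i,\tau}$. This interpolates between the $L^2$ Hellinger bound of order $\epsilon_n^2$ and an $m$-th derivative bound. The $m$-th derivative bound comes from Assumption \ref{density-2}, where the log-derivatives grow like $(1+|z|)^l\lesssim M_n^l$, together with the polynomial growth of derivatives of Gaussian mixtures. The component length bound $\ell_*$ in Assumption \ref{support-geometry} keeps the lower-order GN term uniform.

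The interpolation yields a factor $(\epsilon_n^2)^{1-1/m}=((\ln n)^2/n)^{1-1/m}$. The remaining logarithmic factors coming from $M_n^2\asymp\ln n$ and the regularization produce the leading $\ln n$. Integrating over $\tau$ with respect to $\nu$, averaging over $i$, and adding the $O(n^{-1}(\ln n)^2)$ tail contributions gives the stated bound.
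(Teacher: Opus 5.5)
Your architecture is the paper's: a Jiang--Zhang Hellinger rate using density dominance, then Tweedie with regularization and truncation at $M_n$, then componentwise Gagliardo--Nirenberg interpolation under Assumptions \ref{support-geometry}--\ref{density-2}. However, two steps do not work as written.

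\textbf{Step 1 breaks at the product over $i$.} Suppose you bound $\E_{p_{G_0,i}}\big[\{f_{G,\sigma_i}(Z_i)/f_{G_0,\sigma_i}(Z_i)\}^{1/2}\big]$ by $\bar c$ times the same integral against $f_{G_0,\sigma_i}$. Independence then gives $\bar c^{\,n}\prod_i(1-h_i^2)\le\exp\{n\ln\bar c-n\bar h^2\}$, with $h_i$ the working-marginal Hellinger distance.

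The trouble is that $\bar c\ge1$. Both $p_{G_0,i}=c_if_{G_0,\sigma_i}$ and $f_{G_0,\sigma_i}$ integrate to one against $dz\,d\nu(\tau)$, so equality $\bar c=1$ holds only when $p_{G_0,i}=f_{G_0,\sigma_i}$, i.e.\ effectively exogenous sampling. The two-stage design already has $\bar c=2$. So $n\ln\bar c$ cannot be absorbed into a net entropy of order $(\ln n)^2$, and you would only learn $\bar h^2\lesssim\ln\bar c$.

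The working-marginal Hellinger distance is also the wrong target. The likelihood sees $f_{G,\sigma_i}$ only through $c_if_{G,\sigma_i}$, so it cannot control $f_{G,\sigma_i}$ where $c_i$ is small or zero (e.g.\ $|z|\le b_i$, $\tau=1$).

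No domination is needed here. Since $c_i$ cancels, $f_{G,\sigma_i}/f_{G_0,\sigma_i}=p_{G,i}/p_{G_0,i}$, so the expectation equals $1-h^2(p_{G,i},p_{G_0,i})$ exactly. All Hellinger distances should therefore be between the true marginals $p_{G,i}$. The constant $\bar c$ then enters only in two harmless places:
\begin{itemize}
\item additively, when a sup-norm net of Gaussian mixtures (uniform over $\sigma\in[\sigma_\ell,\sigma_u]$, because $\sigma_i$ is random) is transferred to a $\bar c\eta$-net of the $p_{G,i}$; with the envelope $h^*$ this costs $n\sqrt{\bar c\eta M}=O((\ln n)^2)$ for $\eta=n^{-2}$;
\item in the tail factor over $\{|Z_i|>M\}$, where it multiplies an $O(1/n)$ quantity per experiment.
\end{itemize}
Consistently, Gagliardo--Nirenberg must be applied to $g_i=\sqrt{p_{G_0,i}}-\sqrt{p_{G,i}}$, whose squared $L^2$ norm is $2h^2(p_{G_0,i},p_{G,i})$. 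This is why derivatives of $\ln r_i$ (Assumption \ref{density-2}) enter the $m$-th derivative bound.

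\textbf{Step 2 is missing the empirical-process step.} The Fisher--Hellinger interpolation is a deterministic bound for a \emph{fixed} prior integrated over fresh data. The regret, however, evaluates $\widehat G_n$'s score at the same $Z_i$ that produced $\widehat G_n$. So $\widehat G_n$ cannot be substituted, and your ``error propagated through derivatives'' hides exactly this. The paper follows Lemma~10 of \cite{soloff2024multivariate}:
\begin{itemize}
\item It works on $\{\bar Z_n\le M_n\}\cap\{\bar h^2(p_{\widehat G_n},p_{G_0})\le\epsilon_n^2\}$, where $\widehat G_n\in\mathcal H_n$. Support in $[-M_n,M_n]$ comes from the grid of Lemma \ref{lemma: qualityNPMLE} and is needed for the $C(1+B^{2m})$ Sobolev bound in Lemma \ref{GNinequality}.
\item It covers $\mathcal H_n$ by deterministic centers lying in $\mathcal H_n$, in the regularized-score sup metric $d_{M_n,\rho_n}$, with $\ln N\lesssim(\ln n)^2$ (Lemma \ref{lem:uniform-mixture-entropy}(ii)).
\item It controls fluctuations of $\|\tilde V_j\|$ by Lipschitz concentration and a union bound, at cost $(\ln n)^3/n$.
\item It handles the event $\bar h^2>\epsilon_n^2$ via the uniform bound $\sigma_i\sqrt{\ln(1/2\pi\rho_n^2)}$ on the regularized Tweedie correction and $\P(\bar h^2>\epsilon_n^2)\le3/n$.
\item It applies Gagliardo--Nirenberg only to the fixed centers.
\end{itemize}

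Finally, take $\rho_n=q_n/(\sqrt{2\pi}en)=n^{-3}$ rather than $\rho_n\asymp q_n$. Theorem 5 of \cite{jiang2020general} only gives $f_{\widehat G_n,\sigma_i}(Z_i)\ge q_n/(\sqrt{2\pi}en\sigma_i)$. With this choice the regularized and actual NPEB rules coincide at the data, leaving only the regularization error for $G_0$.
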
 

\subsubsection{Proof sketch}
By Proposition \ref{HellingerAccuracy} in Section \ref{sec:Appendix:Hellinger} of the online supplement, which extends the findings of \cite{jiang2009general} and \cite{jiang2020general} to the adaptive framework, we obtain
\begin{equation}\label{eq:Hellinger_rate}
\P\!\left(
\bar h^2(p_{\widehat G_n},p_{G_0})\geq t^2\epsilon_n^2
\right)
\leq3n^{-t^2} \quad \forall\ t\ge 1,
\end{equation}
where $\epsilon_n^2=C_{\epsilon}(\ln n)^2/n$. 

Because the common factor $c_i$ in
$p_{G,i}=c_i f_{G,\sigma_i}$ drops out of the density ratios, Tweedie's
formula, together with the bound $\sigma_i \le \sigma_u$ for all $i$, implies
$$
\mathcal R(\bm\delta^{NPEB},G_0)
=
\E\!\left[
\frac1n\sum_i\sigma_i^4
\left\{
\partial_z\ln \frac{p_{G_0,i}}{p_{\widehat G_n,i}}
(Z_i,\tau_i)
\right\}^2
\right] 
\lesssim 
D_f\left(p_{G_0,i}, p_{\hat G_n,i} \right),
$$
where $D_f(p,q) = E_p[(\nabla \ln(p/q))^2]$ denotes the Fisher divergence between $p$ and $q$. As in \cite{jiang2009general}, \cite{jiang2020general}, and \cite{soloff2024multivariate}, the central task is to relate the Fisher divergence to the Hellinger distance. However, the techniques in the aforementioned papers cannot be applied verbatim here since they crucially rely on properties particular to Gaussian convolutions. In our setting, the marginal density takes the form
\[
p_{G,i}(z,\tau)
=c_i(z,\tau)f_{G,\sigma_i(\tau)}(z).
\]
While the common factor $c_i$ vanishes in the density ratio, it remains present in the Hellinger distance and thereby affects the smoothness properties of $p_{G,i}(z,\tau)$. To handle this, we employ a different and more general approach based on Gagliardo–Nirenberg interpolation inequalities.

Set $u_i=\sqrt{p_{G_0,i}}$ and $v_i=\sqrt{p_{G,i}}$ for some fixed $G$.  Using $\nabla \ln p = 2\, (\nabla \sqrt{p}/\sqrt{p})$ and some straightforward algebra,
\begin{align*}
D_f\left(p_{G_0,i}, p_{G,i} \right) 
 & \lesssim \int\left(\partial_z u_i- \partial_z v_i\right)^{2} 
    + \int\left(\partial_z\ln p_{G_0, i}\right)^{2}\left(u_i-v_i\right)^{2}\\
 & \lesssim \int\left(\partial_z u_i- \partial_z v_i\right)^{2}
    +\left\Vert \partial_z\ln p_{G_0, i}\right\Vert _{\infty}^2 h^{2}(p_{G_0,i},p_{G,i}).
\end{align*}
Assume for simplicity that $\mathcal{S}_{i,\tau}$ in Assumption \ref{support-geometry} coincides with $\mathbb{R}$. Then, the Gagliardo-Nirenberg interpolation inequality states that
\begin{equation}\label{eq:GN-inequality}
\left\Vert \nabla g\right\Vert _{2} \lesssim \left\Vert g\right\Vert _{2}^{1-1/m}\left\Vert \partial^{m}g\right\Vert _{2}^{1/m},
\end{equation}
for any function $g$ and $m \geq 2$. Applying this inequality yields
\begin{equation} \label{eq:Fisher_Hellinger_bound}
\begin{aligned}
D_f\left(p_{G_0,i}, p_{G,i} \right)
& \lesssim \left\Vert \partial^{m}(u_i-v_i)\right\Vert_{2}^{1/m}  [ h^{2}(p_{G_0,i},p_{G,i})]^{1-1/m} \\
& \quad + \left\Vert \partial\ln p_{G_0, i}\right\Vert_{\infty}^2 h^{2}(p_{G_0,i},p_{G,i}).
\end{aligned}    
\end{equation}
Under Assumption \ref{density-2}, we can verify that both $\left\Vert \partial^{m}(u_i-v_i)\right\Vert_{2}$ and $\left\Vert \partial_z\ln p_{G_0, i}\right\Vert _{\infty}$ are bounded up to some $\ln n$ factors.

Suppose for the moment that we were allowed to replace $G$ with $\hat{G}_n$ in (\ref{eq:Fisher_Hellinger_bound}). In that case, combining this substitution with (\ref{eq:Hellinger_rate}) would yield a proof of Theorem \ref{thm:NPMLE-consistency}.  
However, since $\widehat G_n$ is a random quantity estimated from the very same data used to evaluate its score, the deterministic inequality (\ref{eq:Fisher_Hellinger_bound}) cannot be invoked directly. To address this, one must instead rely on covering-number arguments and truncation of $Z_i$ to a compact set, as in \citet[Lemma 10]{soloff2024multivariate}. The detailed steps are provided in the proof of Theorem \ref{thm:NPMLE-consistency} in Section \ref{sec:Appendix:S} of the online suppment.

\subsubsection{Discussion}


The exponent in Theorem \ref{thm:NPMLE-consistency} reflects the smoothness order $m$ appearing in Assumption \ref{density-2}. In the standard
Gaussian normal-means model with fixed exogenous scales, the marginal
density of $Z_i$ is $f_{G,\sigma_i}$. The Gaussian base density satisfies
Assumption \ref{density-2} for any fixed finite $m$. Hence, for every
fixed $m \ge 2$, Theorem \ref{thm:NPMLE-consistency} yields
\[
\mathcal R(\bm\delta^{NPEB},G_0)
\lesssim
n^{-1+1/m}(\ln n)^{3-2/m}.
\]
Therefore, for any fixed $\varepsilon>0$, choosing
 $m$ sufficiently large
(but still fixed) leads to the rate
$O\!\left(n^{-1+\varepsilon}(\ln n)^3\right)$.

In the classical Gaussian setting, \cite{jiang2009general} and
 \cite{soloff2024multivariate} obtain near-parametric NPMLE
regret upper bounds of order $n^{-1}$, up to logarithmic factors, under
compact-support or related tail assumptions. While the formal limit
$m\to\infty$ of our rate has the same $n^{-1}$ order (up to log factors), this limit does not
rigorously follow
from Theorem \ref{thm:NPMLE-consistency}, since the
constants 
$\{C_l\}_{l=1}^m$ in Assumption \ref{density-2}, as well as those
entering the Gagliardo--Nirenberg interpolation inequalities, may depend
on $m$.
Obtaining a bound that is uniform over $m \in [2,\infty)$, and
clarifying whether the
rate for finite $m$ is optimal, is left to future
research.


\section{Extensions}

\subsection{Parametric models and local asymptotics}\protect \label{subsec:local asymptotics}

Thus far, our $g$-modeling framework has relied heavily on the assumption of Gaussian outcomes. However, as we demonstrate below, the Gaussian likelihood naturally emerges as an approximation to the true likelihood within a local asymptotic regime.

Consider the multi-arm sequential experiment setup described in Section \ref{subsec: Generalizing}, but now assume that the outcomes from each arm follow a parametric distribution:  
\[
Y_{j,i} \sim P_{\mu_i},
\]  
where \( P_{\mu} \) is a known parametric family and \( \mu_i \) is a scalar experiment-specific parameter. For instance, in A/B testing applications, outcomes are often binary, in which case \( Y_{j,i} \sim \text{Bernoulli}(\mu_i) \). As before, let \( \tau_i \) denote the proportion of times arm \( i \) was sampled, relative to the total sample size \( N \).

Following the standard local asymptotic framework, we assume that each \( \mu_i \) represents a local perturbation of a reference parameter \( \mu_0 \):  
\[
\mu_i = \mu_0 + \frac{\theta_i}{\sqrt{N}}.
\]
Furthermore, as in the earlier sections, we also suppose that \( \theta_1, \dots, \theta_n \) are independent and identically distributed (i.i.d.) draws from a common prior \( G_0 \).

We emphasize that the reparametrization of \( \mu_i \) in terms of \( \theta_i \) serves primarily to provide a theoretical justification for our procedures. In practice, knowledge of \( \mu_0 \) is not required, as placing a prior on \( \mu \) is equivalent to placing a prior on \( \theta \).

To ensure a well-posed asymptotic analysis, we impose the standard assumption that the parametric family \( P_{\mu} \) is quadratic mean differentiable (QMD) around the reference point \( \mu_0 \). This condition guarantees the existence of a well-defined score function \( \psi(Y_{j,i}) \) and an associated inverse Fisher information given by  $\omega_i^2 := \mathbb{E}_0[\psi^2]^{-1}$, where \( \mathbb{E}_0[\cdot] \) denotes expectation under \(P_0 := P_{\mu_0} \). 

We define the (normalized) average score in experiment \( i \) as  
\[
Z_i = \frac{\sqrt{N} \, \omega_i^2 }{\lfloor N \tau_i \rfloor} \sum_{j=1}^{\lfloor N \tau_i \rfloor} \psi(Y_{j,i}).
\]  
Let \( p(\mathcal{D} \mid \theta_1, \dots, \theta_n) \) denote the likelihood of the data \( \mathcal{D} \) given the parameter vector \( (\theta_1, \dots, \theta_n) \). By Lemma 2 in \cite{adusumilli2021risk}, under the QMD property and provided that \( \max_i \tau_i \) is bounded, the likelihood ratio of \( (\theta_1, \dots, \theta_n) \) relative to the reference \( (0,\dots,0) \) is approximated as  
\[
\ln \frac{p(\mathcal{D} \mid \theta_1, \dots, \theta_n)}{p(\mathcal{D} \mid 0, \dots, 0)} = \sum_i \left\{\frac{1}{\sigma_i^2} Z_i \theta_i  - \frac{1}{2\sigma_i^2} \theta_i^2 \right\} + o_{P_0}(1),
\]
where \( \sigma_i^2 := \omega_i^2/\tau_i \). 

Consequently, up to an asymptotically negligible error, the log-likelihood can be expressed as
\begin{align*}
\ln p(\mathcal{D} \mid \theta_1, \dots, \theta_n)  &= \ln c(\mathcal{D}) + \sum_{i=1}^n \ln \left\{ \frac{1}{\sigma_i} \varphi \left(\frac{Z_i - \theta_i}{\sigma_i} \right) \right\}  + o_{P_0}(1), \quad \text{where} \\
c(\mathcal{D}) &:= p(\mathcal{D} \mid 0, \dots, 0) \cdot \prod_{i=1}^n \sqrt{2\pi \sigma_i^2} \exp \left\{\frac{ Z_i^2}{2\sigma_i^2} \right\}.
\end{align*}
Thus, up to a proportionality constant that does not depend on \( (\theta_1, \dots, \theta_n) \), the log-likelihood of the data \( p(\mathcal{D} \mid \theta_1, \dots, \theta_n) \) is well-approximated by the working log-likelihood  
\[
\sum_{i=1}^n \ln \left\{\frac{1}{\sigma_i} \varphi \left(\frac{Z_i - \theta_i}{\sigma_i} \right) \right\}.
\]
This indicates that our $g$-modeling framework remains fundamentally unchanged in the parametric setting, provided \( Z_i \) is redefined as the average score in each experiment \( i \). 

\subsection{Alternative loss functions}

While our theoretical results have focused on the MSE loss function, our methodology easily generalizes to other losses. For instance, \cite{chen2022empirical} describes a loss function geared towards selecting $\theta_i$ that are larger than 0:
$$
L(\bm{\delta}, \bm{\theta}) = \frac{-1}{n} \sum_{i=1}^n \delta_i \theta_i.
$$
Here $\bm{\delta} := (\delta_1, \dots, \delta_n)$ and each $\delta_i \in \{0,1\}$ is a binary action, with $\delta_i = 1$ indicating selection. For a given prior $G$, the optimal Bayes decision is $\delta_i = \mathbb{I}\{\theta_{i,G} \ge 0\}$, where $\theta_{i,G}$ is the posterior mean of $\theta_i$ under that prior. The Empirical Bayes (EB) analogue of this strategy would be to replace $G$ with its EB estimate $\hat{G}$; in terms of our previous notation, this corresponds to using $\hat{\delta}_i^{\textrm{EB}}$ or $\hat{\delta}_i^{\textrm{NPMLE}}$ in place of $\theta_{i,G}$, depending on which $g$-modeling procedure is employed.  

The selection problem could be generalized to top-$m$ selection, where the aim is to determine the best $m$ values of $\theta_i$. Again, the optimal Bayes decision, given a prior $G$, would be to select the top $m$ values based on the posterior means $\theta_{i,G}$. The EB strategy again simply replaces $G$ with $\hat{G}$. 

For threshold selection, if $\widehat\theta_i$ and
$\theta_{i,G_0}$ denote the estimated and oracle posterior means,
the conditional regret of the plug-in decision $\mathbb{I}\{\hat{\theta}_i \ge 0\}$ is bounded by
\[
\frac1n\sum_i
|\theta_{i,G_0}|\,
1\{\operatorname{sign}(\widehat\theta_i)
\ne\operatorname{sign}(\theta_{i,G_0})\}
\leq
\frac1n\sum_i|\widehat\theta_i-\theta_{i,G_0}|.
\]
Taking expectations and invoking the Cauchy–Schwarz inequality shows that this quantity is upper bounded by the square root of the associated MSE regret. For top-$m$ selection, the reasoning in \cite{chen2022empirical} leads to an analogous result. Consequently, our regret guarantees for MSE loss, established in earlier sections, extend to these alternative loss functions as well. 

\section{Simulations}

We now present simulation results evaluating the performance of our methods for treatment effect estimation using adaptively generated data from common algorithms. For simplicity, we focus on one-armed bandit experiments employing Thompson Sampling (TS) and the Upper Confidence Bound (UCB) algorithm, and we compare various shrinkage estimators. The maximum sample size of each experiment is set to be 50. Each experiment yields a pair of summary statistics \((Z_i, \sigma_i)\), corresponding to the sample mean and sample standard deviation.  

In the first data-generating process (DGP), the prior distribution of \(\theta_1, \dots, \theta_n\) is assumed to be Gaussian: \(\theta_i \sim N(m_0, s_0^2)\) with \(m_0 = 0\) and \(s_0^2 = 1/4\). The outcome distribution for each experiment follows a Gaussian with mean \(\theta_i\) and unit variance. This setup favors linear shrinkage estimators.  Table \ref{tab:gaussianMSE} reports the mean squared error (MSE) for estimating \(\theta_i\). The Oracle assumes perfect knowledge of \(G_0\) when constructing the Bayes rule, while NPMLE estimates \(G_0\) non-parametrically. Several intermediate linear shrinkage estimators are also considered: 
\begin{itemize}
    \item L-Marginal estimates the prior parameters \((m_0, s_0^2)\) using marginal moments of \((Z_i, \sigma_i)\). The linear shrinkage estimator takes the form:  
  \[
  \hat{m}_0 + (Z_i - \hat{m}_0) \frac{\hat{s}_0^2}{\sigma_i^2 + \hat{s}_0^2}.
  \]  
  This corresponds to the standard empirical Bayes approach (e.g., \cite{kane2008does}, \cite{chetty2014measuringa}), where \((m_0, s_0^2)\) is estimated from the marginal moments of \((Z_i, \sigma_i)\).\footnote{A commonly used estimator is \(\hat{m}_0 = \frac{1}{n} \sum_i Z_i\) and \(\hat{s}_0^2 = \widehat{\text{Var}}(Z_i) - \hat{\E}(\sigma_i^2)\), based on the identities \(\E[Z_i] = \E[\theta_i] = m_0\) and \(\text{Var}[Z_i] = \E[\text{Var}[Z_i | \theta_i]] + \text{Var}[\E[Z_i | \theta_i]]\).}   

  \item L-Posterior follows the same linear shrinkage formula but estimates \((m_0, s_0^2)\) via posterior moment matching.

  \item L-LOO is similar to L-Posterior, except that prior parameter estimation is performed using a leave-one-out approach, as discussed earlier. 
\end{itemize}  
For comparison, we also report the performance of the MLE estimator for \(\theta_i\) as a benchmark. 

Under the Gaussian prior, both L-Posterior and L-LOO closely track the performance of the Oracle estimator. NPMLE is also highly competitive, particularly for larger sample sizes. While its performance lags behind for small samples (\(n = 100\)), its performance catches up with the L-posterior as $n$ increases.  In contrast, L-Marginal performs worse due to a mismatch between the working likelihood and the true marginal density of \((Z_i, \sigma_i)\). As a result, marginal moments of \((Z_i, \sigma_i)\) fail to provide consistent estimates of the prior parameters.  

Table \ref{tab:momentGaussian} further illustrates this point by reporting the mean squared error (MSE) for estimating \((m_0, s_0^2)\) under the various methods. The columns labeled ``Marginal'' correspond to estimates based on marginal moments, which exhibit high MSE due to significant bias. In contrast, the ``Posterior'' approach, which estimates \((m_0, s_0^2)\) via posterior moment matching, and NPMLE, which reports the mean and variance of \(\hat{G}_n\), both achieve substantially lower MSE across all algorithms. 

\begin{table}
  \centering
  \small
  \caption{One armed bandit, Gaussian prior} \label{tab:gaussianMSE}
  \renewcommand{\arraystretch}{1.5} 
  \begin{tabular}{rrrrrrrr}
    \hline
    n & Oracle & NPMLE & L-marginal & L-loo & L-posterior & MLE \\
    \hline
    \multicolumn{7}{c}{Thompson Sampling }\\
    $100$  & 0.0560 & 0.0686 & 0.0646 & 0.0574 & 0.0572 & 0.1744 \\
    $500$  & 0.0561 & 0.0605 & 0.0641 & 0.0564 & 0.0564 & 0.1790 \\
    $1000$ & 0.0562 & 0.0586 & 0.0637 & 0.0564 & 0.0564 & 0.1770 \\
    $5000$ & 0.0562 & 0.0570 & 0.0635 & 0.0563 & 0.0563 & 0.1775 \\
    \hline
    \multicolumn{7}{c}{UCB algorithm }\\
    $100$  & 0.0605 & 0.0717 & 0.0720 & 0.0623 & 0.0622 & 0.1959 \\
    $500$  & 0.0605 & 0.0647 & 0.0715 & 0.0608 & 0.0608 & 0.1976 \\
    $1000$ & 0.0605 & 0.0628 & 0.0716 & 0.0606 & 0.0606 & 0.1986 \\
    $5000$ & 0.0607 & 0.0613 & 0.0716 & 0.0607 & 0.0607 & 0.1983 \\
    \hline
  \end{tabular}\\
  \begin{flushleft}
  \setstretch{1.2}
  \footnotesize{Notes: One-armed bandit experiments. $\theta_i \sim N(0, \tfrac{1}{4})$. Mean squared error for $\theta_i$ estimation. Results are based on 500 simulation repetitions.}
  \end{flushleft}
\end{table}

\begin{table}
    \centering
    \small
    \caption{One armed bandit, Gaussian prior} \label{tab:momentGaussian}
    \renewcommand{\arraystretch}{1.5} 
	\begin{tabular}{rrrr|rrr}
		\hline
	n	& Marginal & Posterior & NPMLE & Marginal & Posterior & NPMLE \\ 
		\hline
			\multicolumn{7}{c}{MSE of estimator for Prior Mean}\\
		\hline 
		\multicolumn{1}{c}{}& \multicolumn{3}{c}{Thompson Sampling }& \multicolumn{3}{c}{UCB algorithm}\\
		\hline 
		100 & 0.0222 & 0.0038 & 0.0039 & 0.0336 & 0.0036 & 0.0038 \\ 
		500 & 0.0195 & 0.0007 & 0.0007 & 0.0297 & 0.0007 & 0.0008 \\ 
		1000& 0.0185 & 0.0004 & 0.0004 & 0.0298 & 0.0004 & 0.0004 \\ 
		5000 & 0.0181 & 0.0001 & 0.0001 & 0.0295 & 0.0001 & 0.0001 \\ 
		\hline \hline 
	\multicolumn{7}{c}{MSE of estimator for Prior Variance}\\
	\hline 
	\multicolumn{1}{c}{}& \multicolumn{3}{c}{Thompson Sampling }& \multicolumn{3}{c}{UCB algorithm}\\
	\hline 
		100 & 0.0218 & 0.0023 & 0.0048 & 0.0203 & 0.0027 & 0.0048 \\ 
		500 & 0.0190 & 0.0005 & 0.0009 & 0.0185 & 0.0005 & 0.0010 \\ 
		1000 & 0.0179 & 0.0002 & 0.0005 & 0.0180 & 0.0003 & 0.0005 \\ 
		5000 & 0.0176 & 0.0000 & 0.0001 & 0.0176 & 0.0001 & 0.0001 \\ 
		\hline\hline 
	\end{tabular}\\
    \begin{flushleft}
    \setstretch{1.2}
    \justify \footnotesize{Notes: One-arm bandit experiments with Thompson Sampling or UCB algorithm. Mean-squared error for the estimation of prior mean and variances. The true prior is $G_0 = N(0, 1/4)$. `Marginal' corresponds to learning the moments of $G_0$ based on $(Z_i,\sigma_i)$, `Posterior' corresponds to learning the moments of $G_0$ through posterior moment matching, `NPMLE' corresponds to the moments implied by NPMLE estimate. Results are based on 500 simulation repetitions.}
    \end{flushleft}
\end{table}

For the second DGP, we consider a discrete $G_0$ with two atoms. This is a setting that naturally favors NPMLE. Methods assuming a Gaussian prior in this case fail to recover the correct posterior distribution and therefore cannot replicate the optimal Bayes rule. We should emphasize that Stein's result - that MLE is inadimissible - no longer holds under adaptive sampling. Nevertheless, we observe that the MLE still exhibits higher MSE compared to linear shrinkage methods. Among all the estimators, NPMLE clearly demonstrates the best performance and also tracks the Oracle very closely. It mimics the optimal Bayes rule and is able to learn $G_0$ with remarkable accuracy even in moderate sample sizes. The detailed results are reported in Tables \ref{tab:twoGroupMSE} and \ref{tab:momentTwo_point}.

\begin{table}
\centering
\small
    \caption{One armed bandit, two-point prior} \label{tab:twoGroupMSE}
    \renewcommand{\arraystretch}{1.5}
	\begin{tabular}{rrrrrrr}
		\hline
		& Oracle & NPMLE & L-marginal & L-loo & L-posterior & MLE \\ 
			\hline 
						\multicolumn{7}{c}{Thompson Sampling  }\\
		$n = 100$ & 0.0000 & 0.0139 & 0.1267 & 0.1227 & 0.1243 & 0.2096 \\ 
		$n = 500$& 0.0000 & 0.0029 & 0.1222 & 0.1196 & 0.1199 & 0.2037 \\ 
		$n = 1000$ & 0.0000 & 0.0016 & 0.1227 & 0.1202 & 0.1204 & 0.2047 \\ 
		$n = 5000$ & 0.0000 & 0.0004 & 0.1228 & 0.1205 & 0.1205 & 0.2047 \\ 
		\hline
			\multicolumn{7}{c}{UCB algorithm }\\
		$n = 100$  & 0.0000 & 0.0104 & 0.1039 & 0.1013 & 0.1023 & 0.1654 \\ 
			$n = 500$ & 0.0000 & 0.0032 & 0.1049 & 0.1030 & 0.1033 & 0.1683 \\ 
		$n = 1000$ & 0.0000 & 0.0018 & 0.1050 & 0.1032 & 0.1033 & 0.1684 \\ 
			$n = 5000$  & 0.0000 & 0.0004 & 0.1046 & 0.1029 & 0.1029 & 0.1682 \\ 
		\hline
	\end{tabular}
    \begin{flushleft}
    \setstretch{1.2}
    \justify \footnotesize{Notes: One-arm bandit experiments. $\theta_i \sim \frac{1}{2}\delta_{-1} + \frac{1}{2} \delta_3$. Mean squared error for $\theta_i$ estimation. Results are based on 500 simulation repetitions.}
    \end{flushleft}
\end{table}

\begin{table}
\centering
\small
\caption{One armed bandit, two-point prior} \label{tab:momentTwo_point}
    \renewcommand{\arraystretch}{1.5}
	\begin{tabular}{rrrr|rrr}
			\hline
	n	& Marginal & Posterior & NPMLE & Marginal & Posterior & NPMLE \\ 
	\hline
	\multicolumn{7}{c}{MSE of estimator for Prior Mean}\\
	\hline 
	\multicolumn{1}{c}{}& \multicolumn{3}{c}{Thompson Sampling }& \multicolumn{3}{c}{UCB algorithm}\\
	\hline 
			100 & 0.0662 & 0.0529 & 0.0451 & 0.0582 & 0.0476 & 0.0431 \\ 
		500 & 0.0205 & 0.0102 & 0.0077 & 0.0212 & 0.0100 & 0.0083 \\ 
		1000 & 0.0169 & 0.0064 & 0.0042 & 0.0167 & 0.0054 & 0.0041 \\ 
		5000 & 0.0134 & 0.0029 & 0.0008 & 0.0122 & 0.0017 & 0.0009 \\ 
				\hline \hline 
		\multicolumn{7}{c}{MSE of estimator for Prior Variance}\\
		\hline 
		\multicolumn{1}{c}{}& \multicolumn{3}{c}{Thompson Sampling }& \multicolumn{3}{c}{UCB algorithm}\\
		\hline 
		100 & 0.2578 & 0.1781 & 0.0366 & 0.1567 & 0.1181 & 0.0348 \\ 
		500 & 0.2395 & 0.1579 & 0.0056 & 0.1614 & 0.1146 & 0.0061 \\ 
		1000 & 0.2426 & 0.1600 & 0.0028 & 0.1617 & 0.1144 & 0.0032 \\ 
		5000 & 0.2426 & 0.1601 & 0.0005 & 0.1642 & 0.1149 & 0.0006 \\ 
		\hline
	\end{tabular}
    \begin{flushleft}
    \setstretch{1.2}
    \justify \footnotesize{Notes: One-armed bandit with Thompson Sampling or UCB algorithm. Mean-squared error for the estimation of prior mean and variances, when true prior is $G_0 = \frac{1}{2} \delta_{-1} + \frac{1}{2}\delta_3$. `Marginal' corresponds to learning the moments of $G_0$ based on $(Z_i,\sigma_i)$, `Posterior' corresponds to learning the moments of $G_0$ through posterior moment matching assuming normal prior, `NPMLE' corresponds to the moments implied by NPMLE. Results are based on 500 simulation repetitions. }
    \end{flushleft}
\end{table}

\section{Empirical Illustration \protect \label{sec:Empirical illustration}}

In this section, we utilize the $g$-modeling framework to estimate the prior distribution and treatment effects using data from the ASOS digital experiments dataset. This dataset comprises results from $n = 61$ A/B tests conducted under adaptive stopping rules. In each experiment, the treatment and control groups were sampled in equal proportions. When analyzing these experiments using NPMLE, we make the implicit assumption that the experiments are independent of each other, so that Assumption 1 in Section \ref{subsec:regret_consistency_NPMLE} holds. 

The relevant terminology for this analysis is defined in Section \ref{subsec:Motivating example}. In this context,
$$
Z_i = \sqrt{N}  \left(\bar{Y}^{(1)}_i - \bar{Y}^{(0)}_i \right)
$$ 
represents the difference in sample means between the treatment and control groups, scaled by the mean sample size $\sqrt{N}$, while $\tau_i$ represents the number of times each arm was sampled divided by $N$. Also, 
$$\sigma_i^2 = \frac{\textrm{Var}[Y_{j,i}^{(1)}] + \textrm{Var}[Y_{j,i}^{(0)}]}{\tau_i}.$$ 

A key distinction from the discussion in Section \ref{subsec:Motivating example} is that the outcomes \( Y_{j,i}^{(1)} \) and \( Y_{j,i}^{(0)} \) are binary, following a Bernoulli distribution, rather than the normal distribution assumed in that section. Nevertheless, as demonstrated in Section \ref{subsec:local asymptotics}, the working likelihood  $\prod_i \frac{1}{\sigma_i} \varphi\left(\frac{Z_i - \theta_i}{\sigma_i}\right)$  
remains a valid approximation to the true likelihood, up to a proportionality constant that does not depend on the parameters \( \theta_1, \dots, \theta_n\). 

Our objective is to estimate the scaled experiment-specific treatment effects 
\[
\theta_i = \sqrt{N} \cdot \E[Y_{j,i}^{(1)} - Y_{j,i}^{(0)}],
\]  
along with the common prior distribution, $G_0$, from which these effects are generated. For this purpose, we implement the $g$-modeling procedures outlined in Section \ref{sec:Empirical Bayes}. 

These procedures rely on knowledge of $\sigma_i^2$, which is a function of $\textrm{Var}[Y_{j,i}^{(1)}]$ and $\textrm{Var}[Y_{j,i}^{(0)}]$, respectively. Although these variances are unknown, they can be approximated using standard sample variance estimators. Notably, these estimators remain consistent in adaptive experimental settings, even though their rates of convergence are slower compared to classical experiments.

Figure \ref{fig: histogram} plots the histogram of $Z_i$, the raw estimates of scaled estimated treatment effects. Most of the treatment effect estimates are fairly small, even after scaling, although we do see 6 experiments with noticeably larger positive treatment effects. The stopping time across the 61 experiments also is heterogeneous and its relationship with $Z_i$ is plotted in the left panel of Figure \ref{fig: close}. We observe a slight positive correlation. Experiments with a largest $Z_i$ tends to have larger $\sigma_i$, which is evidence of earlier stopping. Using our proposed method, we estimate $G_0$ using the NPMLE with the prescribed working model for $Z_i$ and the estimator is presented in the right panel of Figure \ref{fig: close}. Interestingly, the four experiments with the largest $Z_i$ are all given their own mass points. The rest of the mass points cluster around zero.

The estimator of $\hat G_n$ drives the shrinkage pattern for the nonparametric EB estimator. Figure \ref{fig:shrinkage} shows the amount of shrinkage instructed by the nonparametric along with the linear empirical Bayes method. The two methods agree on experiments with more extreme $Z_i$'s. The four experiments associated with the largest $Z_i's$ are not shrunk much by either method.  Two experiments with $Z_i$'s close to 5 are shrunk aggressively, more so under the nonparametric method to the extent that their relative rankings are also changed. A close inspection shows these two experiments are associated with extremely large variances. On the other hand, experiments with their effects close to zero are adjusted more towards zero under nonparametric EB, while the linear method aligns them closely on the 45 degree line, hence leaving the MLE unrevised.

\begin{figure} 
\includegraphics[scale = 0.35]{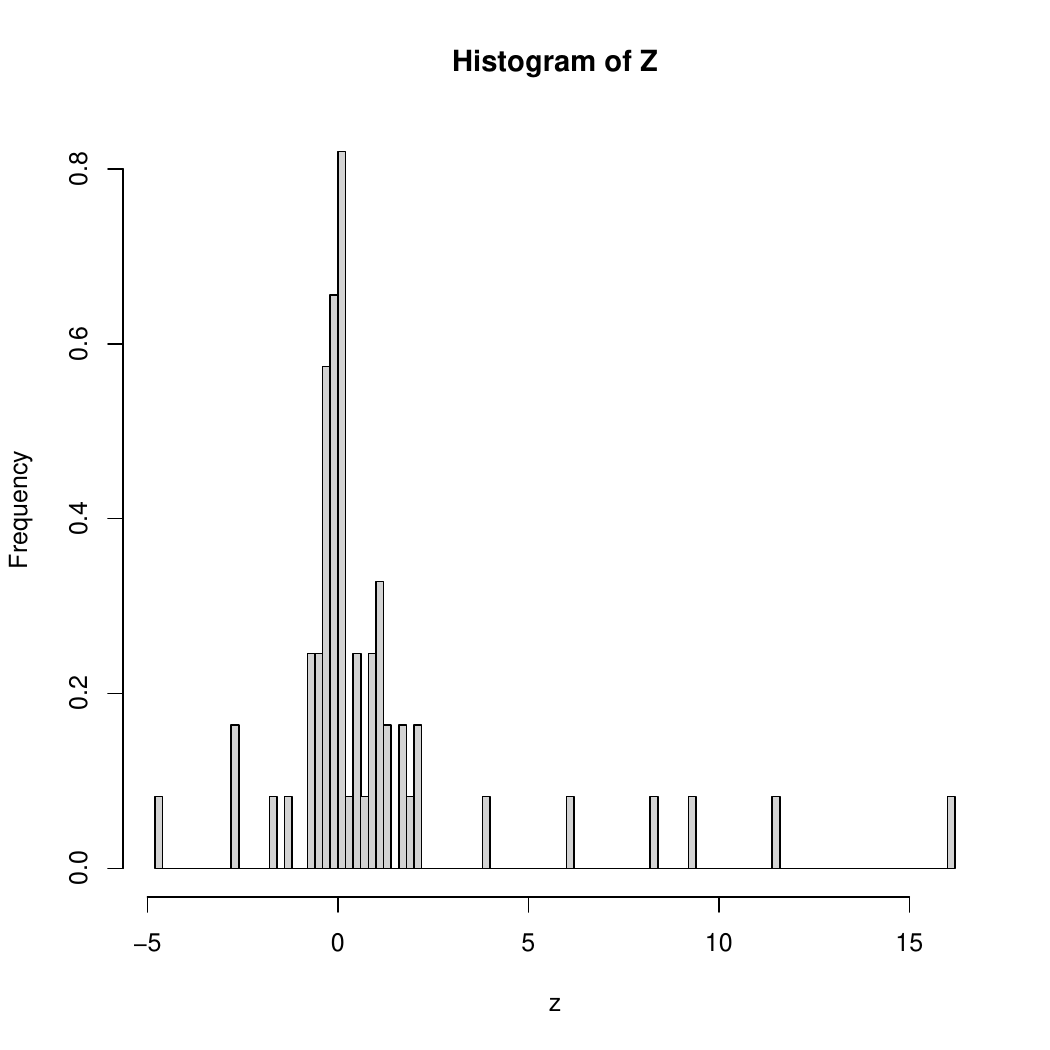}
\caption{Histogram of $Z_i$ across 61 A/B tests}
\label{fig: histogram}
\end{figure}

\begin{figure}
\includegraphics[scale = 0.45]{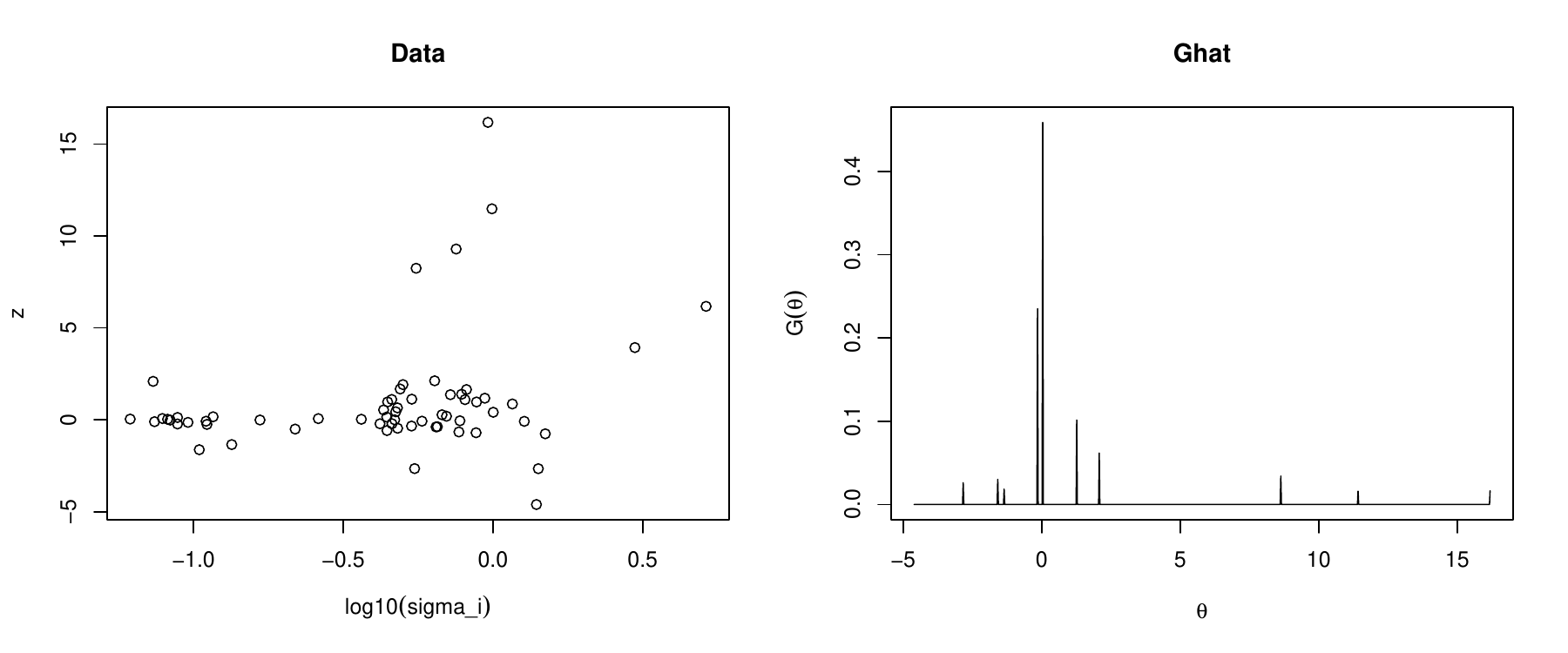}
\caption{The left panel plots the $\log_{10}(\sigma_i)$ on the x-axis and $Z_i$ on the y-axis. The right panel plots the NPMLE $\hat G_n$ of $G_0$ using the working likelihood model for $Z_i$.}
\label{fig: close}
\end{figure}
	
\begin{figure}
\includegraphics[scale = 0.45]{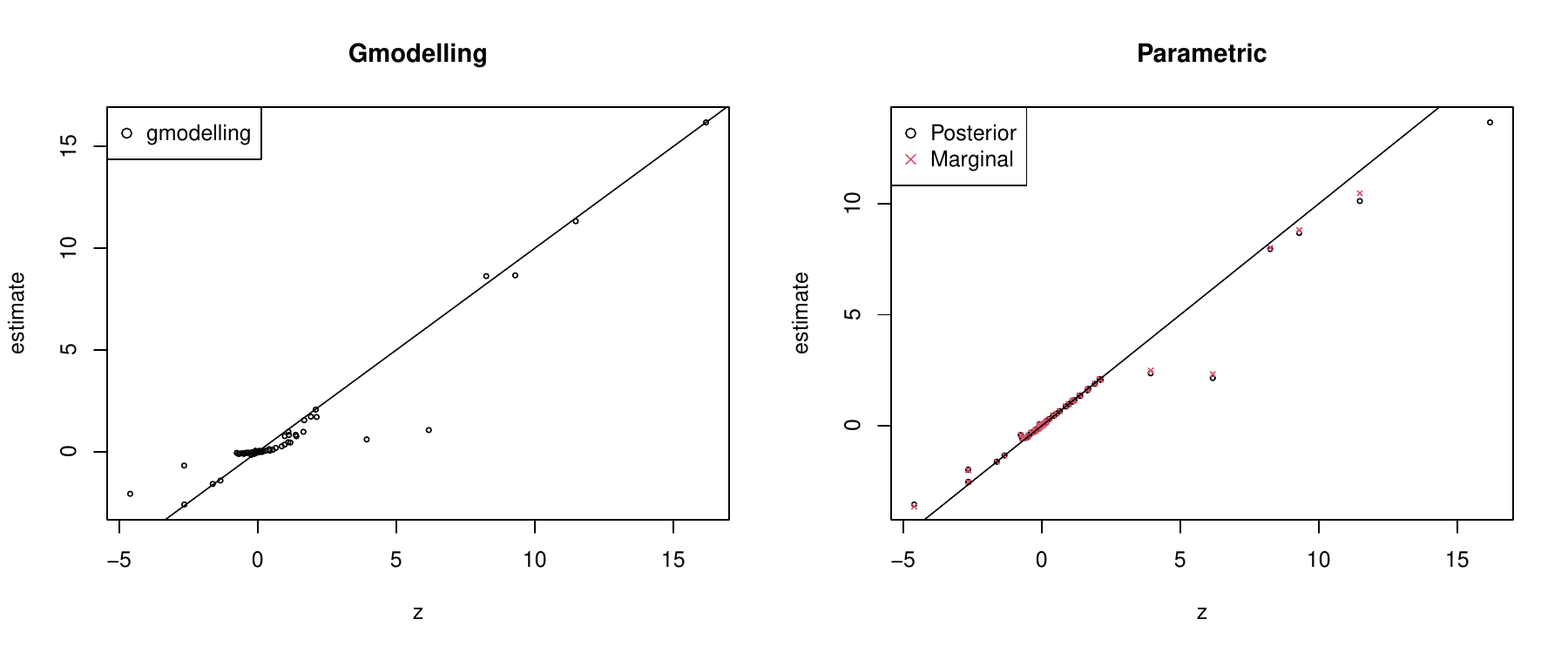}
\caption{The left panel compared g-modeling based nonparametric empirical Bayes estimator with the MLE estimator for $\theta_i$ (the latter is the 45 degree line). The right panel compares the linear shrinkage estimator, using either marginal moments matching or posterior moments matching, with the MLE.}
\label{fig:shrinkage}
\end{figure}
	
	\section{Conclusion}
Adaptive data collection need not make Empirical Bayes practice
algorithm-specific.  As long as the sampling rules and stopping times depend on
the data only through the observed history, the likelihood factors into the
familiar Gaussian working likelihood and a term that is parameter free.  
Standard $g$-modeling therefore recovers the common distribution
of treatment effects, and with it, the posterior estimates, without any
knowledge of the sampling or stopping rules.  Our moment-matching
interpretation explains this robustness and, at the same time, why naive
$f$-modeling fails when adaptivity distorts the marginal distribution of the
observed sample means.

We establish identification and regret guarantees for both parametric and
nonparametric $g$-modeling, and our simulations and the ASOS application show
that the method remains effective when the sampling algorithms
are complex or unknown. Two key assumptions still remain: 
that the experiments are comparable enough to
share a common prior and, for the theoretical analysis of NPMLE, that they are independent.
Relaxing these requirements is a natural next step.  More broadly, we envision that our results
can turn collections 
of adaptive experiments into a resource not only for improving past estimates
but also for designing better experimentation strategies for future experiments.

\bibliographystyle{IEEEtranSN}
\bibliography{EB_adaptive_experiments}

\appendix

\section{Proofs \protect \label{sec:Appendix:A}}

\subsection{Proof of Proposition \ref{Prop:likelihood principle}}
We envision the data collection as unfolding over a sequence of stages $k = 1,2,\dots$. At each stage, an arm $i$ is `pulled', based on information collected in the previous stages. 
Following \cite{lattimore2020bandit}, we employ the `stack-of-rewards' notation for the outcomes from each arm $i$. Specifically, we imagine that prior to the experimenter starting the data collection, nature determined a sequence of outcomes $\bm{y}_i := \{Y_{1,i},Y_{2,i},\dots \}$ for each arm $i$ as i.i.d draws from $\N(\tilde \theta_i, \omega_i^2)$. When the experimenter samples arm $i$ for the $j$-th time, she observes the outcome $Y_{j,i}$ at the top of the stack; this outcome is then removed from the stack.

Let $\mathcal{F}_k$ denote the information set at stage $k$: this includes data on all the previous outcomes and all past policy randomziations, the latter encoded by exogenous Uniform$[0,1]$ random variables $U_{1}, \dots, U_{k-1}$. The experimenter employs a policy rule
$$
\pi_{k} \equiv (\pi_k(1),\dotsm\pi_k(n)):\mathcal{F}_k \to [0,1]^n
$$
that specifies the probability $\pi_{k}(i)$ of arm $i$ being sampled given the information set $\mathcal{F}_k$. Formally, the choice of which arm, $i_k \in \{1,\dots ,n\}$, is sampled is determined as 
$$
\{i_{k} = i\} \iff \left\{U_{k} \in  \left[1\{i \neq 1\}\sum_{a = 1}^{i-1} \pi_k(a), \sum_{a = 1}^{i} \pi_k(a) \right]\right\},
$$
implying $i_k$ is a deterministic function of the past information and $U_k$. As required by the statement of Proposition \ref{Prop:likelihood principle}, the policy rule cannot depend on the parameter vector $(\theta_1, \dots, \theta_n)$, these quantities are not included in $\mathcal{F}_k$. Furthermore, let $A_k =1$ indicate stopping after stage $k$, and let $s_k = Pr(A_k = 1|\mathcal{F}_k)$. At each stage $k$, let $j_i(k)$ denote the number of times arm $i$ was previously sampled. We denote $N_i$ as the final sample size of each arm $i$ when the experiment stops and $K = \sum_i N_i$. Based on the above, we can write the likelihood of the augmented data, $\D^{aug}$, which in addition to the observed outcomes $Y$, also include $\{U_k, A_k\}_{k \in [K]}$, as 
\begin{align}  
	 	p(\D^{aug} | \theta_1, \dots, \theta_n) & = \prod_{k}  \left[ p(U_k) \cdot \prod_i p\left(Y_{j_i(k)+1,i}| \theta_i \right)^{\mathbb{I}\{i_k = i\}}  s_k^{A_k}(1-s_k)^{1-A_k}\right ] \nonumber \\
	 	& = a(\mathcal{D}^{aug})\cdot \prod_i  \left[ \prod_{k}    p\left(Y_{j_i(k)+1,i}| \theta_i \right)^{\mathbb{I}\{i_k = i\}} \right] = a(\mathcal{D}^{aug}) \cdot  \prod_i  \left[ \prod_{j=1}^{N_i}    p\left(Y_{j,i}| \theta_i \right) \right], \nonumber
\end{align}
Now, by the properties of the normal distribution, with $\sigma_i,  Z_i$ as defined in Section \ref{subsec: Generalizing}, after integrating out the augmented data $\{U_k, A_k\}_{k \in [K]}$, we thus obtain
$$
p(\D | \theta_1, \dots, \theta_n) = c(\D) \cdot \prod_i \frac{1}{\sigma_i} \varphi \big(\frac{Z_i - \theta_i}{\sigma_i} \big),
$$
where $c(\D) := p(\D | 0, \dots, 0)  \cdot \prod_i  \left[ \sqrt{2\pi}\sigma_i  \exp \left\{\frac{Z_i^2}{2\sigma_i^2} \right\} \right].$
This proves the desired claim. 

\subsection{Identification} 
\begin{lem}\label{lem:prior-identification}
	
	Suppose Assumption \ref{boundVar} holds. For every Borel probability measure $G$
	on $\mathbb R$, let
	$P_{G,i}$ denote the marginal distribution of the data when
	$\theta_i\sim G$. Then
	\[
	P_{G,i}=P_{G_0,i}
	\quad\Longleftrightarrow\quad
	G=G_0.
	\]
	Moreover,
	\[
	\mathbb E_{P_{G_0,i}}\!\left[
	\ln\frac{f_{G,\sigma_i(\tau_i)}(Z_i)}
	{f_{G_0,\sigma_i(\tau_i)}(Z_i)}
	\right]
	=
	-\operatorname{KL}(P_{G_0,i}\|P_{G,i})
	\leq0,
	\]
	and equality holds if and only if $G=G_0$. Consequently, $G_0$ uniquely
	maximizes
	\[
	G\longmapsto
	\mathbb E_{P_{G_0,i}}\!\left[
	\ln f_{G,\sigma_i(\tau_i)}(Z_i)
	\right]
	\]
	over $\mathcal G(\mathbb R)$. 
	In particular, if $G_0=\mathcal N(0,\gamma_0^{-1})$, then $\gamma_0$
	is identified and uniquely maximizes the population working log
	likelihood over the zero-mean Gaussian family.
\end{lem}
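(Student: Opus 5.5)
The approach is to write the marginal law of the data explicitly through the likelihood factorization, reduce identification to the classical Gaussian-mixture deconvolution problem, and then read off the KL identity from the same factorization.

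\textbf{Step 1 (factorization).} For experiment $i$, Assumption \ref{likfactor} (equivalently Proposition \ref{Prop:likelihood principle}, applied to the data of experiment $i$) gives the density of $(Z_i,\tau_i)$ given $\theta_i$ with respect to $m\otimes\nu$:
\[
p(z,\tau\mid\theta)=c_i(z,\tau)\,\frac{1}{\sigma_i(\tau)}\varphi\Big(\frac{z-\theta}{\sigma_i(\tau)}\Big),
\qquad
c_i(z,\tau)=\sqrt{2\pi}\,\sigma_i(\tau)\,e^{z^2/2\sigma_i(\tau)^2}\,r_i(z,\tau).
\]
Here $r_i=p(\cdot\mid\theta=0)$. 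Integrating against $G$ gives the marginal density $p_{G,i}(z,\tau)=c_i(z,\tau)\,f_{G,\sigma_i(\tau)}(z)$. If the data object is richer than $(Z_i,\tau_i)$, the same factorization holds with $c$ depending on the full $\mathcal D_i$, and the argument below is unchanged.

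\textbf{Step 2 (identification).} The direction $G=G_0\Rightarrow P_{G,i}=P_{G_0,i}$ is trivial. For the converse, suppose $c_i f_{G,\sigma_i}=c_i f_{G_0,\sigma_i}$ holds $m\otimes\nu$-a.e. Then on the set $\{c_i>0\}=\{r_i>0\}$ we have $f_{G,\sigma(\tau)}(z)=f_{G_0,\sigma(\tau)}(z)$. This set has positive $m\otimes\nu$ measure, because $r_i$ integrates to one.

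Fix a $\tau$ in a $\nu$-positive set for which the $z$-section $\{z: r_i(z,\tau)>0\}$ has positive Lebesgue measure. The Gaussian mixture $z\mapsto f_{G,\sigma}(z)$ is real-analytic on $\mathbb R$; in fact it extends to an entire function, by dominated convergence for any probability measure $G$. So two such mixtures agreeing on a set of positive Lebesgue measure agree everywhere on $\mathbb R$.

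Equality on all of $\mathbb R$ means the convolutions $G*\mathcal N(0,\sigma^2)$ and $G_0*\mathcal N(0,\sigma^2)$ coincide. Their characteristic functions are $\hat G(t)e^{-\sigma^2t^2/2}$ and $\hat G_0(t)e^{-\sigma^2t^2/2}$. Since the Gaussian factor never vanishes, $\hat G=\hat G_0$, hence $G=G_0$. Assumption \ref{boundVar} is what guarantees $\sigma_i(\tau)\in[\sigma_\ell,\sigma_u]$ is finite and positive, so this deconvolution is legitimate.

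\textbf{Step 3 (KL identity).} Under $P_{G_0,i}$ the factor $c_i$ is positive almost surely, so it cancels in the ratio:
\[
\frac{f_{G,\sigma_i}(Z_i)}{f_{G_0,\sigma_i}(Z_i)}=\frac{p_{G,i}}{p_{G_0,i}}(Z_i,\tau_i)\quad\text{a.s.}
\]
Taking expectations gives $-\mathrm{KL}(P_{G_0,i}\|P_{G,i})$. This is $\le 0$ by Gibbs' inequality, with equality iff $P_{G,i}=P_{G_0,i}$, i.e. iff $G=G_0$ by Step 2.

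For the unique-maximizer statement, we subtract the finite constant $\mathbb E\ln f_{G_0,\sigma_i}(Z_i)$. Its finiteness follows from $\ln f_{G_0,\sigma}(z)\ge -\ln(\sigma\sqrt{2\pi})-\mathbb E_{G_0}(z-\theta)^2/2\sigma^2$ together with moment bounds, or from compact support as in Assumption \ref{boundedG0}. If finiteness fails, we state the result directly in terms of the ratio, which is well defined in $[-\infty,\infty)$ with its positive part integrable. The Gaussian subfamily claim then follows by restriction: $\mathcal N(0,\gamma^{-1})=\mathcal N(0,\gamma_0^{-1})$ iff $\gamma=\gamma_0$.

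\textbf{Main obstacle.} The delicate point is Step 2. The marginal $P_{G,i}$ only reveals $f_{G,\sigma}$ on the support of $r_i$, which under adaptive stopping can be a strict subset of $\mathbb R$ (see Assumption \ref{support-geometry}). Analyticity of Gaussian mixtures is what bridges that gap, so I would state and verify the entire-extension claim carefully. I would also handle the integrability of the log ratio so that the KL identity is meaningful rather than of the form $\infty-\infty$.
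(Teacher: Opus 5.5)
Your proposal is correct and follows the paper's proof essentially step for step. You factorize $p_{G,i}=c_i f_{G,\sigma_i}$, cancel $c_i$ on a $\tau$-section of positive Lebesgue measure, extend by analyticity of Gaussian mixtures and deconvolve, then get the KL identity from the same cancellation plus Gibbs' inequality; your explicit characteristic-function deconvolution and your check that $\mathbb E\ln f_{G_0,\sigma_i}(Z_i)$ is finite (which the paper passes over) are refinements, not a different route.
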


	\begin{proof}
		By Proposition \ref{Prop:likelihood principle} and the discussion in , denote $P_{i,\theta}$ as the distribution of the data $\mathcal{D}_i$ conditional on $\theta_i = \theta$, we have 
		\[
		\frac{dP_{i,\theta}}{dP_{i,0}} 
		=
		\exp\left\{
		\frac{z\theta}{\sigma_i^2(\tau)}
		-\frac{\theta^2}{2\sigma_i^2(\tau)}
		\right\}.
		\]
		The likelihood ratio depends only on $(z,\tau)$. Pushforward towards the distribution of $(Z_i, \tau_i)$ and integrating over $\theta\sim G$ yields, 
		\begin{equation}\label{eq:identification-common-factor}
			p_{G,i}(z,\tau)
			=
			c_i(z, \tau)f_{G,\sigma_i(\tau)}(z),
		\end{equation}
		where $c_i$ is nonnegative and does not depend on $G$. 
		
		The implication $G=G_0\Rightarrow P_{G,i}=P_{G_0,i}$ is immediate.
		Conversely, suppose $P_{G,i}=P_{G_0,i}$. Since
		$p_{G_0,i}=c_i f_{G_0,\sigma_i}$ integrates to one and every Gaussian
		convolution is strictly positive, Fubini's theorem and Assumption
		\ref{boundVar} give a $\tau$ for which
		$0<\sigma_i(\tau)<\infty$ and
		\[
		A_\tau:=\{z:c_i(z,\tau)>0\}
		\]
		has positive Lebesgue measure. Cancelling $c_i$ in
		\eqref{eq:identification-common-factor} gives
		\[
		f_{G,\sigma_i(\tau)}(z)
		=
		f_{G_0,\sigma_i(\tau)}(z)
		\]
		for almost every $z\in A_\tau$. The two Gaussian convolutions are real
		analytic, so they agree on all of $\mathbb R$. Identification of
		Gaussian location mixtures then gives $G=G_0$.
		
		Finally, strict positivity of Gaussian convolutions and
		\eqref{eq:identification-common-factor} imply that, $P_{G_0,i}$-almost
		surely,
		\[
		\frac{dP_{G,i}}{dP_{G_0,i}}
		=
		\frac{f_{G,\sigma_i(\tau)}(z)}
		{f_{G_0,\sigma_i(\tau)}(z)}.
		\]
		Hence the population log-likelihood ratio in the statement
		equals $-\operatorname{KL}(P_{G_0,i}\|P_{G,i})$, which is nonpositive by Gibbs' inequality, with equality holding when
		$P_{G,i}=P_{G_0,i}$. Given identification established earlier, equality holds when
		$G=G_0$. This proves the unique-maximizer claim.
	\end{proof}

\section{Verification of Assumptions} \label{appendixC}

\subsection{Sufficient condition for Assumption \ref{likfactor}}

\begin{lem} \label{lemma:sufficient_condition_Asm3}
	Suppose that for each \(i\), $\tau_i$ is supported on a discrete set $\mathcal{T}$ with $\vert \mathcal{T} \vert$ = $M$. Take $\nu(\tau_i)$ to be the uniform measure over $\mathcal{T}$. Then, Assumption  \ref{likfactor} is satisfied with $\bar{c} \le M$.
\end{lem}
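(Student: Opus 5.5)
The plan is to rewrite the quantity inside the supremum in Assumption \ref{likfactor} as a ratio of likelihoods of the sampled path, and then bound that ratio using only that $\tau_i$ lives on a set of $M$ points.

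\textbf{Step 1: the density with respect to $m\otimes\nu$.} With $\nu$ uniform on $\mathcal T$, each atom has mass $1/M$. So the density of $(Z_i,\tau_i)$ with respect to $m\otimes\nu$ is
\[
p(z,t\mid\theta)=M\,\tilde p(z,t\mid\theta),
\qquad
\tilde p(z,t\mid\theta):=\frac{d}{dz}\P_\theta(Z_i\le z,\tau_i=t).
\]
Here $\tilde p(\cdot,t\mid\theta)$ is the sub-density of $Z_i$ on the event $\{\tau_i=t\}$. This identity is the only point where the factor $M$ enters.

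\textbf{Step 2: a change of measure to $\theta_i=\theta$.} By Proposition \ref{Prop:likelihood principle} and the computation in the proof of Lemma \ref{lem:prior-identification}, the likelihood ratio of the data is
\[
\frac{dP_{i,\theta}}{dP_{i,0}}=\exp\{z\theta/\sigma^2-\theta^2/2\sigma^2\},
\qquad \sigma^2=\sigma_i^2(t).
\]
This ratio depends on the data only through $(Z_i,\tau_i)$, so it carries over to the pushforward. Evaluating it at $\theta=z$ gives
\[
\tilde p(z,t\mid 0)\,e^{z^2/2\sigma^2}=\tilde p(z,t\mid\theta=z).
\]
Multiplying by $\sqrt{2\pi}\sigma$, the target quantity becomes
\[
\sqrt{2\pi}\,\sigma\, e^{z^2/2\sigma^2}p(z,t\mid 0)=M\sqrt{2\pi}\,\sigma\,\tilde p(z,t\mid \theta=z).
\]

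\textbf{Step 3: bounding the sub-density.} It remains to show that $\sqrt{2\pi}\,\sigma\,\tilde p(z,t\mid\theta)\le 1$ for every $z$ and $\theta$. On $\{\tau_i=t\}$ the sample size is the fixed number $n_t=Nt$, and $Z_i=\sqrt N\,\bar Y_{n_t}$ is a rescaled mean of the first $n_t$ draws from the stack of rewards. Since $\{\tau_i=t\}$ is a subset of the sample space,
\[
\P_\theta(Z_i\in dz,\tau_i=t)\le \P_\theta(\sqrt N\,\bar Y_{n_t}\in dz).
\]
The unconditional law of $\sqrt N\,\bar Y_{n_t}$ is exactly $\mathcal N(\theta,\omega_i^2/t)=\mathcal N(\theta,\sigma^2)$. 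Its density is at most $1/(\sqrt{2\pi}\sigma)$, which gives the required inequality. Combining with Step 2 yields $\bar c\le M$.

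\textbf{Main obstacle.} The delicate step is justifying the set inclusion in Step 3 at the level of densities. The event $\{\tau_i=t\}$ depends on the whole sampled path, not only on $Z_i$. I would handle this by working with probabilities of Borel sets $B$:
\[
\P_\theta(Z_i\in B,\tau_i=t)\le \P_\theta(\sqrt N\,\bar Y_{n_t}\in B)\le \frac{|B|}{\sqrt{2\pi}\,\sigma}.
\]
The Radon--Nikodym derivative in $z$ then inherits the same bound almost everywhere, and choosing the continuous version makes it hold everywhere.
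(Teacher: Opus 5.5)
Your argument is correct and rests on the same fact as the paper's proof. On $\{\tau_i=t\}$ the statistic $Z_i$ coincides with the fixed-sample-size mean $Z_{i,t}\sim\mathcal{N}(\theta,\sigma^2)$, $\sigma^2=\omega_i^2/t$. So the sub-density of $(Z_i,\tau_i=t)$ is dominated by that Gaussian density, and the uniform base measure contributes the factor $M$. The paper writes this directly at $\theta=0$ as $p(z,t\mid 0)=M\,\mathbb{P}(\tau_i=t\mid Z_{i,t}=z,\theta=0)\,\sigma^{-1}\varphi(z/\sigma)$. The prefactor $\sqrt{2\pi}\sigma e^{z^2/2\sigma^2}$ then cancels the Gaussian density exactly, leaving only a conditional probability $\le 1$. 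Your event inclusion in Step 3 is the same domination, phrased without conditioning.

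The one real difference is your Step 2, which is a detour. Your own inclusion applied at $\theta=0$ already gives $\mathbb{P}_0(Z_i\in B,\tau_i=t)\le\int_B\sigma^{-1}\varphi(z/\sigma)\,dz$, i.e.\ $\tilde p(z,t\mid 0)\le\sigma^{-1}\varphi(z/\sigma)$ a.e. Multiplying by $\sqrt{2\pi}\sigma e^{z^2/2\sigma^2}$ and by $M$ gives the claim. You discard this by weakening to $\sup\varphi$, and that is what forces the tilt to $\theta=z$.

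The tilt also creates a technical wrinkle. Evaluating at $\theta=z$ is an evaluation on the diagonal, whereas Step 3 only gives the bound for a.e.\ $z$ at each fixed $\theta$. Your fix of ``choosing the continuous version'' is not available in general, because the sub-density can be discontinuous. For example, for $\tau=1$ in the two-stage example of Appendix~\ref{sec:two-stage-verification} it is proportional to $\varphi(z/\omega_i)1\{|z|>b_i\}$.

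You can repair this by restricting to rational $\theta$. For a.e.\ $z$ this gives $\sqrt{2\pi}\sigma e^{z^2/2\sigma^2}\tilde p(z,t\mid 0)\le e^{(z-\theta)^2/2\sigma^2}$ simultaneously for all rational $\theta$, and taking the infimum over $\theta$ yields the bound. Simply dropping the tilt is cleaner. As in the paper, the supremum in Assumption~\ref{likfactor} is then understood for a suitable version of the density.
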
 

\begin{proof}
	Let $Z_{i,t}$ denote the sample mean when the number of observations is exactly $t$, and define $\sigma_{i,t}^2 = \omega_i^2/t$. Then, $Z_{i,t} \sim \mathcal{N}(\theta, \sigma^2_{i,t})$. Note that $Z_i := Z_{i,\tau_i}$. Then, for any Borel set $B$ and $t \in \mathcal{T}$, 
	\begin{align*} 
		\mathbb{P}(Z_i \in B, \tau_i = t \mid \theta=0) &= \mathbb{P}(Z_{i,t} \in B, \tau_i = t \mid \theta=0) \\
		&= \int_{B} \mathbb{P}(\tau_i = t | Z_{i,t} =z,\theta =0) \phi_{\sigma^2_{i,t}}(z)dz.
	\end{align*}
	Since we take the base measure $\nu(\tau)$ to be the uniform measure over $\mathcal{T}$,
	\[
	p(z_i, \tau_i = t |\theta=0) = M \cdot \mathbb{P}(\tau_i = t | Z_{i,t} =z_i,\theta =0) \phi_{\sigma^2_{i,t}}(z_i).
	\]
Consequently, 
	\begin{align*}
		& \sup_{z_i, \tau_i \in [\underline{\tau}, \bar \tau]} \left\{ \sqrt{2\pi} \sigma_i  e^{z_i^2/2\sigma_i^2} \cdot p(z_i,\tau_i | \theta_i = 0) \right\} \\
		& = \sup_{z_i, t \in \mathcal{T}} \left\{ \sqrt{2\pi} \sigma_{i,t}  e^{z_i^2/2\sigma_{i,t}^2} \cdot M \cdot \mathbb{P}(\tau_i = t | Z_{i,t} =z_i,\theta =0) \phi_{\sigma^2_{i,t}} \right\}  \le M.
	\end{align*}
	
	Hence, we can take $\bar{c} = M$.
\end{proof}

\subsection{Verifying Assumption \ref{likbound}}

The following Lemma, which relies on Proposition 6 in \cite{soloff2024multivariate}, provides an explicit construction of the support $\mathcal{A}_n$ for the NPMLE such that Assumption \ref{likbound} holds. Proposition 6 in \cite{soloff2024multivariate} is a fixed-data approximation result. It
does not require the realized $Z_i$ to have Gaussian marginal
densities, nor does it require $\sigma_i$ to be nonrandom.
Therefore the pathwise argument remains valid when
$\sigma_i=\omega_i/\sqrt{\tau_i}$ is random and statistically
dependent on $Z_i$. Proposition 6 of \cite{soloff2024multivariate} yields the claimed likelihood bound for every fixed realization $(z_1, \dots, z_n, \sigma_1, \dots, \sigma_n)$ satisfying $\sigma_i^2 \geq \sigma_\ell^2$ for all $i$.  Assumption \ref{boundVar} guarantees that this condition holds almost surely. The resulting likelihood bound therefore holds almost surely, as required by Assumption \ref{likbound}.

\begin{lem}\label{lemma: qualityNPMLE}

	Suppose Assumption \ref{boundVar} holds. For $n\geq3$, let $\kappa_n
	=
	\frac{2\ln n-\ln(e\sqrt{2\pi})}{n}.$
	For a realization $\bm z=(z_1,\ldots,z_n)$, set $z_{\min}:=\min_{1\leq i\leq n}z_i$, $z_{\max}:=\max_{1\leq i\leq n}z_i$ and $D_n(\bm z):=z_{\max}-z_{\min}$.	If $D_n(\bm z)=0$, define $\mathcal A_n(\bm z):=\{z_{\min}\}$ and if $D_n(\bm z)>0$, let 	$\mathsf{k}_0:=\min\{1,\sigma_\ell^2\}$ and define $r_n(\bm z)
	:=
	\min\left\{
	\frac{\sqrt{3}\,\mathsf{k}_0}{4D_n(\bm z)}	\quad
	\mathsf{k}_0
	\sqrt{
		\frac{\kappa_n}
		{2D_n(\bm z)^2+1/2}
	}
	\right\}$, $J_n(\bm z)
	:=
	\left\lceil
	\frac{D_n(\bm z)}{r_n(\bm z)}
	\right\rceil$, 
	$\delta_n^{\mathrm{grid}}(\bm z)
	:=
	\frac{D_n(\bm z)}{J_n(\bm z)},$ and $	\mathcal A_n(\bm z)
	:=
	\left\{
	z_{\min}+j\delta_n^{\mathrm{grid}}(\bm z):
	j=0,\ldots,J_n(\bm z)
	\right\}.$

	Let $\widehat G_n^{\mathcal A}
	\in
	\argmax_{G\in\mathcal G(\mathcal A_n(\bm Z))}
	\ell_n(G),$ then for every realized
	$(z_1,\ldots,z_n,\sigma_1,\ldots,\sigma_n)$ satisfying
	$\sigma_i^2\geq\sigma_\ell^2$ for all $i$
	\[
	0\leq
	\sup_{G\in\mathcal G(\mathbb R)}\ell_n(G)
	-\ell_n(\widehat G_n^{\mathcal A})
	\leq\kappa_n.
	\]
	Consequently, under Assumption \ref{boundVar}, the bound
	holds almost surely. Moreover, for $n\geq3$,
	$q_n=e\sqrt{2\pi}/n^2$, so the displayed $\kappa_n$ equals
	$n^{-1}\ln(1/q_n)$ and $\widehat G_n^{\mathcal{A}}$ satisfies Assumption
	\ref{likbound}. 
	
\end{lem}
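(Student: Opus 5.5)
The claim is a pathwise statement: fix an arbitrary realization $(z_1,\ldots,z_n,\sigma_1,\ldots,\sigma_n)$ with $\sigma_i^2\ge\sigma_\ell^2$ for all $i$, and prove the bound deterministically. The ``almost surely'' part then follows from Assumption \ref{boundVar}. The identity $q_n=e\sqrt{2\pi}/n^2$ for $n\ge3$ is arithmetic: $e\sqrt{2\pi}\approx 6.81<9\le n^2$. Hence $\ln(1/q_n)=2\ln n-\ln(e\sqrt{2\pi})$, which matches the displayed $\kappa_n$.

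The lower bound $0\le\sup_G\ell_n(G)-\ell_n(\widehat G_n^{\mathcal A})$ is trivial, since $\mathcal G(\mathcal A_n)\subseteq\mathcal G(\mathbb R)$. For the upper bound I proceed in three steps.

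\textbf{Step 1: reduce to priors on $[z_{\min},z_{\max}]$.} Let $G$ be any prior, and let $G'$ be its pushforward under the projection $\theta\mapsto\min\{\max\{\theta,z_{\min}\},z_{\max}\}$. Projection moves every atom weakly closer to every $z_i$, so $\varphi((z_i-\theta)/\sigma_i)$ does not decrease for any $i$. Hence $\ell_n(G')\ge\ell_n(G)$, and it suffices to compete with priors supported on $[z_{\min},z_{\max}]$. If $D_n=0$, the projection sends everything to $z_{\min}$, so the point mass at $z_{\min}$ is already optimal and the bound holds with gap $0$.

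\textbf{Step 2: discretize onto the grid.} Assume $D_n>0$ and take $G$ supported on $[z_{\min},z_{\max}]$. Map each $\theta$ to its nearest grid point $\pi(\theta)\in\mathcal A_n$, so that $|\theta-\pi(\theta)|\le\delta:=\delta_n^{\rm grid}\le r_n$, and write $u=\theta-z_i$, $|u|\le D_n$. Then
\[
-\tfrac{(z_i-\pi(\theta))^2}{2\sigma_i^2}\ge-\tfrac{u^2}{2\sigma_i^2}-\tfrac{|u|\delta}{\sigma_i^2}-\tfrac{\delta^2}{2\sigma_i^2}.
\]
Using $\sigma_i^2\ge\sigma_\ell^2\ge\mathsf k_0$ gives
\[
\varphi\!\big(\tfrac{z_i-\pi(\theta)}{\sigma_i}\big)\ge\varphi\!\big(\tfrac{z_i-\theta}{\sigma_i}\big)\exp\!\Big\{-\tfrac{D_n\delta+\delta^2/2}{\mathsf k_0}\Big\}.
\]
This is uniform in $\theta$, so integrating and taking logs yields $\ell_n(\pi_\#G)\ge\ell_n(G)-(D_n\delta+\delta^2/2)/\mathsf k_0$.

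\textbf{Step 3: check the constants.} It remains to show that $(D_n\delta+\delta^2/2)/\mathsf k_0\le\kappa_n$. This is where the two terms in $r_n$ are used, and the bookkeeping here is the main delicate point. The second term in the minimum, $\delta\le\mathsf k_0\sqrt{\kappa_n/(2D_n^2+1/2)}$, gives $\delta^2(2D_n^2+1/2)\le\mathsf k_0^2\kappa_n$. But $D_n\delta$ is linear in $\delta$, so this alone does not control it.

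I would first try to sharpen the linear term. Expanding in $\theta$ around $\pi(\theta)$ shows that, averaged over a symmetric neighborhood, the linear term cancels. Alternatively, the defining constants suggest the intended argument is the moment-matching construction of Proposition 6 in \cite{soloff2024multivariate}. That construction replaces nearest-point rounding by a local redistribution of mass onto neighboring grid points that preserves the first moment, so the linear term vanishes and only second-order terms of size $\delta^2(D_n^2+\text{const})/\mathsf k_0^2$ remain. The condition $\delta\le\sqrt3\mathsf k_0/(4D_n)$ keeps the relevant quantity small enough for a bound like $\ln(1+x)\ge x-x^2$ to apply.

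Concretely, I would split the mass of $\theta\in[a_j,a_{j+1}]$ between the endpoints $a_j,a_{j+1}$ with weights that preserve the mean. I would then bound $\varphi$ at the endpoints using convexity of $e^x$ together with a second-order Taylor remainder. The aim is a per-observation log-loss at most $\delta^2(2D_n^2+1/2)/\mathsf k_0^2\le\kappa_n$. Once that holds, taking the supremum over $G$ gives the result, and since $\widehat G_n^{\mathcal A}$ maximizes over $\mathcal G(\mathcal A_n)$, Assumption \ref{likbound} follows.
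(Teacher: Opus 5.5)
Your plan is correct, and in its final form it is the argument the paper outsources. The paper's route is as follows:
\begin{itemize}
\item It observes that global maximizers of $\ell_n$ are supported on $[z_{\min},z_{\max}]$.
\item It invokes Proposition 6 of \cite{soloff2024multivariate} with $d=1$, $\mathsf{k}=\mathsf{k}_0$, $D=D_n$ as a black box, which gives the gap $\mathsf{k}_0^{-2}(2D_n^2+1/2)\delta^2$.
\item The second term of $r_n$ then yields $\kappa_n$. The first term of $r_n$ is there only to meet that proposition's mesh condition $\delta<\sqrt3\,\mathsf{k}_0/(2D_n)$.
\end{itemize}
You instead propose to prove the one-dimensional grid approximation yourself. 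Your Step 2 (nearest-point rounding) is a dead end, as you noticed, and should be dropped.

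As written, Step 3 is only stated as an aim, but it closes in two lines and more sharply than you targeted. For $\theta\in[a_j,a_{j+1}]$, put weight $w=(a_{j+1}-\theta)/\delta$ on $a_j$ and weight $1-w$ on $a_{j+1}$. The function $h_i(t)=-(z_i-t)^2/(2\sigma_i^2)$ is exactly quadratic and the split preserves the mean, so
\[
w\,h_i(a_j)+(1-w)\,h_i(a_{j+1})=h_i(\theta)-\frac{w(1-w)\delta^2}{2\sigma_i^2}.
\]
Convexity of $\exp$ then gives $w e^{h_i(a_j)}+(1-w)e^{h_i(a_{j+1})}\ge e^{h_i(\theta)-\delta^2/(8\sigma_i^2)}$ for every $\theta$ and every $i$. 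Integrate this against your projected prior. The per-observation log-loss is then at most $\delta^2/(8\mathsf{k}_0)\le\mathsf{k}_0\kappa_n/4\le\kappa_n$. This uses only $\delta^2\le\mathsf{k}_0^2\kappa_n/(2D_n^2+1/2)$ and $\mathsf{k}_0\le 1$.

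What your route buys:
\begin{itemize}
\item It is self-contained.
\item The loss bound does not depend on $D_n$.
\item The constraint $\delta\le\sqrt3\,\mathsf{k}_0/(4D_n)$ becomes unnecessary, and so does your speculated $\ln(1+x)\ge x-x^2$ step.
\item The projection in Step 1 compares against every $G\in\mathcal G(\mathbb R)$ directly. It therefore does not rely on the existence and support of a global maximizer.
\end{itemize}
The paper's route buys brevity and ties the grid construction to a published result valid in any dimension.
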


\begin{proof}
	Fix an arbitrary realization
	$(z_1,\ldots,z_n,\sigma_1,\ldots,\sigma_n)$ satisfying
	$\sigma_i^2\geq\sigma_\ell^2\geq \mathsf{k}_0$ for every $i$. The argument below is
	deterministic, conditional on this realization.
	
	First suppose $D_n(\bm z)=0$, so that
	$z_1=\cdots=z_n=z_{\min}$. For every probability measure $G$ and
	every $i$,
	\[
	f_{G,\sigma_i}(z_{\min})
	=
	\int
	\frac1{\sigma_i}
	\varphi\left(\frac{z_{\min}-\theta}{\sigma_i}\right)
	dG(\theta)
	\leq
	\frac1{\sigma_i}\varphi(0)
	=
	f_{\delta_{z_{\min}},\sigma_i}(z_{\min}).
	\]
	Hence $\delta_{z_{\min}}$ is a global maximizer of $\ell_n$, and
	the asserted likelihood gap is zero.
	
	Now suppose $D_n(\bm z)>0$. In dimension $d=1$, every
	global maximizer of $\ell_n$ over $\mathcal G(\mathbb R)$ is
	supported on
	\[
	\mathcal M_n(\bm z):=[z_{\min},z_{\max}].
	\]
	This set has diameter $D_n(\bm z)$. By construction,
	\[
	0<\delta_n^{\mathrm{grid}}(\bm z)
	\leq r_n(\bm z)
	\leq
	\frac{\sqrt{3}\,\mathsf{k}_0}{4D_n(\bm z)}
	<
	\frac{\sqrt{3}}{2}
	\frac{\mathsf{k}_0}{D_n(\bm z)}.
	\]
	The intervals
	\[
	\left[
	z_{\min}+j\delta_n^{\mathrm{grid}}(\bm z),
	z_{\min}+(j+1)\delta_n^{\mathrm{grid}}(\bm z)
	\right],
	\qquad j=0,\ldots,J_n(\bm z)-1,
	\]
	therefore form a cover of $\mathcal M_n(\bm z)$ by closed
	one-dimensional hypercubes of width
	$\delta_n^{\mathrm{grid}}(\bm z)$, and their corners are the
	points of $\mathcal A_n(\bm z)$.
	
	Apply Proposition 6 of \cite{soloff2024multivariate} with $	d=1,
	\Sigma_i=\sigma_i^2,
	\mathsf{k}=\mathsf{k}_0,
	D=D_n(\bm z).$
	Here $\mathsf{k}$ denotes the lower bound on the minimum eigenvalue
	of $\Sigma_i$. The strict mesh restriction in that proposition is $	\delta
	<
	\sqrt{\frac{3}{4d}}\,\mathsf{k}D^{-1}
	=
	\frac{\sqrt{3}}{2}
	\frac{\mathsf{k}_0}{D_n(\bm z)},$
	which was verified above. Proposition 6 of \cite{soloff2024multivariate} consequently gives
	\[
	\sup_{G\in\mathcal G(\mathbb R)}\ell_n(G)
	-\ell_n(\widehat G_n)
	\leq
	\mathsf{k}_0^{-2}
	\left(2D_n(\bm z)^2+\frac12\right)
	\{\delta_n^{\mathrm{grid}}(\bm z)\}^2.
	\]
	
	The second term in the definition of $r_n(\bm z)$ and the fact
	that $\delta_n^{\mathrm{grid}}(\bm z)\leq r_n(\bm z)$ imply $	\{\delta_n^{\mathrm{grid}}(\bm z)\}^2
	\leq
	\mathsf{k}_0^2
	\frac{\kappa_n}{2D_n(\bm z)^2+1/2}.$
	Substituting this bound into the preceding display yields $	\sup_{G\in\mathcal G(\mathbb R)}\ell_n(G)
	-\ell_n(\widehat G_n^{\mathcal A})
	\leq\kappa_n.$
\end{proof}

\subsection{Verification for a two-stage group sequential trial}
\label{sec:two-stage-verification}

The two-stage group sequential trial is a basic instance of an adaptive design that proceeds in two phases. Suppose that each stage contains $N$ observations and
\[
Y_{j,i}\sim\mathcal N(\theta_i/\sqrt N,\omega_i^2).
\]
with $\omega_i \in [\omega_\ell, \omega_u]$ for all $i$. 
Define the independent stage statistics
\[
X_{1i}:=\frac1{\sqrt N}\sum_{j=1}^{N}Y_{j,i},
\qquad
X_{2i}:=\frac1{\sqrt N}\sum_{j=N+1}^{2N}Y_{j,i}.
\]
The trial stops
after stage one if $|X_{1i}|>a\omega_i$, where $a>0$ is fixed; otherwise it
continues to stage two.  With $\tau_i=N_i/N\in\{1,2\}$,
\begin{equation}\label{eq:two-stage-statistic}
	Z_i
	=
	\frac{\sqrt N}{N_i}\sum_{j=1}^{N_i}Y_{j,i}
	=
	\begin{cases}
		X_{1i},&\tau_i=1,\\[2mm]
		(X_{1i}+X_{2i})/2,&\tau_i=2.
	\end{cases}
\end{equation}
Thus the working variance is $\sigma_i^2=\frac{\omega_i^2}{\tau_i};
\sigma_i(1)=\omega_i,
\sigma_i(2)=\frac{\omega_i}{\sqrt2}.$

Let $\nu$ be uniform on $\{1,2\}$, put $b_i:=a\omega_i$, and, for
$s>0$, write $\varphi_s(x):=s^{-1}\varphi(x/s)$.  We first derive the density $r_i(z,\tau)=p(z,\tau\mid\theta_i=0)$
with respect to $dz\,d\nu(\tau)$.  For $\tau=1$, the joint density with
respect to $dz$ is $\varphi_{\omega_i}(z)1\{|z|>b_i\}.$
Because $\nu(\{1\})=1/2$, the corresponding density with respect to
$dz\,d\nu$ is twice this expression.

For $\tau=2$, set $x=X_{1i}$ and use
$X_{2i}=2z-x$.  The Jacobian of the transformation from $X_{2i}$ to $z$
is $2$, so the joint density of $(Z_i,\tau_i=2)$ with respect to $dz$ is
\begin{align*}
	h_i^{(2)}(z)
	&=
	\int_{-b_i}^{b_i}
	2\varphi_{\omega_i}(x)\varphi_{\omega_i}(2z-x)\,dx=
	\frac1{\pi\omega_i^2}
	\exp\left(-\frac{z^2}{\omega_i^2}\right)
	\int_{-b_i}^{b_i}
	\exp\left(-\frac{(x-z)^2}{\omega_i^2}\right)dx.
\end{align*}
Again dividing by the base-measure mass $\nu(\{2\})=1/2$ gives
\begin{equation}\label{eq:two-stage-base-density}
	r_i(z,\tau)=
	\begin{cases}
		\displaystyle
		\frac{2}{\sqrt{2\pi}\,\omega_i}
		\exp\left(-\frac{z^2}{2\omega_i^2}\right)
		1\{|z|>b_i\},&\tau=1,\\[4mm]
		\displaystyle
		\frac{2}{\pi\omega_i^2}
		\exp\left(-\frac{z^2}{\omega_i^2}\right)
		\int_{-b_i}^{b_i}
		\exp\left(-\frac{(x-z)^2}{\omega_i^2}\right)dx,
		&\tau=2,\\[4mm]
		0,&\tau\notin\{1,2\}.
	\end{cases}
\end{equation}
\subsubsection*{Verification of Assumption \ref{boundVar}} Since $\tau$ can only take values $\{1,2\}$ and since $\omega_i \in [\omega_\ell, \omega_u]$ for all $i$, Assumption \ref{boundVar} holds with $\underline\tau=1$ and
$\bar\tau=2$; in particular, $\frac{\omega_\ell^2}{2}\leq\sigma_i^2(\tau_i)
\leq\omega_u^2.$

\subsubsection*{Verification of Assumption \ref{likfactor}}
For $\tau\in\{1,2\}$ and every $z$ such that
$r_i(z,\tau)>0$, the likelihood ratio relative to $\theta_i=0$ is $\frac{p_i(z,\tau\mid\theta_i)}{r_i(z,\tau)}
=
\exp\left\{
\frac{z\theta_i}{\sigma_i^2(\tau)}
-\frac{\theta_i^2}{2\sigma_i^2(\tau)}
\right\}.$
Consequently,
\[
p_i(z,\tau\mid\theta_i)
=
\frac{r_i(z,\tau)}{\varphi_{\sigma_i(\tau)}(z)}
\varphi_{\sigma_i(\tau)}(z-\theta_i).
\]
Dominance follows from Lemma \ref{lemma:sufficient_condition_Asm3}. Since $\tau \in \{1,2\}$, dominance in Assumption \ref{likfactor} holds with $\bar c = 2$. 
	

\subsubsection*{Verification of Assumption \ref{tailbound}}

Lemma \ref{lem:density_dominance}, the bound $\bar c=2$, the support
condition $|\theta|\leq R$, and
$\sigma_i(\tau_i)\leq\omega_u$ imply that, for $x>R$,
\begin{align*}
	\mathbb P(|Z_i|>x)
	&\leq
	2\sup_{\substack{|\theta|\leq R\\0<\sigma\leq\omega_u}}
	\mathbb P\{|\mathcal N(\theta,\sigma^2)|>x\}\leq
	4\exp\left\{-\frac{(x-R)^2}{2\omega_u^2}\right\}.
\end{align*}
Therefore, by the union bound,
\[
\mathbb P\left(\max_{1\leq i\leq n}|Z_i|>x\right)
\leq
4n\exp\left\{-\frac{(x-R)^2}{2\omega_u^2}\right\}.
\]
Choose $M_n=\sqrt{\kappa\ln n}$ with
$\kappa\geq24\omega_u^2$.  For all sufficiently large $n$,
$M_n\geq2R$, and the preceding display gives
\[
\mathbb P\left(\max_i|Z_i|>M_n\right)\lesssim n^{-2}.
\]
Integrating the same tail bound, after splitting the integral at a
constant multiple of $\sqrt{\ln n}$, gives
\[
\mathbb E\left[(\max_i|Z_i|\vee1)^4\right]
\lesssim(\ln n)^2.
\]
This proves Assumption \ref{tailbound}.

\subsubsection*{Verification of Assumption \ref{support-geometry}}

For $\tau=1$, take
\[
\mathcal S_{i,1}=(-\infty,-b_i)\cup(b_i,\infty),
\]
and for $\tau=2$ take $\mathcal S_{i,2}=\mathbb R$.  Since $\nu$ has finite support
and these sections are Borel sets, the corresponding set of pairs
$(z,\tau)$ is measurable.  The density
\eqref{eq:two-stage-base-density} is positive on these open sets and zero
Lebesgue-a.e. outside them.  Every connected component is either a
half-line or the full line; in particular, there are no bounded components.
The lower-length condition in Assumption \ref{support-geometry} is therefore
vacuous.  

\subsubsection*{Verification of Assumption \ref{density-2}}

For $\tau=1$, the density is positive precisely when $|z|>b_i$.  On that
set,
\[
\partial_z\ln r_i(z,1)=-\frac{z}{\omega_i^2},
\qquad
\partial_z^2\ln r_i(z,1)=-\frac1{\omega_i^2},
\qquad
\partial_z^l\ln r_i(z,1)=0\quad(l\ge3).
\]

For $\tau=2$, rewrite \eqref{eq:two-stage-base-density} as
\[
r_i(z,2)
=
\frac{2}{\pi\omega_i^2}
\exp\left(-\frac{2z^2}{\omega_i^2}\right)A_i(z),
\qquad
A_i(z)
:=
\int_{-b_i}^{b_i}
\exp\left(\frac{2zx}{\omega_i^2}-\frac{x^2}{\omega_i^2}\right)dx.
\]
The function $A_i$ is positive and analytic.  Define the probability
measure
\[
Q_{i,z}(dx)
:=
\frac{
	\exp\left(2zx/\omega_i^2-x^2/\omega_i^2\right)
	1\{|x|\le b_i\}\,dx
}{
	A_i(z)
}.
\]
Differentiating $\ln A_i(z)$ gives
\begin{align}
	\partial_z\ln r_i(z,2)
	&=
	-\frac{4z}{\omega_i^2}
	+\frac{2}{\omega_i^2}\E_{Q_{i,z}}X,
	\label{eq:two-stage-score-bound}\\
	\partial_z^2\ln r_i(z,2)
	&=
	-\frac4{\omega_i^2}
	+\frac4{\omega_i^4}\operatorname{Var}_{Q_{i,z}}(X),
	\label{eq:two-stage-second-derivative}
\end{align}
and, for $l\ge3$,
\begin{equation}\label{eq:two-stage-higher-derivatives}
	\partial_z^l\ln r_i(z,2)
	=
	\left(\frac2{\omega_i^2}\right)^l
	\kappa_l(Q_{i,z}),
\end{equation}
where $\kappa_l(Q_{i,z})$ denotes the $l$th cumulant of $X$ under
$Q_{i,z}$.

If a random variable is supported on $[-b_i,b_i]$, the
moment--cumulant formula gives $|\kappa_l|\le K_lb_i^l$
for a constant $K_l$ depending only on $l$.  Since
$b_i=a\omega_i$ and $\omega_i\ge\omega_\ell$, the preceding displays imply
\begin{align*}
	|\partial_z\ln r_i(z,2)|
	&\le
	\frac{4}{\omega_\ell^2}|z|+\frac{2a}{\omega_\ell},\\
	|\partial_z^2\ln r_i(z,2)|
	&\le
	\frac{4(1+a^2)}{\omega_\ell^2},\\
	|\partial_z^l\ln r_i(z,2)|
	&\le
	K_l\left(\frac{2a}{\omega_\ell}\right)^l,
	\qquad l\ge3.
\end{align*}
Together with the derivatives for $\tau=1$, these inequalities show that,
for every fixed $m\ge2$, there are constants $C_1,\ldots,C_m$ independent
of $n$ and $i$ such that
\[
|\partial_z^l\ln r_i(z,\tau)|
\le C_l(1+|z|)^l,
\qquad l=1,\ldots,m,
\]
at every point where $r_i(z,\tau)>0$.  This proves Assumption \ref{density-2}.

\clearpage
\thispagestyle{empty}

\begin{center}
	{\Large\bfseries Online Supplementary Material}
\end{center}

\vspace{0.5cm}


\let\thesectionWithoutS\thesection 
\renewcommand\thesection{S.\thesectionWithoutS}

\setcounter{prop}{0}
\setcounter{lem}{0}

\renewcommand{\theprop}{S.\arabic{prop}}
\renewcommand{\thelem}{S.\arabic{lem}}
	\section{Proofs}\label{sec:Appendix:S}

\subsection{Proof of Theorem \ref{thm:NPMLE-consistency}}

Recall Tweedie's formula (\ref{Tweedie's formula}): 
\[
\hat \theta_{i, G} :=	\E_{G}[\theta_i | \D] = Z_i + \sigma_i^2 \frac{f_{G, \sigma_i}'(Z_i)}{f_{G,\sigma_i}(Z_i)}.
\]
Using this formula, we can express Bayes regret as 
\[
\mathcal{R}(\bm \delta^{NPEB}, G_0) = \E\Big [\frac{1}{n} \sum_i \Big(\sigma_i^2 \frac{f_{\hat G_n, \sigma_i}'(Z_i)}{f_{\hat G_n,\sigma_i}(Z_i)} - \sigma_i^2 \frac{f_{G_0, \sigma_i}'(Z_i)}{f_{G_0,\sigma_i}(Z_i)}\Big)^2\Big] 
\]

Since $f_{G,\sigma_i}$ appears in the denominator, we need to prevent it from being too small. Fix some $\rho > 0$, and define the regularized estimator 
\[
\hat \theta_{i,G,\rho} = Z_i + \sigma_i^2 \frac{f_{G, \sigma_i}'(Z_i)}{f_{G,\sigma_i}(Z_i) \vee \frac{\rho}{\sigma_i}},
\]
where $a \vee b = \max \{a,b\}$. 

Assumption \ref{likbound} and Theorem 5 of
\cite{jiang2020general} imply the pointwise lower bounds at the observed
sample points
\[
f_{\widehat G_n,\sigma_i}(Z_i)
\geq\frac{q_n}{\sqrt{2\pi}en\sigma_i},
\qquad i=1,\ldots,n.
\]
Set $\rho_n=q_n/(\sqrt{2\pi}en)$.  For all sufficiently large $n$,
$q_n=e\sqrt{2\pi}/n^2$, so $\rho_n=n^{-3}$.  Consequently,
\[
f_{\widehat G_n,\sigma_i}(Z_i)\geq\rho_n/\sigma_i
\quad\text{and hence}\quad
\widehat\theta_{i,\widehat G_n,\rho_n}(Z_i)
=\widehat\theta_{i,\widehat G_n}(Z_i).
\]
Decompose $\| \hat{\bm{\theta}}_{\hat G_n} - \hat{\bm{\theta}}_{G_0}\| $ by the triangle inequality:
\[
\| \hat{\bm{\theta}}_{\hat G_n} - \hat{\bm{\theta}}_{G_0}\| = \| \hat{\bm{\theta}}_{\hat G_n, \rho_n} - \hat{\bm{\theta}}_{G_0}\|  \le \| \hat{\bm{\theta}}_{\hat G_n, \rho_n} - \hat{\bm{\theta}}_{G_0, \rho_n}\|  + \| \hat{\bm{\theta}}_{ G_0, \rho_n} - \hat{\bm{\theta}}_{G_0}\|. 
\]
As a result,  
\begin{eqnarray}
	\mathcal{R}(\bm \delta^{NPEB}, G_0)& \lesssim  \E\Big [\frac{1}{n} \sum_i (\hat \theta_{i, \hat G_n, \rho_n} - \hat \theta_{i, G_0, \rho_n})^2\Big ] \label{regretTerm1}\\
	& \ +\ \E\Big [\frac{1}{n} \sum_i (\hat \theta_{i, G_0, \rho_n} - \hat \theta_{i, G_0})^2\Big ] \label{regretTerm2} .
\end{eqnarray} 
\subsubsection*{Bounding \eqref{regretTerm2}}
Observe that
\begin{align} \label{eq1:bound_on_A_4}
	& \E\Big [\frac{1}{n} \sum_i (\hat \theta_{i, G_0, \rho_n} - \hat \theta_{i, G_0})^2\Big ] = \E \Big[ \frac{1}{n} \sum_i \Big(\sigma_i^2 \frac{f'_{G_0,\sigma_i}(Z_i)}{f_{G_0,\sigma_i}(Z_i)} - \sigma_i^2 \frac{f'_{G_0,\sigma_i}(Z_i)}{f_{G_0,\sigma_i}(Z_i) \vee \frac{\rho_n}{\sigma_i}}\Big)^2\Big] \nonumber\\
	&\overset{(1)}{ \leq} \bar c \frac{1}{n} \sum_i \int \Big(\frac{\omega_i^2}{\tau} \frac{f'_{G_0,\frac{\omega_i}{\sqrt{\tau}}}(z)}{f_{G_0,\frac{\omega_i}{\sqrt{\tau}}}(z)} - \frac{\omega_i^2}{\tau}  \frac{f'_{G_0,\frac{\omega_i}{\sqrt{\tau}}}(z)}{f_{G_0,\frac{\omega_i}{\sqrt{\tau}}}(z) \vee \frac{\rho_n}{\omega_i/\sqrt{\tau}}}\Big)^2 f_{G_0,\frac{\omega_i}{\sqrt{\tau}}}(z) dz d\nu(\tau) \nonumber\\
	& \leq \sup_i \bar c  \int \Big(\frac{\omega_i^2}{\tau} \frac{f'_{G_0,\frac{\omega_i}{\sqrt{\tau}}}(z)}{f_{G_0,\frac{\omega_i}{\sqrt{\tau}}}(z)} \Big)^2\Big( 1-  \frac{f_{G_0,\frac{\omega_i}{\sqrt{\tau}}}(z)}{f_{G_0,\frac{\omega_i}{\sqrt{\tau}}}(z) \vee \frac{\rho_n}{\omega_i/\sqrt{\tau}}}\Big)^2 f_{G_0,\frac{\omega_i}{\sqrt{\tau}}}(z) dz d \nu(\tau) \nonumber\\	
	&	\overset{(2)}{ \leq} \sup_{i} \bar c  \sqrt{\int \Big(\frac{\omega_i^2}{\tau} \frac{f'_{G_0,\frac{\omega_i}{\sqrt{\tau}}}(z)}{f_{G_0,\frac{\omega_i}{\sqrt{\tau}}}(z)}\Big)^4 f_{G_0,\frac{\omega_i}{\sqrt{\tau}}}(z) dz d\nu(\tau) } \sqrt{\int 1\{f_{G_0,\frac{\omega_i}{\sqrt{\tau}}}(z) \leq  \frac{\rho_n}{\omega_i/\sqrt{\tau}}\} f_{G_0,\frac{\omega_i}{\sqrt{\tau}}}(z) dz d\nu(\tau)},
\end{align}
where inequality (1) is due to Lemma \ref{lem:density_dominance}, and inequality (2) employs the Cauchy-Schwartz inequality.

For the first term in (\ref{eq1:bound_on_A_4}), fix $\sigma_i= \omega_i/\sqrt{\tau}$ and assume $X$ is a random variable such that $X |\theta \sim N(\theta, \sigma_i^2)$ and $\theta \sim G_0$. Then, $X$ has marginal density $f_{G_0,\sigma_i}$ and $\sigma_i^2 \frac{f_{G_0,\sigma_i}'(X)}{f_{G_0,\sigma_i}(X) }= \E_{G_0}[\theta | X,\sigma_i]- X$; hence, under Assumption \ref{boundVar}, 
\begin{align*}
	& \int \Big(\sigma_i^2 \frac{f'_{G_0,\sigma_i}(z)}{f_{G_0,\sigma_i}(z)}\Big)^4 f_{G_0,\sigma_i}(z) dz =\E[ (\E_{G_0}[\theta | X, \sigma_i] - X)^4] = \E[ (\E_{G_0}[\theta - X| X, \sigma_i] )^4]\\
	& \leq \E[\E_{G_0}[(\theta - X)^4 | X, \sigma_i]] = \E[(\theta - X)^4]\leq 3 \sigma_{u}^4,
\end{align*} 
where the first inequality is due to Jensen's inequality. Thus, the first term in (\ref{eq1:bound_on_A_4}) is bounded for all $i$. 

For the second term in (\ref{eq1:bound_on_A_4}), under Assumption \ref{boundedG0} and \ref{boundVar}, apply Lemma \ref{lemmaBound3} and  plug-in the choice $\rho_n = 1/n^3$; then, involving the same $X$ defined above, 
\[
\int 1\{f_{G_0,\sigma_i}(z) \leq \rho_n/\sigma_i\} f_{G_0,\sigma_i}(z) dz  \leq 3 \Big( \frac{\rho_n}{\sigma_i}\Big)^{2/3} \textrm{Var}[X]^{1/3} \lesssim \frac{1}{n^2} \nonumber 
\]
Putting everything together,  we have 
\begin{equation} \label{regretboundP1}
	\E\Big [\frac{1}{n} \sum_i (\hat \theta_{i, G_0, \rho_n} - \hat \theta_{i, G_0})^2\Big ] \lesssim  \frac{1}{n}. 
\end{equation}

\subsubsection*{Bounding \eqref{regretTerm1}}
We now bound the term $ \E\Big [\frac{1}{n} \sum_i (\hat \theta_{i, \hat G_n, \rho_n} - \hat \theta_{i, G_0, \rho_n})^2\Big ]$. Letting $\bar Z_n = \max_i |Z_i| \vee 1$, we decompose 
\begin{eqnarray}
	& \E\Big [\frac{1}{n} \sum_i (\hat \theta_{i, \hat G_n, \rho_n} - \hat \theta_{i, G_0, \rho_n})^2\Big ] \nonumber \\
	&= \E\Big  [\frac{1}{n} \sum_i (\hat \theta_{i, \hat G_n, \rho_n} - \hat \theta_{i, G_0, \rho_n})^2 1\{\bar Z_n > M_n\}\Big ] \label{Term1} \\ 
	& \ +\ \E\Big  [\frac{1}{n} \sum_i (\hat \theta_{i, \hat G_n, \rho_n} - \hat \theta_{i, G_0, \rho_n})^2 1\{\bar Z_n \leq M_n\}\Big ], \label{Term2} 
\end{eqnarray}
with $M_n = \sqrt{\kappa \ln n}$ for some $\kappa > 0$ as defined in Assumption \ref{tailbound}.

To bound \eqref{Term1}, note that by the Cauchy-Schwarz inequality, 
\begin{align*}
	& \E\Big  [\frac{1}{n} \sum_i (\hat \theta_{i, \hat G_n, \rho_n} - \hat \theta_{i, G_0, \rho_n})^2 1\{\bar Z_n > M_n\}\Big ]\\
	&  \leq \sqrt{\E\Big[ \Big( \frac{1}{n} \sum_i (\hat \theta_{i,\hat G_n, \rho_n}  - \hat \theta_{i,G_0,\rho_n} )^2\Big)^2\Big]}\sqrt{\P(\bar Z_n > M_n)}.
\end{align*}

Furthermore, as argued earlier $\hat \theta_{i,\hat G_n,\rho_n} (z)= \hat \theta_{i,\hat G_n}(z).$ Fix $\sigma_i = \omega_i/\sqrt{\tau}$, $\hat \theta_{i, \hat G_n}(z) = \E_{\hat G_n}[\theta | X = z]$
where $X  | \theta \sim N(\theta, \sigma_i^2)$ and $\theta \sim \hat G_n$. The construction in Lemma \ref{lemma: qualityNPMLE} ensures that the support of $\hat G_n$ is contained in $[-\bar Z_n, \bar Z_n]$, hence $|\hat \theta_{i, \hat G_n}| \leq \bar Z_n$. Furthermore, put
\[
D_{0,i}(z):=f_{G_0,\sigma_i}(z)\vee(\rho_n/\sigma_i),
\qquad
\lambda_{0,i}(z):=
\frac{f_{G_0,\sigma_i}(z)}{D_{0,i}(z)}\in(0,1].
\]
then we can rewrite:
\begin{align*}
	\widehat\theta_{i,G_0,\rho_n}(z)
	&=z+\sigma_i^2\frac{f'_{G_0,\sigma_i}(z)}{D_{0,i}(z)}=\{1-\lambda_{0,i}(z)\}z
	+\lambda_{0,i}(z)\widehat\theta_{i,G_0}(z).
\end{align*}
Since $\operatorname{supp}(G_0)\subseteq[-R,R]$, the posterior
mean satisfies $|\widehat\theta_{i,G_0}(z)|\leq R$.  Hence
\[
|\widehat\theta_{i,G_0,\rho_n}(z)|
\leq\{1-\lambda_{0,i}(z)\}|z|+\lambda_{0,i}(z)R
\leq |z|+R.
\]
Thus, at $z=Z_i$,
$|\widehat\theta_{i,\widehat G_n,\rho_n}(Z_i)-
\widehat\theta_{i,G_0,\rho_n}(Z_i)|
\leq2\bar Z_n+R$.
Putting these bounds together, Assumption \ref{tailbound} gives
\[
\E\Big[ \Big( \frac{1}{n} \sum_i (\hat \theta_{i,\hat G_n, \rho_n}  - \hat \theta_{i,G_0,\rho_n} )^2\Big)^2\Big] \leq \E[(2\bar Z_n + 2R)^4] \lesssim \E[\bar Z_n^4] \lesssim (\ln n)^2. 
\]

Furthermore, Assumption \ref{tailbound} also implies 
\[
\P(\bar Z_n \geq M_n) \lesssim \frac{1}{n^2}.
\]

Combining the above, we conclude 
\begin{equation}\label{regretboundP2}
	\E\Big  [\frac{1}{n} \sum_i (\hat \theta_{i, \hat G_n, \rho_n} - \hat \theta_{i, G_0, \rho_n})^2 1\{\bar Z_n > M_n\}\Big ] \lesssim \frac{\ln n}{n}.
\end{equation} 

To bound \eqref{Term2}, let
$A_n:=\{\bar Z_n\leq M_n\}$ and
$B_n:=\{\bar h^2(p_{\hat G_n},p_{G_0})\leq\epsilon_n^2\}$,
where $\epsilon_n^2=C_\epsilon(\ln n)^2/n$ is as in Proposition
\ref{HellingerAccuracy} in Section \ref{sec:Appendix:Hellinger}.  On $A_n$, the construction in
Lemma \ref{lemma: qualityNPMLE} gives
\[
\operatorname{supp}(\widehat G_n)
\subseteq\mathcal A_n(\bm Z)
\subseteq[\min_i Z_i,\max_i Z_i]
\subseteq[-M_n,M_n].
\]
On $B_n$, by definition,
$\bar h^2(p_{\widehat G_n},p_{G_0})\leq\epsilon_n^2$.
Consequently, on $A_n\cap B_n$,
\[
\widehat G_n\in\mathcal H_n
:=
\left\{
G:\operatorname{supp}(G)\subseteq[-M_n,M_n],\quad
\bar h^2(p_G,p_{G_0})\leq\epsilon_n^2
\right\}.
\]
Set
\[
\delta_n:=\frac{\rho_n}{n},\qquad
\Lambda_n:=\ln\frac1{2\pi\rho_n^2},\qquad
\eta_n^*:=
\left(\sigma_u^3+\sigma_u^2\sqrt{\Lambda_n}\right)
\frac{\delta_n}{\rho_n}.
\]
For all sufficiently large $n$,
$\rho_n=n^{-3}\leq(2\pi e)^{-1/2}$ and
$\delta_n=\rho_n/n\leq\eta_0$.  Part~(ii) of Lemma
\ref{lem:uniform-mixture-entropy}, applied with
$M=M_n$, $\rho=\rho_n$, $\eta=\delta_n$, and
$\mathcal H=\mathcal H_n$, supplies
deterministic centers
$G^{(1)},\ldots,G^{(N)}\in\mathcal H_n$ such that
\begin{equation}\label{eq:theorem1-score-cover}
	\sup_{G\in\mathcal H_n}\min_{j\leq N}
	d_{M_n,\rho_n}(G,G^{(j)})\leq\eta_n^*,
	\qquad
	(\eta_n^*)^2\lesssim\frac{\ln n}{n^2},
	\qquad
	\ln N\lesssim(\ln n)^2.
\end{equation}
with $d_{M,\rho}$ defined in Lemma \ref{lem:uniform-mixture-entropy}. 

Now following \cite{soloff2024multivariate}, we decompose $1\{A_n\} \| \hat{\bm{\theta}}_{\hat G_n, \rho_n} - \hat{\bm{\theta}}_{G_0,\rho_n}\| $ into the following four terms via triangle inequalities: 
\[
1\{A_n\} \| \hat{\bm{\theta}}_{\hat G_n, \rho_n} - \hat{\bm{\theta}}_{G_0,\rho_n}\| \leq \sum_{k=1}^4 \zeta_k,
\]
with 
\begin{align*}
	\zeta_1 &= 1\{A_n \cap B_n^c\} \| \hat{\bm{\theta}}_{\hat G_n, \rho_n} - \hat{\bm{\theta}}_{G_0,\rho_n}\|\\
	\zeta_2 & = 1\{A_n \cap B_n\} \Big(\|\hat{\bm{\theta}}_{\hat G_n, \rho_n} - \hat{\bm{\theta}}_{G_0,\rho_n}\|- \underset{j\in [N]}{\max} \| \hat{\bm{\theta}}_{G^{(j)}, \rho_n} - \hat{\bm{\theta}}_{G_0, \rho_n}\|\Big)_{+}\\
	\zeta_3 & = \underset{j\in [N]}{\max}\;
	\left(
	\|\tilde V_j\|-\E\|\tilde V_j\|
	\right)_+ \\
	\zeta_4&= \underset{j\in [N]}{\max} \; \E \|\tilde V_j \|,
\end{align*}
where $\tilde V_{ij}
=
\left|
\hat\theta_{i,G^{(j)},\rho_n}
-
\hat\theta_{i,G_0,\rho_n}
\right|
1\{|Z_i|\le M_n\}$. 
Then \eqref{Term2} can be bounded, using the Cauchy-Schwartz inequality $(a_1 + a_2 + a_3 + a_4)^2 \leq 4(a_1^2 + a_2^2 + a_3^2+a_4^2)$, as:
\[
\E \Big[ \frac{1}{n} \sum_i (\hat \theta_{i,\hat G_n, \rho_n} - \hat \theta_{i,G_0,\rho_n})^21\{A_n\}\Big] \leq  \frac{4}{n} (\E \zeta_1^2  + \E \zeta_2^2 + \E \zeta_3^2 + \E \zeta_4^2).
\]

\subsubsection*{Bounding $\E \zeta_1^2$}
Using Lemma F.1 of \cite{saha2020nonparametric}, for any $G \in \mathcal{G}(\mathbb{R})$ and for every $z \in \R$, we have that for $\rho \in (0, \frac{1}{\sqrt{2\pi e}}]$,
\[
\frac{\vert f'_{G,1}(z)\vert}{f_{G,1}(z) \vee \rho} \leq \sqrt{\ln \frac{1}{2 \pi \rho^2}}.
\]
Combined with the scaled Tweedie formula (Lemma \ref{lemma: scaledTweedie}), this implies that for any $z_i \in \R$, and with $\tilde z_i := z_i/\sigma_i$,
we use $G_{i0}$ for the law of $\theta/\sigma_i$ when
$\theta\sim G_0$, $\widehat G_{in}$ for the corresponding law when
$\theta\sim\widehat G_n$, and $G_i^{(j)}$ for the corresponding
law when $\theta\sim G^{(j)}$. Then
\[
(\hat \theta_{i,\hat G_n, \rho_n}- \hat \theta_{i,G_0,\rho_n})^2 = \sigma_i^2 \Big( \frac{f'_{\hat G_{in},1}(\tilde z_i)}{f_{\hat G_{in},1}(\tilde z_i) \vee \rho_n} - \frac{f'_{G_{i0},1}(\tilde z_i)}{f_{G_{i0},1}(\tilde z_i) \vee \rho_n}\Big)^2 \leq 4 \sigma_i^2 \ln \left(\frac{1}{2\pi \rho_n^2}\right).
\]
Consequently, under Assumption \ref{boundVar} and using $\rho_n = 1/n^3$,
\begin{equation} \label{regretboundP3}
	\frac{1}{n} \E\zeta_1^2 \lesssim \ln\left(\frac{1}{\rho_n^2}\right) \P(A_n \cap B_n^c) \lesssim  \frac{\ln n}{n},
\end{equation} 
where the last inequality holds because 
$\mathbb P(A_n\cap B_n^c)\leq\mathbb P(B_n^c)\leq3n^{-1}$ by
Proposition \ref{HellingerAccuracy} with $t=1$.

\subsubsection*{Bounding $\E\zeta_2^2$}
Under the event $A_n\cap B_n$, \eqref{eq:theorem1-score-cover}
provides a $j$ such that
\[
d_{M_n,\rho_n}(\hat G_n,G^{(j)})\leq\eta_n^*.
\]
Therefore,

\begin{align*}
	\frac{1}{n} \zeta_2^2 & \leq \frac{1}{n}1\{A_n \cap B_n\} \underset{j\in [N]}{\min} \Big (  \| \hat{\bm{\theta}}_{\hat G_n, \rho_n} - \hat{\bm{\theta}}_{G_0,\rho_n}\| - \|\hat{\bm{\theta}}_{G^{(j)},\rho_n} - \hat{\bm{\theta}}_{G_0,\rho_n}\| \Big)^2\\
	& \leq  \frac{1}{n}1\{A_n\cap B_n\} \underset{j\in [N]}{\min} \|\hat{\bm{\theta}}_{\hat G_n,\rho_n} - \hat{\bm{\theta}}_{G^{(j)},\rho_n}\|^2\\
	& = \frac{1}{n}1\{A_n \cap B_n\} \underset{j\in [N]}{\min} \sum_{i=1}^n \Big(\sigma_i^2 \frac{f^{'}_{\hat G_n,\sigma_i}(Z_i)}{f_{\hat G_n,\sigma_i}(Z_i) \vee \frac{\rho_n}{\sigma_i}} -\sigma_i^2 \frac{f^{'}_{G^{(j)},\sigma_i}(Z_i)}{f_{G^{(j)},\sigma_i}(Z_i)\vee \frac{\rho_n}{\sigma_i}}  \Big)^2 \\
	& \leq (\eta_n^*)^2.
\end{align*}

Taking expectations and using \eqref{eq:theorem1-score-cover} gives
\begin{equation}\label{regretboundP4}
	\frac{1}{n} \E \zeta_2^2
	\leq(\eta_n^*)^2
	\lesssim \frac{\ln n}{n^2}.
\end{equation}

\subsubsection*{Bounding $\E \zeta_3^2$}
Fix $j \in [N]$, for all $z_i \in \R$ with $\tilde z_i = z_i/\sigma_i$. Lemma  \ref{lemma: scaledTweedie} then implies 
\[
\tilde V_{ij}^2 \leq \sigma_i^2 \left( \frac{f'_{G_i^{(j)},1}(\tilde z_i)}{f_{G_i^{(j)},1}(\tilde z_i) \vee \rho_n} - \frac{f'_{G_{i0},1}(\tilde z_i)}{f_{G_{i0},1}(\tilde z_i) \vee \rho_n}\right)^2 \leq 4\sigma_i^2 \ln \left(\frac{1}{2 \pi \rho_n^2}\right).
\]
Hence,  
\[
\underset{ij}{\max} |\tilde V_{ij}| \leq 2 \sigma_u \sqrt{\ln \left(\frac{1}{2 \pi \rho_n^2} \right)} := K_n.
\]
Define $\bar V_{ij} = \frac{\tilde V_{ij}}{K_n}$; then, $(\bar V_{1j}, \dots, \bar V_{nj})$ are  random variables in $[0,1]$. We would like to establish a tail bound for $\|\tilde V_j\| - \E\|\tilde V_j\|$. 

Under Assumption \ref{boundedG0}, $\tilde V_{1j}, \dots, \tilde V_{nj}$ are independent. An application of the Lipschitz concentration inequality (see Theorem 6.10 of \citealt{inequalities}) then gives
\[
\P(\|\tilde V_j \| > \E \|\tilde V_j\| + t) = \P(\|\bar V_j\| > \E \|\bar V_j\| + t/K_n) \leq \exp(-t^2/2K_n^2),
\] 
which in turn implies, by the union bound, that
\[
\P(\zeta_3^2 > x) \leq N \exp\left(-\frac{x}{2K_n^2}\right).
\]
Consequently, 
\begin{align*}
	\frac{1}{n} \E \zeta_3^2 & = \frac{1}{n} \int_0^{\infty} \P(\zeta_3^2 \geq x) dx \\
	& \leq \frac{1}{n} \int_0^{+\infty} \min \Big(1, N(\exp(-x/2K_n^2))\Big) dx \\
	& = \frac{1}{n} \Big( \int_0^{2K_n^2 \ln N } dx + \int_{2K_n^2 \ln N}^{+\infty} N \exp(-x/2K_n^2)dx\Big) \\
	& \leq \frac{1}{n} \Big(2K_n^2 \ln N + 2K_n^2 N \exp(-\ln N) \Big) =  \frac{1}{n} \Big(2K_n^2 \ln N + 2K_n^2\Big).
\end{align*}

The conclusion in
\eqref{eq:theorem1-score-cover} combined with $K_n^2\lesssim\ln n$ leads to
\begin{equation}\label{regretboundP5}
	\frac{1}{n} \E\zeta_3^2 \lesssim \frac{(\ln n)^3}{n}.
\end{equation}

\subsubsection*{Bounding $\E\zeta_4^2$}
For a fixed $j \in [N]$,
\begin{align*}
	&\frac{1}{n}\big(\E\|\tilde V_j\|\big)^2\\
	&\quad=
	\frac{1}{n} \left( \E \left[ \sqrt{\sum_i 1\{|Z_i|\leq M_n\}\Big( \sigma_i^2 \frac{f'_{G^{(j)}, \sigma_i}(Z_i)}{f_{G^{(j)}, \sigma_i}(Z_i) \vee \frac{\rho_n}{\sigma_i}} - \sigma_i^2  \frac{f'_{G_0, \sigma_i}(Z_i)}{f_{G_0, \sigma_i}(Z_i) \vee \frac{\rho_n}{\sigma_i}} \Big)^2} \ \right]\right)^2\\
	& \leq \frac{1}{n} \sum_i \E \left[ 1\{|Z_i| \leq M_n\}  \Big( \sigma_i^2 \frac{f'_{G^{(j)}, \sigma_i}(Z_i)}{f_{G^{(j)}, \sigma_i}(Z_i) \vee \frac{\rho_n}{\sigma_i}} - \sigma_i^2  \frac{f'_{G_0, \sigma_i}(Z_i)}{f_{G_0, \sigma_i}(Z_i) \vee \frac{\rho_n}{\sigma_i}} \Big)^2 \right] \\
	& \leq \sigma_u^4 \frac{1}{n} \sum_i \E\left[ 1\{|Z_i| \leq M_n\} \Big(  \frac{f'_{G^{(j)}, \sigma_i}(Z_i)}{f_{G^{(j)}, \sigma_i}(Z_i) \vee \frac{\rho_n}{\sigma_i}} -    \frac{f'_{G_0, \sigma_i}(Z_i)}{f_{G_0, \sigma_i}(Z_i) \vee \frac{\rho_n}{\sigma_i}} \Big)^2 \right] 
\end{align*}
where the first inequality is due to Jensen's inequality and the second inequality holds under Assumption \ref{boundVar}. Denote 
\[
S_{G,\rho,\sigma_i} = \frac{f'_{G,\sigma_i}}{f_{G,\sigma_i} \vee \rho}, \quad S_{G,\sigma_i} = \frac{f'_{G,\sigma_i}}{f_{G,\sigma_i}} 
\]
then we have 
\begin{align*}
	& \frac{1}{n} \sum_i \E \left[ 1\{|Z_i| \leq M_n\}  \Big(  \frac{f'_{G^{(j)}, \sigma_i}(Z_i)}{f_{G^{(j)}, \sigma_i}(Z_i) \vee \frac{\rho_n}{\sigma_i}} -   \frac{f'_{G_0, \sigma_i}(Z_i)}{f_{G_0, \sigma_i}(Z_i) \vee \frac{\rho_n}{\sigma_i}} \Big)^2 \right] \\
	& = \frac{1}{n} \sum_i\E_{p_{G_0,i}}\Big[ 1\{|Z_i| \leq M_n\}   \Big( S_{G^{(j)}, \frac{\rho_n}{\sigma_i}, \sigma_i } - S_{G_0,\frac{\rho_n}{\sigma_i}, \sigma_i}\Big)^2\Big] \\
	& = \frac{1}{n} \sum_i\E_{p_{G_0,i}}\Big[ 1\{|Z_i| \leq M_n\}   \Big((S_{G^{(j)}, \sigma_i} - S_{G_0,\sigma_i}) + (S_{G^{(j)}, \frac{\rho_n}{\sigma_i},\sigma_i} - S_{G^{(j)}, \sigma_i}) - (S_{G_0,\frac{\rho_n}{\sigma_i},\sigma_i} - S_{G_0,\sigma_i}) \Big)^2\Big] \\
	& \leq \underbrace{3 \frac{1}{n} \sum_i\E_{p_{G_0,i}}\Big[1\{|Z_i| \leq M_n\}  \Big( \partial_z \ln \frac{p_{G_0,i}}{p_{G^{(j)},i}} \Big)^2\Big] }_{(I)}\\
	&+ \underbrace{3 \frac{1}{n} \sum_i \E_{ p_{G_0,i}}\Big[ S_{G^{(j)},\sigma_i}^2 1\{ f_{G^{(j)},\sigma_i} \leq \rho_n/\sigma_i, |Z_i| \leq M_n \} \Big]}_{(II)}\\
	& + \underbrace{3 \frac{1}{n} \sum_i\E_{p_{G_0,i}} \Big[S_{G_{0},\sigma_i}^2 1\{ f_{G_0,\sigma_i} \leq \rho_n/\sigma_i, |Z_i| \leq M_n \}\Big]}_{(III)}
\end{align*}
where the last inequality uses the fact that $\nabla \ln \frac{f_{G_0,\sigma_i}}{f_{G^{(j)},\sigma_i}} = \partial_z \ln \frac{p_{G_0,i}}{p_{G^{(j)}, i}}$ and $| S_{G,\rho,\sigma} - S_{G, \sigma}| \leq |S_{G,\sigma}| 1\{f_{G,\sigma} \leq \rho\}$. We now bound each of the three terms.

\subsubsection*{Bounding (I)} 
By construction, every cover center $G^{(j)}$ belongs to
$\mathcal H_n$.  Hence it is supported on $[-M_n,M_n]$. Lemma \ref{GNinequality}, with $B=M_n$,
therefore gives
\[
(I)
\leq 
C(1+M_n^{2})\Big( \bar h^2(p_{G_0}, p_{G^{(j)}})\Big)^{1-\frac{1}{m}}
\]
Furthermore by construction, we have $\bar h^2(p_{G_0},p_{G^{(j)}})
\leq
\epsilon_n^2$, therefore
\[
(I)
\lesssim
(\ln n) \Big (\epsilon_n^2\Big)^{1-\frac{1}{m}}.
\]

\subsubsection*{Bounding (III)} Using Lemma \ref{lem:density_dominance}, 
\begin{align*} 
	&\E_{p_{G_0,i}}\Big[\Big( \frac{f'_{G_0,\omega_i/\sqrt{\tau}}(z)}{f_{G_0, \omega_i/\sqrt{\tau}}(z)}\Big )^2 1\Big \{ f_{G_0, \omega_i/\sqrt{\tau}}(z) \leq \frac{\rho_n}{\omega_i/\sqrt{\tau}}\Big\} \Big]\\
	&	\leq \bar c \sqrt{\int \Big(  \frac{f'_{G_0, \omega_i/\sqrt{\tau}}(z)}{f_{G_0,\omega_i/\sqrt{\tau}}(z)}\Big )^4 f_{G_0, \omega_i/\sqrt{\tau}}(z) dz d\nu(\tau)} \sqrt{\int 1\{f_{G_0,\omega_i/\sqrt{\tau}}(z) \leq \frac{\rho_n}{\omega_i/\sqrt{\tau}}\} f_{G_0,\omega_i/\sqrt{\tau}}(z) dz d\nu(\tau)}
\end{align*}
Employing the same argument as that used for bounding \eqref{eq1:bound_on_A_4} and employing choice of $\rho_n = 1/n^3$, we obtain 
\begin{equation} \label{eq: III}
	(III) \lesssim \frac{1}{n}.
\end{equation}

\subsubsection*{Bounding (II)} 
By construction, every cover center $G^{(j)}$ belongs
to $\mathcal H_n$.  In particular,
$\operatorname{supp}(G^{(j)})\subseteq[-M_n,M_n]$.  The tweedie formula implies that
\[
\sigma_i^2 S_{G^{(j)},\sigma_i}(z)
=
\E_{G^{(j)}}[\theta\mid Z=z,\sigma_i]-z.
\]
then we have 
\begin{equation}\label{eq:internal-center-score-bound}
	\sup_{|z|\leq M_n}|S_{G^{(j)},\sigma_i}(z)|^2
	\leq \frac{4M_n^2}{\sigma_i^4}
	\leq \frac{4M_n^2}{\sigma_\ell^4}.
\end{equation}
Combining \eqref{eq:internal-center-score-bound} with Lemma
\ref{lemBound4} yields
\begin{align*}
	\frac{1}{3}(II)
	&\leq
	C M_n^2\frac{1}{n}\sum_i
	\mathbb P_{p_{G_0,i}}
	\left[f_{G^{(j)},\sigma_i}\leq\rho_n/\sigma_i,
	|Z_i|\leq M_n\right]\\
	&\leq
	C M_n^2
	\left\{
	\bar h^2(p_{G_0},p_{G^{(j)}})
	+\frac{\bar c\rho_nM_n}{\sigma_\ell}
	\right\}\\
	&\leq
	C' M_n^2\left(\epsilon_n^2+\rho_nM_n\right),
\end{align*}
Since
$M_n^2\asymp\ln n$, $\rho_n=n^{-3}$, and
$\epsilon_n^2=C_\epsilon(\ln n)^2/n$, this gives, uniformly over
$j\leq N$,
\begin{equation}\label{eq:theorem1-term-II}
	(II)
	\lesssim
	M_n^2\left(\epsilon_n^2+\rho_nM_n\right)
	\lesssim
	(\ln n)\epsilon_n^2
\end{equation}

Let $(I)_j$ and $(II)_j$ denote the preceding quantities for the center
$G^{(j)}$; term (III) does not depend on $j$.  By the definition of
$\zeta_4$, 
\begin{equation}\label{regretboundP6}
	\frac{1}{n}\E\zeta_4^2
	=
	\max_{j\leq N}\frac{1}{n}\big(\E\|\tilde V_j\|\big)^2
	\leq
	C\max_{j\leq N}\big\{(I)_j+(II)_j+(III)\big\}
	\lesssim
	(\ln n)\left(\epsilon_n^2\right)^{1-1/m}.
\end{equation}
where the final inequality uses the fact that $\epsilon_n^2\leq1$ for all sufficiently
large $n$, so that
$\epsilon_n^2\leq(\epsilon_n^2)^{1-1/m}$.  

\subsubsection{Putting together} 
Putting together (\ref{regretboundP1}), (\ref{regretboundP2}), (\ref{regretboundP3}), (\ref{regretboundP4}), (\ref{regretboundP5}) and (\ref{regretboundP6}) gives the desired result: 
\[
\mathcal{R}(\bm \delta^{NPEB}, G_0) \lesssim (\ln n)\cdot \Big(\frac{(\ln n)^{2}}{n}\Big) ^{(1-\frac{1}{m})}.
\]

\section{Average Hellinger Accuracy}\label{sec:Appendix:Hellinger}
Let $p_{G_0, i}(z_i, \tau_i)$ denote the true marginal density of $(Z_i,\tau_i)$ under the prior $G_0$ with $p_{G_0,i}(z_i,\tau_i) = c_i(z_i,\tau_i) f_{G_0, \omega_i/\sqrt{\tau_i}}(z_i)$ and $p_{\hat G_n, i}(z_i,\tau_i)$ be its estimator where we replace $G_0$ by its estimator $\hat G_n$. Define 
\[
\bar{h}^2(p_{\hat G_n}, p_{G_0}) = \frac{1}{n} \sum_i h^2(p_{\hat G_n,i}, p_{G_0,i})
\]
where $h^2(p,q) = \frac{1}{2}\int(\sqrt{p}-\sqrt{q})^2$ is the squared Hellinger distance between a pair of density $p,q$. 
\begin{prop}\label{HellingerAccuracy}
	Under Assumptions \ref{boundedG0}--\ref{likbound}, there exists a
	constant $C_\epsilon<\infty$ such that, with
	\[
	\epsilon_n^2
	:=
	C_\epsilon\frac{(\ln n)^2}{n},
	\]
	for every $t\geq1$ and all sufficiently large $n$,
	\[
	\mathbb P\left\{
	\bar h^2(p_{\widehat G_n},p_{G_0})
	\geq t^2\epsilon_n^2
	\right\}
	\leq 3n^{-t^2}.
	\]
\end{prop}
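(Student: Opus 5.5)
The plan follows the large-deviation argument of Theorem~1 of \cite{jiang2009general} and Theorem~4 of \cite{jiang2020general}. The key observation is that the common factor $c_i(z,\tau)$ cancels in every likelihood ratio. By \eqref{eq:identification-common-factor}, $p_{G,i}(z,\tau)=c_i(z,\tau)f_{G,\sigma_i(\tau)}(z)$, so
\[
\prod_i \frac{p_{G,i}(Z_i,\tau_i)}{p_{G_0,i}(Z_i,\tau_i)}=\prod_i\frac{f_{G,\sigma_i}(Z_i)}{f_{G_0,\sigma_i}(Z_i)}=\exp\{n(\ell_n(G)-\ell_n(G_0))\}.
\]
Hence maximizing the working likelihood is exactly maximizing the true product likelihood of the independent pairs $(Z_i,\tau_i)$ over the model $\{(p_{G,i})_i:G\in\mathcal G(\mathbb R)\}$. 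By Assumption~\ref{likbound}, $\widehat G_n$ is an approximate MLE for this model with slack $\kappa_n\asymp(\ln n)/n$. So the proposition reduces to a standard rate theorem for approximate sieve MLEs under independent, non-identically distributed observations, measured in average Hellinger distance $\bar h$.

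The first step is an entropy bound. I would work on the event $\{\bar Z_n\le M_n\}$, which by Assumption~\ref{tailbound} fails with probability $\lesssim n^{-2}$. I need a finite collection $\{G^{(j)}\}$ and upper brackets $u_{j,i}\ge p_{G,i}$ such that $\int u_{j,i}\le 1+\eta$ for all $G$ near $G^{(j)}$, with $\ln N\lesssim(\ln n)^2$ when $\eta\asymp n^{-2}$. I would build these by moment matching: replace $G$ by a discrete $G'$ with $O(\ln n)$ atoms matching moments up to order $\asymp\ln n$ on each interval of length $\asymp\sigma_\ell$ in $[-M_n,M_n]$. As in Lemma~2 of \cite{jiang2020general}, this gives $\sup_{|z|\le M_n}|f_{G,\sigma}-f_{G',\sigma}|\le\eta$ uniformly in $\sigma\in[\sigma_\ell,\sigma_u]$. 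I would then discretize the atoms and weights, giving $\ln N\lesssim(\ln n)^2$.

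The upper envelope is where the adaptive structure enters. I would take
\[
u_{j,i}(z,\tau)=c_i(z,\tau)\bigl[(f_{G^{(j)},\sigma_i}(z)+\eta)1\{|z|\le M_n\}+\text{(Gaussian tail envelope)}1\{|z|>M_n\}\bigr].
\]
To bound $\int u_{j,i}$ I need $\int c_i\,\eta\,1\{|z|\le M_n\}\,dz\,d\nu\lesssim \eta M_n$, which should follow from Assumption~\ref{likfactor}, i.e.\ $c_i\le\bar c$. Lemma~\ref{lem:density_dominance} also gives $\int_{|z|>M_n}c_i\sup_{|\theta|\le M_n}\varphi_{\sigma_i}(z-\theta)\lesssim \bar c\,n^{-2}$ after enlarging $\kappa$. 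This dominance step replaces the Gaussianity of the marginal that the classical proofs rely on.

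The second step is the chaining. For each $j$ with $\bar h(p_{G^{(j)}},p_{G_0})\ge t\epsilon_n$, I would use Markov's inequality on $\prod_i\sqrt{u_{j,i}/p_{G_0,i}}$ together with independence across $i$ (Assumption~\ref{boundedG0}). This gives
\[
\mathbb E\prod_i\sqrt{u_{j,i}/p_{G_0,i}}\le\exp\{-n\bar h^2(u_j,p_{G_0})+O(n\eta)\},
\]
using $\int\sqrt{u p}\le 1-h^2+O(\eta)$. On the event that $\widehat G_n$ falls in the ball of $G^{(j)}$, the likelihood ratio is at least $e^{-n\kappa_n}$, where Assumption~\ref{likbound} is used together with $\sup_G\ell_n(G)\ge\ell_n(G_0)$. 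A union bound over the $N$ centers then yields
\[
P\le\exp\{\ln N+n\kappa_n/2-nt^2\epsilon_n^2/2+O(1)\}+O(n^{-2}).
\]
Choosing $C_\epsilon$ large relative to the constants in $\ln N$ and $n\kappa_n$ makes this at most $2n^{-t^2}$, and the truncation contributes the remaining $n^{-t^2}$, at least for $t$ in the relevant range. Large $t$ is trivial because $\bar h\le1$.

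The main obstacle I expect is the entropy and bracket step: obtaining uniform approximation of $f_{G,\sigma}$ across the random, $G$-independent scales $\sigma_i(\tau)$ while keeping the brackets integrable against $c_i$. I would first try to reduce to a single worst-case scale using Assumption~\ref{boundVar}. The moment-matching error bound of \cite{jiang2020general} is monotone in $\sigma\ge\sigma_\ell$, so one cover should serve all $i$ and $\tau$ simultaneously.
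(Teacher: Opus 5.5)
Your reduction and your entropy step match the paper. The factor $c_i$ cancels, so $\widehat G_n$ is an approximate MLE for the independent pairs $(Z_i,\tau_i)$. Your cover is a sup-norm net on $[-M,M]$, uniform over $\sigma\in[\sigma_\ell,\sigma_u]$, with $\ln N\lesssim(\ln n)^2$. The problems are in how you handle $|z|>M$.

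\textbf{First gap: the truncation rate.} Intersecting with $\{\bar Z_n\le M_n\}$ costs $\P(\bar Z_n>M_n)\lesssim n^{-2}$ under Assumption \ref{tailbound}. That is at most $n^{-t^2}$ only when $t\le\sqrt2$. The event $\{\bar h^2(p_{\widehat G_n},p_{G_0})\ge t^2\epsilon_n^2\}$ is empty only once $t>1/\epsilon_n\asymp\sqrt n/\ln n$. So for every $t\in(\sqrt2,1/\epsilon_n)$ your argument gives $O(n^{-2})$, not $3n^{-t^2}$, and ``for every $t\ge1$'' is not proved.

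\textbf{Second gap: the bracket.} The tail envelope does not give $\int u_{j,i}\le 1+\eta$. Assumption \ref{likfactor} bounds $c_i$ by $\bar c$ but does not make it decay. In the two-stage design of Appendix \ref{sec:two-stage-verification}, $c_i(z,1)=2$ for $|z|>b_i$. For $z>M_n$ the supremum over $|\theta|\le M_n$ is attained at $\theta=M_n$. Hence $\int_{|z|>M_n}c_i\sup_{|\theta|\le M_n}\sigma_i^{-1}\varphi((z-\theta)/\sigma_i)$ is of order $\bar c$, not $\bar c\,n^{-2}$, however $\kappa$ is chosen. Moreover, the envelope dominates $p_{G,i}$ only for $G$ supported in $[-M_n,M_n]$. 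Assumptions \ref{boundedG0}--\ref{likbound} do not guarantee this for a generic approximate maximizer $\widehat G_n$. Separately, restricting the union bound to centers with $\bar h(p_{G^{(j)}},p_{G_0})\ge t\epsilon_n$ misses cells that contain a far prior around a near center.

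\textbf{How the paper handles this.} It neither truncates nor restricts supports.
\begin{itemize}
\item For each cell it picks $H_{0,j}$ lying both in the cell and in the far set. Then $p_{\widehat G_n,i}\le p_{H_{0,j},i}+2\bar c\eta$ on $|z|\le M$, and $\int\sqrt{p_{H_{0,j},i}\,p_{G_0,i}}=1-h^2(p_{H_{0,j},i},p_{G_0,i})$ holds exactly.
\item On $|z|>M$ it uses the trivial bound $p_{G,i}\le c_i/(\sqrt{2\pi}\sigma_i)$, valid for every $G$, and inserts the integrable filler $h^*(z)=\bar c\eta M^2/z^2$.
\item The leftover factor $\prod_{|Z_i|>M}\bar c/(2\sqrt{2\pi}\sigma_i h^*(Z_i))$ is bounded by a separate Markov argument with exponent $1/\ln n$, using Lemma \ref{lem:density_dominance} and Gaussian tails. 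This yields $2n^{-t^2}$ uniformly in $t\ge1$.
\end{itemize}

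\textbf{Repairing your truncation route.} If you want to keep truncation, make the radius $t$-dependent, $M\asymp t\sqrt{\ln n}$. Then get $\P(\bar Z_n>M)\le n^{-t^2}$ from Lemma \ref{lem:density_dominance} plus Gaussian tails, not from Assumption \ref{tailbound}. The cover entropy grows only like $t(\ln n)^2$, which is still dominated by $nt^2\epsilon_n^2$. On the truncation event, replace your bracket by $p_{H_{0,j},i}+2\bar c\eta\,1\{|z|\le M\}$. Its cost in the Markov step is only $n\sqrt{4\bar c\eta M}$.
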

\textbf{Proof.}

Define $\Delta_{\mathrm{opt},n}
:=
\sup_{G\in\mathcal G(\mathbb R)}\ell_n(G)
-\ell_n(\widehat G_n).$
Since $G_0\in\mathcal G(\mathbb R)$,
\begin{align*}
	\ell_n(\widehat G_n)-\ell_n(G_0)
	&=
	\left\{
	\sup_{G\in\mathcal G(\mathbb R)}\ell_n(G)
	-\ell_n(G_0)
	\right\}
	-\Delta_{\mathrm{opt},n}\geq-\Delta_{\mathrm{opt},n}.
\end{align*}
Assumption \ref{likbound} therefore implies
\begin{align}
	\prod_{i=1}^n
	\frac{f_{\widehat G_n,\sigma_i}(Z_i)}
	{f_{G_0,\sigma_i}(Z_i)}
	&=
	\exp\left[
	n\{\ell_n(\widehat G_n)-\ell_n(G_0)\}
	\right]\nonumber\\
	&\geq
	\exp(-n\Delta_{\mathrm{opt},n})
	\geq
	\exp(-n\kappa_n)
	=q_n.
	\label{eq:Hellinger-likelihood-lower-bound}
\end{align}

Fix constants $0<\beta<\alpha<1$. For $n\geq3$, $-\ln q_n
=
2\ln n-\ln(e\sqrt{2\pi})
\leq2\ln n.$ 
Since
$nt^2\epsilon_n^2
=
C_\epsilon t^2(\ln n)^2$ and $t\geq1$, for all sufficiently large $n$, we have $-\ln q_n
\leq
(\alpha-\beta)nt^2\epsilon_n^2$. 
Consequently, $q_n
\geq
\exp\left\{
(\beta-\alpha)nt^2\epsilon_n^2
\right\}$. 
Combining this inequality with
\eqref{eq:Hellinger-likelihood-lower-bound} gives
\begin{equation}
	\prod_{i=1}^n
	\frac{f_{\widehat G_n,\sigma_i}(Z_i)}
	{f_{G_0,\sigma_i}(Z_i)}
	\geq
	\exp\left\{
	(\beta-\alpha)nt^2\epsilon_n^2
	\right\}.
	\label{eq:Hellinger-working-likelihood-bound}
\end{equation}

For $M>0$, define the deterministic pseudo-metric
\[
d_M^{(0)}(G,H)
:=
\sup_{\sigma_\ell\leq\sigma\leq\sigma_u}
\sup_{|z|\leq M}
|f_{G,\sigma}(z)-f_{H,\sigma}(z)|.
\]
Let $H_1,\ldots,H_{N_{\mathcal F}}$ be the deterministic
$\eta$-net supplied by part~(i)
of Lemma \ref{lem:uniform-mixture-entropy}. Let
\[
\mathcal P_n
:=
\{(p_{G,1},\ldots,p_{G,n}):G\in\mathcal G(\mathbb R)\}.
\]
By Lemma \ref{lem:density_dominance}, we have
\[
\max_{1\leq i\leq n}
\sup_{\substack{|z|\leq M\\
		\tau\in[\underline\tau,\bar\tau]}}
|p_{G,i}(z,\tau)-p_{H,i}(z,\tau)|
\leq \bar c\,d_M^{(0)}(G,H),
\]
hence $H_1,\ldots,H_{N_{\mathcal F}}$ induce a
$\bar c\eta$-net for $\mathcal P_n$ on $|z|\leq M$.
Based on this, for each $j$, let $H_{0,j}$ be a distribution that satisfies 
\[
\bar h(p_{H_{0,j}}, p_{G_0}) \geq t \epsilon_n, \quad\text{and}\quad
\max_{i\leq n}\sup_{\substack{|z|\leq M\\
		\tau\in[\underline\tau,\bar\tau]}}
|p_{H_{0,j},i}(z,\tau)-p_{H_j,i}(z,\tau)|
\leq\bar c\eta,
\]
assuming such a $H_{0,j}$ exists, and let $J = \{j \leq N_\mathcal{F}: H_{0,j} \text{ exists}\}$. Then for any distribution $G$ with $\bar h(p_G,p_{G_0}) \geq t \epsilon_n$, there exists $j \in J$ such that for $i = 1, \dots, n$, 
\[
p_{G, i}(z,\tau) \leq \begin{cases} p_{H_{0,j}, i}(z,\tau) + 2\bar c \eta = p_{H_{0,j}, i}(z,\tau) + 2h^*(z) & |z|\leq M\\
	c_i(z,\tau)/(\sqrt{2\pi} \omega_i/\sqrt{\tau}) & |z|>M
\end{cases} 
\]
where $h^*(z) = \bar c \eta 1\{|z|\leq M\} + \frac{\bar c \eta M^2}{z^2} 1\{|z|>M\}$. Note that by construction, $\int_{-\infty}^{+\infty} h^*(z) dz = 4 \bar c \eta M$.

On the event of $\bar h(p_{\hat G_n}, p_{G_0}) \geq t \epsilon_n$, we have
\begin{align*}
	\exp((\beta-\alpha)t^2n \epsilon_n^2)&\leq 	\displaystyle \prod_{i=1}^{n} \frac{f_{\hat G_n, \omega_i/\sqrt{\tau_i}}(Z_i) }{f_{G_0,\omega_i/\sqrt{\tau_i}}(Z_i)} = \prod_{i=1}^n \frac{p_{\hat G_n,i}(Z_i,\tau_i)}{p_{G_0,i}(Z_i,\tau_i)} \\
	&= \prod_{|Z_i|\leq M} \frac{p_{\hat G_n, i}(Z_i,\tau_i) }{p_{G_0,i}(Z_i,\tau_i)} \prod_{|Z_i|>M} \frac{p_{\hat G_n, i}(Z_i,\tau_i) }{p_{G_0,i}(Z_i,\tau_i)} \\
	& \leq \sup_{j \in J} \prod_{|Z_i|\leq M} \frac{p_{H_{0,j}, i}(Z_i,\tau_i) + 2h^*(Z_i)  }{p_{G_0,i}(Z_i,\tau_i)} \prod_{|Z_i|>M} \frac{c_i(Z_i,\tau_i)/(\sqrt{2\pi} \sigma_i)}{p_{G_0,i}(Z_i,\tau_i)}\\
	& \leq \sup_{j \in J} \prod_{i=1}^n \frac{p_{H_{0,j}, i}(Z_i,\tau_i) + 2h^*(Z_i)  }{p_{G_0,i}(Z_i,\tau_i)} \prod_{|Z_i|>M} \frac{\bar c /(\sqrt{2\pi} \sigma_i)}{2h^*(Z_i)}		
\end{align*}
where the last inequality used Assumption \ref{likfactor} and the result in Lemma \ref{lem:density_dominance}.   

It follows that
\begin{align}
	&		\mathbb{P}(\bar h(p_{\hat G_n}, p_{G_0}) \geq t \epsilon_n) \nonumber \\
	& \leq \mathbb{P}\Big (\sup_{j \in J} \prod_{i=1}^n \frac{p_{H_{0,j}, i}(Z_i,\tau_i) + 2h^*(Z_i)  }{p_{G_0,i}(Z_i,\tau_i)} \prod_{|Z_i|>M} \frac{\bar c /(\sqrt{2\pi} \sigma_i)}{2h^*(Z_i)}		 \geq 	\exp((\beta-\alpha)t^2n \epsilon_n^2)\Big ) \nonumber \\
	&  \leq \mathbb{P}\Big(\sup_{j \in J} \displaystyle \prod_{i=1}^n \frac{p_{H_{0,j}, i}(Z_i,\tau_i) + 2h^*(Z_i)  }{p_{G_0,i}(Z_i,\tau_i)} \geq \exp(-\alpha t^2 n \epsilon_n^2) \Big)\label{INEQ1}\\
	&\ + \mathbb{P}\Big (\displaystyle \prod_{|Z_i|>M} \frac{\bar c /(\sqrt{2\pi} \sigma_i)}{2h^*(Z_i)}		  \geq  \exp(\beta t^2 n\epsilon_n^2)\Big ). \label{INEQ2}
\end{align}
We bound the two probabilities separately.
\subsubsection*{Bounding \eqref{INEQ1}}
%
By the union bound and Markov's inequality, 
\begin{align*}
	&\mathbb{P}\Big (\sup_{j \in J} \prod_{i=1}^n \frac{p_{H_{0,j}, i}(Z_i,\tau_i) + 2h^*(Z_i)  }{p_{G_0,i}(Z_i,\tau_i)}   \geq   \exp(-\alpha t^2 n \epsilon_n^2)\Big )\\
	& \leq \exp\left(\frac{\alpha}{2} t^2 n \epsilon_n^2\right) \sum_{j\in J} \E \left[ \prod_i^n \left(\frac{p_{H_{0,j}, i}(Z_i,\tau_i) + 2h^*(Z_i)  }{p_{G_0,i}(Z_i,\tau_i)} \right)^{1/2} \right]. 
\end{align*}
By the independence across experiments, as assumed in Assumption \ref{boundedG0}, 
\begin{align*}
	&\E \left[ \prod_{i=1}^n\left( \frac{p_{H_{0,j}, i}(Z_i,\tau_i) + 2h^*(Z_i)  }{p_{G_0,i}(Z_i,\tau_i)} \right)^{1/2} \right] =  \prod_{i=1}^n \mathbb{E}_{p_{G_0,i}}\Big[ \Big(\frac{p_{H_{0,j}, i}(Z_i,\tau_i) + 2h^*(Z_i)  }{p_{G_0,i}(Z_i,\tau_i)} \Big)^{1/2}\Big] \\
	& \overset{(1)}{\leq} \exp \left( \sum_i  \E_{p_{G_0,i}} \left[\left( \frac{p_{H_{0,j}, i}(Z_i, \tau_i) + 2  h^*(Z_i)}{p_{G_0,i }(Z_i, \tau_i)}\right)^{1/2} - 1\right]\right) \\
	& \overset{(2)}{\leq}  \exp \left( \sum_i  \E_{p_{G_0,i}} \left[\left( \frac{p_{H_{0,j}, i}(Z_i, \tau_i)}{p_{G_0,i}(Z_i, \tau_i)}\right)^{1/2} - 1\right]  + \sum_i \E_{p_{G_0,i}} \left[\left( \frac{2 h^*(Z_i)}{p_{G_0, i}(Z_i, \tau_i)}\right)^{1/2} \right] \right) \\
	& = \exp \left( - \sum_i h^2(p_{H_{0,j}, i}, p_{G_0, i}) \right)  \exp \left( \sum_i \int \sqrt{2h^*(z_i)} \sqrt{p_{G_0, i}(z_i, \tau_i)} dz_i d\nu(\tau_i) \right)  \\
	& \overset{(3)}{\leq} \exp\Big( - n \bar h^2(p_{H_{0,j}}, p_{G_0})\Big) \exp\Big( n \sqrt{\int 2h^*(z) dz} \Big)\\
	& \overset{(4)}{=} \exp\Big( - n \bar h^2(p_{H_{0,j}}, p_{G_0})\Big) \exp\Big( 2 n \sqrt{2\eta \bar{c} M} \Big), 
\end{align*}
where (1) is due to $\ln x \leq x - 1$,  (2) is due to $\sqrt{a+b} \leq \sqrt{a} + \sqrt{b}$, (3) is due to the Cauchy-Schwartz inequality, and equality (4) holds due to $\int  h^*(z)  = 4\bar{c} \eta M$. 

Now, since $|J| \leq N_\mathcal{F}$, we have 
\begin{align*}
	& 	\P\Big(\sup_{j \in J} \prod_{i}^n \frac{p_{H_{0,j}, i}(Z_i, \tau_i) + 2h^*(Z_i)  }{p_{G_0, i}(Z_i, \tau_i)} \geq \exp(-\alpha t^2 n \epsilon_n^2) \Big)\\
	& \leq   \exp \left(-(1-\alpha/2) n t^2 \epsilon_n^2 + \ln N_\mathcal{F} + 2n \sqrt{2\eta \bar{c} M} \right).
\end{align*}

Choose $\eta:=n^{-2}$ and $M:=\sigma_u\sqrt{8\ln n},$ For all sufficiently large $n$, part~(i) of Lemma
\ref{lem:uniform-mixture-entropy} yields
\[
\ln N_{\mathcal F}
\leq
C_{\mathrm{ent}}(\ln n)^2
\]
for some constant $C_{\mathrm{ent}}<\infty$.  
Furthermore,
\[
2n\sqrt{2\bar c\eta M}
=
2\sqrt{2\bar cM}
\leq
C_{\mathrm{rem}}(\ln n)^2
\]
for all sufficiently large $n$ and some constant
$C_{\mathrm{rem}}<\infty$. Thus, if $(1-\alpha/2)C_\epsilon
\geq
C_{\mathrm{ent}}+C_{\mathrm{rem}}+1,$
\begin{equation}
	\eqref{INEQ1}
	\leq
	\exp\{-t^2(\ln n)^2\}.
	\label{eq:INEQ1-final}
\end{equation}

\subsubsection*{Bounding \eqref{INEQ2}}

Continue to use $\eta = 1/n^2$. By the definition of $h^*(z)$, we have $\frac{\bar c /(\sqrt{2\pi}\sigma_i)}{2h^*(Z_i)} = \frac{(nZ_i)^2}{2\sqrt{2\pi} \sigma_i M^2}$ for $|Z_i|>M$. Then, under Assumption \ref{boundVar} and using Markov's inequality,
\begin{align*}
	&  \mathbb{P}\left(\prod_{|Z_i|>M} \frac{\bar c /(\sqrt{2\pi} \sigma_i)}{ 2h^*(Z_i)} \geq  \exp(\beta t^2 n\epsilon_n^2)\right)\\
	& \leq \exp\left(-\frac{\beta t^2 n \epsilon_n^2}{2\ln n}\right) \E \left[ \prod_{i=1}^n  \left|\frac{nZ_i}{\sqrt{\sigma_l} M}\right|^{{\mathbb{I}\{|Z_i| > M\}}/\ln n}  \right].
\end{align*}
Writing out the expectation explicitly, 
\begin{align*}
	&\E \left[ \prod_{i=1}^n  \left|\frac{nZ_i}{\sqrt{\sigma_l} M}\right|^{{\mathbb{I}\{|Z_i| > M\}}/\ln n}  \right] 
	= \prod_{i=1}^n \E \left[\left|\frac{nZ_i}{\sqrt{\sigma_l} M}\right|^{{\mathbb{I}\{|Z_i| > M\}}/\ln n}  \right]  \\
	& \le \prod_{i=1}^n \E \left[ 1 + \left(\frac{n}{\sqrt{\sigma_{\ell}} M} \right)^{1/\ln n} |Z_i|^{1/\ln n}  \mathbb{I}\{|Z_i | > M \} \right]\\
	& \overset{(1)}{\leq} \exp \left( \sum_{i=1}^n \left(\frac{n}{\sqrt{\sigma_{\ell}} M} \right)^{1/\ln n} \int_{|z|>M} |z|^{1/\ln n} p_{G_0,i}(z,\tau) dz d\nu(\tau) \right) \\
	& \overset{(2)}{\leq} \exp \left( \sum_{i=1}^n \left(\frac{n}{\sqrt{\sigma_{\ell}} M} \right)^{1/\ln n} \bar{c} \int_{|z|>M} |z|^{1/\ln n} f_{G_0, \sigma_i}(z) dz \right) \\
	& \overset{(3)}{\leq} \exp\left(\sum_{i=1}^n \left(\frac{n}{\sqrt{\sigma_{\ell} }M}\right)^{1/\ln n} 4 \sigma_u M^{\frac{1}{\ln n} - 1} \frac{\bar{c}}{n \sqrt{2\pi}}\right) \\
	& =  \exp\left( \left(\frac{n}{\sqrt{\sigma_{\ell} }}\right)^{1/\ln n}  4 \sigma_u \frac{\bar{c}}{M\sqrt{2\pi}}\right)\\
	& = \exp\Big( \frac{e}{\sigma_{\ell}^{1/2\ln n}}\frac{\bar{c}}{\sqrt{\pi \ln n}}\Big). 
\end{align*}
In the above expression, inequality (1) uses the fact that $\log x \leq x - 1$. Inequality (2) follows from Assumption \ref{likfactor} and Lemma \ref{lem:density_dominance}. Inequality (3) uses Lemma \ref{lemma: tailmeanbound} after choosing $n$ large enough such that $M = \sigma_u \sqrt{8\ln n} \geq 2R$. Finally, the last equality is obtained by plugging in the choice of $M$. 

Based on the above, we obtain
\begin{align*}
	&  \mathbb{P}\Big (\prod_{|Z_i|>M} \frac{\bar c /\sqrt{2\pi} \sigma_i}{ 2h^*(Z_i)} \geq  \exp(\beta t^2 n\epsilon_n^2)\Big )\\
	& \leq  \exp\left(-\frac{\beta t^2 n \epsilon_n^2}{2\ln n} \right)\exp\left( \frac{e}{\sigma_{\ell}^{1/2\ln n}}\frac{\bar{c}}{\sqrt{\pi \ln n}}\right) \\
	& \leq \exp\Big(-\frac{C_\epsilon\beta t^2}{2} \ln n + \frac{e}{\sigma_{\ell}^{1/2\ln n}}\frac{\bar{c}}{\sqrt{\pi \ln n}}\Big)  \leq 2 \exp(-t^2 \ln n),
\end{align*}
where the last inequality holds when we plug in $\epsilon_n^2 = C_\epsilon(\ln n)^2/n$ with a constant $C_\epsilon$ such that $C_\epsilon\beta / 2 \geq 1$, and when $n$ is large enough that $\frac{e}{(\sigma_{\ell})^{1/2\ln n}}\frac{\bar{c}}{\sqrt{\pi \ln n}} \leq \ln 2$. 

Thus
\begin{equation}
	\eqref{INEQ2}
	\leq
	2\exp(-t^2\ln n)
	=
	2n^{-t^2}.
	\label{eq:INEQ2-final}
\end{equation}

Combining \eqref{eq:INEQ1-final} and
\eqref{eq:INEQ2-final} yields

\begin{align*}
	\mathbb P\left\{
	\bar h^2(p_{\widehat G_n},p_{G_0})
	\geq t^2\epsilon_n^2
	\right\}
	&=
	\mathbb P\left\{
	\bar h(p_{\widehat G_n},p_{G_0})
	\geq t\epsilon_n
	\right\}\\
	&\leq
	\exp\{-t^2(\ln n)^2\}
	+
	2\exp(-t^2\ln n)\\
	&\leq
	3\exp(-t^2\ln n)
	=
	3n^{-t^2}.
\end{align*}
This proves the proposition.

\section{Auxiliary Results\protect \label{sec:Appendix:SB}}

\begin{lem} \label{lem:density_dominance}
	Let $p_{G,i}(z_i, \tau_i)$ denote the marginal density of $Z_i, \tau_i$ under some prior $G$. Under Assumption \ref{likfactor},
	$$
	p_{G,i}(z_i, \tau_i) \le  \bar{c} \cdot f_{G,\sigma_i}(z_i).
	$$
	and furthermore for any two priors $G$ and $H$, 
	$$
	|p_{G, i}(z,\tau) - p_{H,i}(z,\tau)| \leq \bar c \cdot  |f_{G,\sigma_i}(z) - f_{H,\sigma_i}(z)|
	$$
\end{lem}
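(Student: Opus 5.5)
The plan is to write the marginal density of $(Z_i,\tau_i)$ under an arbitrary prior $G$ explicitly through the likelihood factorization in Assumption \ref{likfactor}, and then to bound the resulting $\theta$-free factor pointwise. Conditional on $\theta_i=\theta$, the likelihood ratio relative to $\theta_i=0$ depends on the data only through $(Z_i,\tau_i)$. This is the same observation used in the proof of Lemma \ref{lem:prior-identification}: pushing the factorization forward to $(Z_i,\tau_i)$ gives
\[
p(z,\tau\mid\theta)=r_i(z,\tau)\,\frac{\varphi\big((z-\theta)/\sigma_i(\tau)\big)}{\varphi\big(z/\sigma_i(\tau)\big)},
\qquad r_i(z,\tau):=p(z,\tau\mid\theta_i=0).
\]
Integrating over $\theta\sim G$ then yields the representation \eqref{eq:identification-common-factor}, namely $p_{G,i}(z,\tau)=c_i(z,\tau)\,f_{G,\sigma_i(\tau)}(z)$, with
\[
c_i(z,\tau)=\sqrt{2\pi}\,\sigma_i(\tau)\,e^{z^2/2\sigma_i(\tau)^2}\,r_i(z,\tau).
\]
The key point is that $c_i$ does not depend on $G$. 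Justifying this pushforward requires only that the conditional likelihood ratio be $\sigma(Z_i,\tau_i)$-measurable. That holds because $c(\D)$ is free of $\theta$, while the Gaussian factor depends on $\D$ only through $(Z_i,\sigma_i)$, and $\sigma_i$ is itself a function of $\tau_i$. Formally, one computes $P(Z_i\in B,\tau_i\in T\mid\theta)=\E_0\big[1\{\cdot\}\,\mathrm{LR}(Z_i,\tau_i)\big]$ and reads off the density with respect to $m\otimes\nu$.

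With the representation in hand, the first claim is immediate. The supremum condition in Assumption \ref{likfactor} says precisely that $c_i(z,\tau)\le\bar c$ for all $z$ and all $\tau\in[\underline\tau,\bar\tau]$. By Assumption \ref{boundVar}, $\tau$ lies in this range almost surely, so it suffices to consider such $\tau$. Since $f_{G,\sigma_i}\ge0$, it follows that $p_{G,i}(z,\tau)=c_i f_{G,\sigma_i}(z)\le\bar c\,f_{G,\sigma_i}(z)$. For the second claim, the same factor $c_i$ multiplies both mixtures because it is independent of the prior. Hence
\[
|p_{G,i}-p_{H,i}|=c_i\,|f_{G,\sigma_i}-f_{H,\sigma_i}|\le\bar c\,|f_{G,\sigma_i}-f_{H,\sigma_i}|.
\]

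The only delicate step is the measure-theoretic passage from the likelihood of the full history $\D_i$ to a density of $(Z_i,\tau_i)$ with respect to $m\otimes\nu$. In particular, one must ensure that $r_i$ exists as such a density and that the factor is identified $m\otimes\nu$-almost everywhere. I would handle this by taking the existence of $r_i$ as part of Assumption \ref{likfactor}, since the assumption already refers to $p(z,\tau\mid\theta)$ as a density with respect to $m\otimes\nu$. Fubini's theorem then justifies exchanging the integral over $\theta$ with the density statement, and the inequalities are understood almost everywhere.
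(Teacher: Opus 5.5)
Your proposal is correct and follows the paper's proof. Both establish $p_{G,i}(z,\tau)=c_i(z,\tau)f_{G,\sigma_i}(z)$ with $c_i=\sqrt{2\pi}\sigma_i e^{z^2/2\sigma_i^2}p(z,\tau\mid\theta_i=0)$ free of $G$, and then apply the bound $c_i\le\bar c$ from Assumption \ref{likfactor} to get both claims. The only difference is cosmetic: the paper justifies the pushforward by padding the data to a fixed maximal length and disintegrating a base measure along $(Z_i,\tau_i)$, whereas your change-of-measure identity $P_\theta(\cdot)=\E_0[1\{\cdot\}\,\mathrm{LR}(Z_i,\tau_i)]$ gets there more directly.
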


\begin{proof}
	The proof requires a bit of preparation in order to properly define the marginal density.
	
	Denote the observed outcomes and actions by $(A_{1,i},Y_{1,i},A_{2,i},Y_{2,i},\dots,A_{j,i},Y_{j,i},\dots)$
	in the sequence they are observed. We can interpret
	the sampling algorithm as mapping past data and some exogenous randomization
	$\bm{U}:=(U_{1,i},U_{2,i},\dots U_{j,i},\dots)$ to the set of actions $A_{j,i}$
	over the course of the experiment. It is without loss of generality
	to take $\bm{U}_{i}:= (U_{1,i}, U_{2,i},\dots)$ to be iid Uniform.  
	
	The maximal possible sample size is $\bar{N}$. It is useful to imagine
	that the total set of observations is always this fixed number. For
	open ended experiments, we can always achieve this by imagining that
	if $N_i < \bar{N}$, the remaining observations are exogenously
	drawn from a known distribution, say $\mathcal{N}(0,1)$.
	These observations are obviously ancillary to estimation of $\theta_{i}$
	and can be ignored. However, fixing the number allows us to define
	the likelihood densities with respect to a base measure in a consistent manner.
	We denote the full set of outcomes, with the addition of this auxiliary
	data, by $\bm{y}_i$. The observed data is therefore $\bm{y}_i,\bm{U}_i$. 
	
	Define the base measure
	\[
	\nu(\mathcal{D}_i)=m(\bm{y}_i)\otimes\gamma(\bm{U}_i),
	\]
	where $m(\bm{y}_i)$ denotes the $\bar{N}$ dimensional Lebesgue
	measure over $\bm{y}_i$, and $\gamma(\bm{U}_i)$ denotes the joint
	probability distribution over $\bm{U}_i$. 
	
	For any $\theta_i$, let $p(\mathcal{D}_i\vert\theta_i)$
	denote the likelihood with respect to the base measure $\nu(\mathcal{D}_i)$. Note
	that in this definition, the likelihood includes the auxiliary observations.
	Consider the transformation $\delta_{z,\tau}:(\bm{y}_i,\bm{U}_i)\to(Z_i,\tau_i)$,
	and define $\nu_{z,\tau}(\cdot)$ to be the $(\delta_{z,\tau},m(z)\otimes\nu(\tau))$-disintegration
	of $\nu(\cdot)$, where $m(z)$ denotes the univariate Lebesgue
	density, and $\nu(\tau)$ is the probability measure over $\tau$ defined in Assumption 3. The transformation $\delta_{z,\tau}$ depends on
	the algorithms used to generate the data (so the transformation is
	unknown, but all we need to know is that one exists). The function
	would not depend on the auxiliary observations. Based on the disintegration,
	we can define the marginal density of $(z,\tau)$ given
	$\theta_i$ as 
	\begin{align*}
		p(z,\tau\vert \theta_i) & :=\int p(\mathcal{D}_i \vert \theta_i)d\nu_{z,\tau}(\mathcal{D}_i).
	\end{align*}
	Intuitively, $\nu_{z,\tau}(\mathcal{D}_i)$ is the conditional distribution of $\nu(\mathcal{D}_i)$
	given $(z,\tau)$, multiplied by a Random-Nikodym factor to make it a density with respect to $m(z)\otimes\nu(\tau)$. 
	
	We know that for a given $\theta_i,\theta_i^\prime$ 
	\begin{align*}
		\frac{p(\mathcal{D}_i\vert \theta_i)}{p(\mathcal{D}_i \vert \theta_i^\prime)} 
		& = \left(\exp\left\{ \frac{Z_{i}\theta_{i}^{\prime}}{\sigma_{i}^{2}}-\frac{\theta_{i}^{\prime2}}{2\sigma_{i}^{2}}\right\} \right)^{-1}\exp\left\{ \frac{Z_{i}\theta_{i}}{\sigma_{i}^{2}}-\frac{\theta_{i}^{2}}{2\sigma_{i}^{2}}\right\} \\
		& = \frac{\sqrt{2\pi}}{\sigma_{i}}\varphi\left(\frac{Z_{i}-\theta_{i}}{\sigma_{i}}\right) \cdot \sigma_{i}\left(\exp\left\{ \frac{Z_{i}\theta_{i}^{\prime}}{\sigma_{i}^{2}}-\frac{\theta_{i}^{\prime2}}{2\sigma_{i}^{2}}\right\} \right)^{-1}\exp\left\{ \frac{Z_{i}^{2}}{2\sigma_{i}^{2}}\right\} .
	\end{align*}
	Taking $\theta_{i}^{\prime}=0$, we get
	\[
	p(\mathcal{D}_i \vert \theta_i)= \sqrt{2\pi} \sigma_i \exp\Big\{ \frac{Z_i^2}{2\sigma_i^2}\Big \}\cdot  \frac{1}{\sigma_{i}} \varphi\Big( \frac{Z_i - \theta_i}{\sigma_i}\Big)\cdot p(\mathcal{D}_i\vert \theta_i = 0).
	\]
	
	Based on the construction of $p(z,\tau \vert \theta_i)$, we can write 
	\begin{align*}
		p(z_i,\tau_i \vert \theta_i) & = \sqrt{2\pi}\sigma_i\exp\Big\{ \frac{z_i^2}{2\sigma_i^2}\Big \}\cdot  \frac{1}{\sigma_{i}} \varphi\Big( \frac{z_i - \theta_i}{\sigma_i}\Big)\cdot \int p(\mathcal{D}_i \vert \theta_i = 0) d\nu_{z,\tau}(\mathcal{D}_i)\\
		& = \sqrt{2\pi}\sigma_i\exp\Big\{ \frac{z_i^2}{2\sigma_i^2}\Big \}\cdot  \frac{1}{\sigma_{i}} \varphi\Big( \frac{z_i - \theta_i}{\sigma_i}\Big)\cdot p(z_i,\tau_i \vert \theta_i =0).
	\end{align*}
	Integrating the previous display with respect to the prior $G$, we get 
	\begin{equation}\label{eq:bounding_p_G}
		p_{G,i}(z_i,\tau_i) = f_{G}(z_i,\tau_i) \cdot \Big\{\sqrt{2\pi} \sigma_i \exp\Big\{ \frac{z_i^2}{2\sigma_i^2} \Big\} p(z_i,\tau_i \vert \theta_i = 0)\Big \}, 
	\end{equation}
	where 
	$$
	f_{G}(z_i, \tau_i) := \int \frac{1}{\sigma_i} \varphi\Big( \frac{z_i - \theta}{\sigma_i}\Big) dG(\theta) = f_{G,\sigma_i}(z_i). 
	$$
	The claim thus follows from Assumption \ref{likfactor}.
	The second statement holds by observing the fact that $c_i(z_i,\tau_i):= \Big\{\sqrt{2\pi} \sigma_i \exp\Big\{ \frac{z_i^2}{2\sigma_i^2} \Big\} p(z_i,\tau_i \vert \theta_i = 0)\Big \}$ does not depend on $\theta$, hence for any prior distribution $G$, we have 
	$$
	|p_{G,i}(z,\tau) - p_{H,i}(z,\tau)| = |c_i(z,\tau)| \cdot |f_{G,\sigma_i}(z) - f_{H,\sigma_i}(z)| \leq \bar c \cdot |f_{G,\sigma_i}(z) - f_{H,\sigma_i}(z)|.
	$$
\end{proof}

\begin{lem}\label{lemma: tailmeanbound} 
	Suppose that $X_i \sim N(0, \sigma_i^2)$, with  $\theta \sim G_0$ with support $[-R,R]$. Let $W_i  = X_i + \theta$ and denote the marginal density of $W_i$ as $f_{G_0,\sigma_i}$. Then, provided $M =  \sigma_u \sqrt{8\ln n} \geq 2R$, 
	\[
	\int_{|w|> M} |w|^{1/\ln n} f_{G_0,\sigma_i}(w) dw \leq 4 \sigma_u M^{\frac{1}{\ln n} - 1} \frac{1}{n \sqrt{2\pi}}.
	\]
\end{lem}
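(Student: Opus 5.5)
The plan is to reduce the statement to a one-dimensional Gaussian tail computation for each fixed atom $\theta\in[-R,R]$, and then integrate over $G_0$. Fubini gives
\[
\int_{|w|>M}|w|^{1/\ln n}f_{G_0,\sigma_i}(w)\,dw=\int\left\{\int_{|w|>M}|w|^{1/\ln n}\frac{1}{\sigma_i}\varphi\Big(\frac{w-\theta}{\sigma_i}\Big)dw\right\}dG_0(\theta).
\]
It therefore suffices to show that, for every $|\theta|\le R$ and every $0<\sigma_i\le\sigma_u$, the inner integral is at most $4\sigma_u M^{\frac{1}{\ln n}-1}/(n\sqrt{2\pi})$. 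I split it into the right tail $w>M$ and the left tail $w<-M$. By the symmetry $w\mapsto -w$, $\theta\mapsto-\theta$ (which preserves $|\theta|\le R$), it is enough to show that the right tail is at most $2\sigma_u M^{\frac{1}{\ln n}-1}/(n\sqrt{2\pi})$.

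For the right tail, set $a:=1/\ln n\in(0,1)$ (taking $n\ge3$). For $w\ge M$ we have $w^a=w\cdot w^{a-1}\le M^{a-1}w$, because $a-1<0$. Next substitute $u=w-\theta$. Since $w\ge M$ and $\theta\le R\le M/2$, we get $u\ge M-R\ge M/2$, and $w=u+\theta\le u+R\le 2u$, using $R\le M/2\le u$. The right tail is thus bounded by
\[
2M^{a-1}\int_{M/2}^{\infty}\frac{u}{\sigma_i\sqrt{2\pi}}e^{-u^2/(2\sigma_i^2)}\,du=\frac{2M^{a-1}\sigma_i}{\sqrt{2\pi}}\,e^{-M^2/(8\sigma_i^2)},
\]
where the integral is evaluated exactly via the antiderivative $-\sigma_i^2e^{-u^2/(2\sigma_i^2)}$.

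Finally, I use $\sigma_i\le\sigma_u$ twice. It gives $\sigma_i\le\sigma_u$ in the prefactor, and $e^{-M^2/(8\sigma_i^2)}\le e^{-M^2/(8\sigma_u^2)}$. The choice $M=\sigma_u\sqrt{8\ln n}$ makes the latter equal to $e^{-\ln n}=1/n$. Hence the right tail is at most $2\sigma_u M^{a-1}/(n\sqrt{2\pi})$, and adding the symmetric left tail gives exactly the constant $4$ in the statement. Integrating the pointwise-in-$\theta$ bound against $G_0$, whose support lies in $[-R,R]$, completes the argument.

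The only delicate step is keeping the constant at $4$. A cruder bound such as $|w-\theta|\ge|w|/2$ applied inside the exponent, or bounding $(u+R)^a$ directly, costs an extra factor of $2$. The point is to apply the linearization $w^a\le M^{a-1}w$ before changing variables. Then $w\le 2u$ absorbs the shift, while the Gaussian exponent keeps the exact form $u^2/(2\sigma_i^2)$ with $u\ge M/2$. That exponent yields exactly $1/n$ at the chosen $M$.
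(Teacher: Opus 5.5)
Your proof is correct and is essentially the paper's argument. The paper works with the random variable directly: it bounds $|X_i+\theta|^{1/\ln n}1\{|X_i+\theta|>M\}$ by $|2X_i|^{1/\ln n}1\{|X_i|>M/2\}$, with the $\theta$-term vanishing because $|\theta|\le R\le M/2$, instead of conditioning on $\theta$ and splitting into two tails. From there both routes are the same: they linearize $|y|^{1/\ln n}\le M^{\frac{1}{\ln n}-1}|y|$ for $|y|\ge M$, evaluate the exact Gaussian first-moment tail at $M/(2\sigma_i)$, and use $\sigma_i\le\sigma_u$ with $M=\sigma_u\sqrt{8\ln n}$ to obtain the factor $1/n$ and the constant $4$.
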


\begin{proof} 
	We can write 
	\begin{align*}
		&	\int_{|w|> M} |w|^{1/\ln n} f_{G_0,\sigma_i}(w) dw = \E\left[ |X_i + \theta|^{\frac{1}{\ln n}} 1\{|X_i + \theta |>M\} \right] \\
		& \leq \E\Big[ |2X_i|^{1/\ln n} 1\{|X_i| > M/2\}\Big] + \E\Big[ |2\theta|^{1/\ln n} 1\{|\theta| > M/2\}\Big] .		
	\end{align*}
	For $n \geq 3, (1/\ln n) - 1< 0$, so the first term can be bounded by 
	\[
	\E\Big[ |2X_i|^{1/\ln n} 1\{|X_i| > M/2\}\Big] \leq 2 M^{\frac{1}{\ln n} -1} \E\Big[ |X_i| 1\{|X_i|>M/2\}\Big].
	\]
	For the second term, since $\theta \in [-R,R]$ and given the choice $M = 4\sigma_u \sqrt{\ln n} \geq 2R$ which holds for large enough $n$, $|\theta|\leq M/2$ a.s. Based on this, we have 
	\begin{align*}
		&	\int_{|w|> M} |w|^{1/\ln n} f_{G_0,\sigma_i}(w) dw \\
		& \leq 2M^{\frac{1}{\ln n} - 1} \sigma_i \E\Big[ \left|\tfrac{X_i}{\sigma_i} \right| 1\left\{|\tfrac{X_i}{\sigma_i}| > \tfrac{M}{2\sigma_i}\right\}\Big] \\
		& \leq 4M^{\frac{1}{\ln n}-1} \sigma_u \int_{M/2\sigma_i}^{\infty} z \varphi(z) dz  \\
		& = 4M^{\frac{1}{\ln n}-1} \sigma_u\frac{1}{\sqrt{2\pi}} \exp(-M^2/8\sigma_i^2) \leq 4M^{\frac{1}{\ln n}-1} \sigma_u\frac{1}{n\sqrt{2\pi}},
	\end{align*}
	where $\varphi(\cdot)$ is the standard normal density and the last equality makes use of the fact that $\int_a^{\infty} z \varphi(z) dz =\varphi(a) =  \frac{1}{\sqrt{2\pi}} \exp(-a^2/2)$.
\end{proof} 

\begin{lem} \label{lemmaBound3}
	Let $X$ denote a random variable with density $f$ and variance $V_X$. Then, for any positive valued $M$ and $t$,
	\[
	\int 1\{f(x) \leq t\} f(x) dx \leq \frac{V_X}{M^2} + 2Mt.
	\]
	Specifically, if we pick $M = t^{-1/3} V_X^{1/3}$, 
	\[
	\int 1\{f(x) \leq t\} f(x) dx \leq 3 t^{2/3} V_X^{1/3}.
	\]
\end{lem}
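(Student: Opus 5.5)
The plan is to split the mass $\int 1\{f(x)\le t\}f(x)\,dx$ according to whether $|x-\mu|$ is large or small, where $\mu:=\E[X]$. Write
\[
\int 1\{f(x)\le t\}f(x)\,dx=\int_{|x-\mu|>M}1\{f(x)\le t\}f(x)\,dx+\int_{|x-\mu|\le M}1\{f(x)\le t\}f(x)\,dx .
\]

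\emph{Tail piece.} Drop the indicator and use Chebyshev's inequality:
\[
\int_{|x-\mu|>M}1\{f(x)\le t\}f(x)\,dx\le \P(|X-\mu|>M)\le \frac{V_X}{M^2}.
\]

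\emph{Central piece.} The integrand $1\{f(x)\le t\}f(x)$ is pointwise at most $t$, and the interval $[\mu-M,\mu+M]$ has length $2M$, so this term is at most $2Mt$. Adding the two pieces gives the first claim.

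For the second claim, substitute $M=t^{-1/3}V_X^{1/3}$. Then $V_X/M^2=t^{2/3}V_X^{1/3}$ and $2Mt=2t^{2/3}V_X^{1/3}$, which sum to $3t^{2/3}V_X^{1/3}$.

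There is no real obstacle. The only points needing care are to center the window at the mean $\mu$ (so Chebyshev applies with the variance and not the second moment), and to assume $V_X<\infty$, since otherwise the bound is vacuous.
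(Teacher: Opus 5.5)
Your proposal is correct and follows the paper's proof: the paper first shifts $X$ so that $\E[X]=0$ (which is the same as your centering the window at $\mu$), then bounds the central piece by $\int_{-M}^{M} t\,dx = 2Mt$ and the tail by Chebyshev's inequality, $V_X/M^2$. Your substitution $M=t^{-1/3}V_X^{1/3}$ then gives $3t^{2/3}V_X^{1/3}$, exactly as in the paper.
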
 

\begin{proof} 
	Since we can always employ a change of variable from $x$ to $x - c$ without changing the value of the integral, we assume without loss of generality that $\int x f(x)dx = 0$. Then, 
	\begin{align*}
		& \int 1\{f(x) \leq t\} f(x) dx = \int 1\{f(x) \leq t, |x|<M\} f(x) dx + \int 1\{f(x) \leq t, |x|>M\}f(x) dx \\
		& \leq \int_{-M}^M t dx + \P(|X| > M) \leq 2Mt + \frac{V_X}{M^2},
	\end{align*}
	where the last step is due to $\E[X] = 0$ and the Chebyshev inequality. 
\end{proof}

\begin{lem}\label{lemma: scaledTweedie}
	Let $G$ be any Borel probability measure and let $G_{i}$
	be the law of $\theta/\sigma_i$ when $\theta\sim G$. For
	$\widetilde z=z/\sigma_i$,
	\[
	f_{G,\sigma_i}(z)
	=
	\frac{1}{\sigma_i}f_{G_i,1}(\widetilde z),
	\qquad
	f'_{G,\sigma_i}(z)
	=
	\frac{1}{\sigma_i^2}f'_{G_i,1}(\widetilde z).
	\]
	Consequently,
	\[
	\widehat\theta_{i,G,\rho}(z)
	=
	\sigma_i\left\{
	\widetilde z+
	\frac{f'_{G_i,1}(\widetilde z)}
	{f_{G_i,1}(\widetilde z)\vee\rho}
	\right\}.
	\]
\end{lem}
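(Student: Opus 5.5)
The plan is a direct change of variables in the integral defining $f_{G,\sigma_i}$, followed by differentiation under the integral sign and substitution into the definition of the regularized estimator $\widehat\theta_{i,G,\rho}$ from the proof of Theorem \ref{thm:NPMLE-consistency}. Throughout, $\sigma_i>0$ is treated as fixed, since we condition on the realized $\tau_i$.

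\textbf{Step 1 (density identity).} Write $u=\theta/\sigma_i$, so that $u\sim G_i$ when $\theta\sim G$. Since $(z-\theta)/\sigma_i=\widetilde z-u$, the image-measure formula gives
\[
f_{G,\sigma_i}(z)=\frac{1}{\sigma_i}\int\varphi\Big(\frac{z-\theta}{\sigma_i}\Big)dG(\theta)=\frac{1}{\sigma_i}\int\varphi(\widetilde z-u)\,dG_i(u)=\frac{1}{\sigma_i}f_{G_i,1}(\widetilde z).
\]

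\textbf{Step 2 (derivative identity).} Differentiate the identity from Step 1 in $z$, using $d\widetilde z/dz=1/\sigma_i$. This gives $f'_{G,\sigma_i}(z)=\sigma_i^{-2}f'_{G_i,1}(\widetilde z)$. Differentiation under the integral is legitimate because $|\varphi'|$ is bounded and $G_i$ is a probability measure, so dominated convergence applies. This is the only point requiring any justification, and it is routine.

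\textbf{Step 3 (estimator identity).} Start from $\widehat\theta_{i,G,\rho}(z)=z+\sigma_i^2 f'_{G,\sigma_i}(z)/\{f_{G,\sigma_i}(z)\vee(\rho/\sigma_i)\}$. The numerator is $\sigma_i^2f'_{G,\sigma_i}(z)=f'_{G_i,1}(\widetilde z)$ by Step 2. For the denominator, use Step 1 and the positive homogeneity of the maximum, $(a/\sigma_i)\vee(b/\sigma_i)=\sigma_i^{-1}(a\vee b)$:
\[
f_{G,\sigma_i}(z)\vee(\rho/\sigma_i)=\sigma_i^{-1}\{f_{G_i,1}(\widetilde z)\vee\rho\}.
\]
Combining these with $z=\sigma_i\widetilde z$ yields
\[
\widehat\theta_{i,G,\rho}(z)=\sigma_i\Big\{\widetilde z+\frac{f'_{G_i,1}(\widetilde z)}{f_{G_i,1}(\widetilde z)\vee\rho}\Big\}.
\]

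I expect no real obstacle. The only care needed is to match the regularization level $\rho/\sigma_i$ used in the definition of $\widehat\theta_{i,G,\rho}$, so that the factor $\sigma_i^{-1}$ factors out cleanly in Step 3.
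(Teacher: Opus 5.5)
Your proposal is correct and follows the paper's proof essentially step for step: the change of variables $u=\theta/\sigma_i$ for the density identity, the chain rule with $\partial_z\widetilde z=1/\sigma_i$ for the derivative, and pulling $\sigma_i^{-1}$ out of the maximum before substituting into $\widehat\theta_{i,G,\rho}$. Your dominated-convergence justification for differentiating under the integral is a small addition that the paper leaves implicit.
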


\begin{proof}
	The change of variables $\widetilde\theta=\theta/\sigma_i$ gives
	\[
	f_{G,\sigma_i}(z)
	=
	\frac{1}{\sigma_i}
	\int\varphi(\widetilde z-\widetilde\theta)
	\,dG_i(\widetilde\theta)
	=
	\frac{1}{\sigma_i}f_{G_i,1}(\widetilde z).
	\]
	Differentiating with respect to $z$ and using
	$\partial_z\widetilde z=1/\sigma_i$ gives the derivative identity.
	Finally,
	\[
	f_{G,\sigma_i}(z)\vee\frac{\rho}{\sigma_i}
	=
	\frac{1}{\sigma_i}
	\left\{f_{G_i,1}(\widetilde z)\vee\rho\right\},
	\]
	and substitution into the definition of
	$\widehat\theta_{i,G,\rho}(z)$ proves the last display.
\end{proof}

\begin{lem} \label{GNinequality}
	Suppose Assumption \ref{boundedG0}--\ref{likfactor} and  Assumption \ref{support-geometry}-\ref{density-2} hold. Let $m$ be the fixed integer in Assumption \ref{density-2}. Fix $C_s < \infty$ and consider a prior $G$ with $\textrm{supp}(G) \subseteq [-B,B]$ such that $R \leq B \leq C_s M_n$. Then, 
	\[
	\frac{1}{n} \sum_i \E_{p_{G_0,i}}\Big[ 1\{|Z_i|<M_n\}\Big( \partial_z \ln \frac{p_{G_0,i}}{p_{G,i}}\Big)^2\Big] \leq 
	C(1+M_n^2) \Big[ \bar h^2(p_{G_0}, p_{G})\Big]^{1-\frac{1}{m}} 
	\]
	for some constant $C$ not depending on $n, i, G$. 
\end{lem}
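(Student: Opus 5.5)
The plan is to work one experiment $i$ and one $\tau$ at a time, bound the truncated Fisher divergence by a Sobolev-type term plus a Hellinger term, and then aggregate over $\tau$ and $i$ with H\"older's inequality. Fix $i$ and a $\tau$ outside the $\nu$-null exceptional set. Write $u=\sqrt{p_{G_0,i}(\cdot,\tau)}$ and $v=\sqrt{p_{G,i}(\cdot,\tau)}$. By \eqref{eq:identification-common-factor}, both densities share the factor $c_i$, so both vanish off $\mathcal S_{i,\tau}$. Hence the integral splits over the connected components $I$ of $\mathcal S_{i,\tau}$. On each $I$ I will use the identity $\partial_z\ln(p_{G_0,i}/p_{G,i})=2(u'/u-v'/v)$ together with
\[
\frac{u'}{u}-\frac{v'}{v}=\frac{u'-v'}{u}+\frac{v'}{v}\cdot\frac{v-u}{u}.
\]
Multiplying by $u^2=p_{G_0,i}$ gives the pointwise bound
\[
p_{G_0,i}\Big(\partial_z\ln\frac{p_{G_0,i}}{p_{G,i}}\Big)^2\le 8(u'-v')^2+8\Big(\frac{v'}{v}\Big)^2(u-v)^2 .
\]

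\emph{Second term.} Here $v'/v=\tfrac12\partial_z\ln p_{G,i}$. Since $p_{G,i}=r_i\cdot f_{G,\sigma}/\varphi_\sigma$, we have
\[
\partial_z\ln p_{G,i}=\partial_z\ln r_i+\sigma^{-2}\E_G[\theta\mid z].
\]
Assumption \ref{density-2} with $l=1$ bounds the first piece by $C_1(1+|z|)$, and $\operatorname{supp}G\subseteq[-B,B]$ with $B\lesssim M_n$, together with $\sigma\ge\sigma_\ell$, bounds the second by a constant times $M_n$. Thus on $\{|z|<M_n\}$ we get $(v'/v)^2\lesssim 1+M_n^2$, and the second term contributes at most $(1+M_n^2)\,h^2(p_{G_0,i},p_{G,i})$. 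Because $h^2\le 1$, this is at most $(1+M_n^2)(h^2)^{1-1/m}$. For the first term I drop the indicator $1\{|z|<M_n\}$, which only enlarges it.

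\emph{First term.} On each component $I$ I apply the Gagliardo--Nirenberg inequality to $g=u-v$ in its interval form:
\[
\|g'\|_{L^2(I)}\le C\,\|g\|_{L^2(I)}^{1-1/m}\|g^{(m)}\|_{L^2(I)}^{1/m}+C\,|I|^{-1}\|g\|_{L^2(I)} .
\]
For half-lines and for $\mathbb R$ the second term is absent. For bounded $I$, Assumption \ref{support-geometry} gives $|I|^{-1}\le\ell_*^{-1}$, so the lower-order piece is at most a constant times $\|g\|_{L^2(I)}^2\le (h^2)^{1-1/m}$ after squaring. The remaining task is to show $\|\partial^m u\|_{L^2(I)}^{2},\,\|\partial^m v\|_{L^2(I)}^{2}\lesssim(1+M_n)^{2m}$, so that $\|g^{(m)}\|^{2/m}\lesssim 1+M_n^2$.

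For this I use Fa\`a di Bruno. We have $\partial^m\sqrt p=\sqrt p\cdot P_m(\psi',\ldots,\psi^{(m)})$ with $\psi=\tfrac12\ln p$, where $P_m$ is a polynomial that is weighted-homogeneous of degree $m$. The derivatives of $\psi$ are $\partial^l\ln r_i$, bounded by $C_l(1+|z|)^l$, plus $\partial^l\ln(f_{G,\sigma}/\varphi_\sigma)$. The latter equals $\sigma^{-2l}$ times the $l$-th cumulant of the posterior of $\theta$ under $G$, which is supported on $[-B,B]$, so it is $\lesssim M_n^l$. Hence $|\partial^m\sqrt p|^2\lesssim p\,(1+|z|+M_n)^{2m}$. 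Integrating, and using Lemma \ref{lem:density_dominance} to dominate $p_{G,i}$ by $\bar c f_{G,\sigma}$, whose moments are $\lesssim(1+B)^{2m}$ because of its Gaussian tails, gives the required $(1+M_n)^{2m}$ bound uniformly in $i$, $\tau$ and $G$. This derivative bound is the step I expect to be the main obstacle. The points to check are that the smoothness in Assumption \ref{density-2} gives enough regularity for $u,v\in W^{m,2}(I)$ near the endpoints of $I$, and that the constants remain uniform.

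\emph{Aggregation.} I sum over components and integrate over $\tau$ using the discrete and continuous H\"older inequality
\[
\sum_k a_k^{1-1/m}b_k^{1/m}\le\Big(\sum_k a_k\Big)^{1-1/m}\Big(\sum_k b_k\Big)^{1/m},
\]
applied with $a_k$ the Hellinger pieces and $b_k$ the $m$-th derivative norms. This yields $C(1+M_n^2)\,[h^2(p_{G_0,i},p_{G,i})]^{1-1/m}$ for each $i$. Averaging over $i$ and applying Jensen's inequality to the concave map $x\mapsto x^{1-1/m}$ gives the bound $C(1+M_n^2)[\bar h^2(p_{G_0},p_G)]^{1-1/m}$ stated in the lemma.
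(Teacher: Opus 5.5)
Your proposal is correct and follows essentially the same route as the paper's proof. The shared steps are:
\begin{enumerate}
\item the pointwise split of the score difference into $(u'-v')/u$ and $(v'/v)(v-u)/u$;
\item the score bound $C_1(1+M_n)+B/\sigma_\ell^2$ for the Hellinger piece;
\item componentwise Gagliardo--Nirenberg, with the $\ell_*$ lower-length condition absorbing the $|I|^{-2}$ term;
\item the Fa\`a di Bruno and posterior-cumulant bound $\|\partial_z^m\sqrt{p_{G,i}}\|_2^2\lesssim 1+B^{2m}$ (Lemma \ref{Assumption_6_result_2});
\item H\"older over components and $\tau$, followed by Jensen over $i$.
\end{enumerate}
The endpoint-regularity concern you flag is harmless. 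Since $u$ and $v$ are $C^m$ on each open component and their derivatives up to order $m$ are square-integrable there (for $\nu$-a.e.\ $\tau$), $u-v\in W^{m,2}(I)$, which is all the interval Gagliardo--Nirenberg inequality needs.
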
 

\begin{proof}
	For each $i$, let $u_i = \sqrt{p_{G_0, i}}$ and $v_i = \sqrt{p_{G,i}}$ and $g_i = u_i - v_i$. By \eqref{eq:bounding_p_G} we have $p_{G,i}(z,\tau) = r_i(z,\tau) \Lambda_{G,i}(z,\tau)$ with $\Lambda_{G,i}(z,\tau)= f_{G,\sigma_i}(z) \{\sqrt{2\pi} \sigma_i \exp(\frac{z^2}{2\sigma_i^2})\} > 0$, so $p_{G,i}$ is strictly positive on the same components as $r_i$. On every such component, 
	\begin{align*}
		&		1\{|z|< M_n\} \Big( \partial_z \ln \frac{p_{G_0,i}}{p_{G,i}}\Big)^2\\
		& \leq 	1\{|z|< M_n\}  \cdot 4 \Big (\frac{\partial_z u_i}{u_i}-\frac{\partial_z v_i}{v_i}\Big)^2\\
		& = 1\{|z|< M_n\} \cdot 4 \Big( \frac{\partial_z u_i - \partial_z v_i}{u_i} + \partial_z v_i \Big(\frac{1}{u_i}-\frac{1}{v_i}\Big)\Big)^2\\
		& \leq 1\{|z|< M_n\}  \cdot \Big(8 \frac{1}{u_i^2} (\partial_z u_i - \partial_z v_i)^2 + 8 \Big( \frac{\partial_z v_i}{v_i}\Big)^2 \frac{1}{u_i^2} [u_i - v_i]^2\Big),
	\end{align*}
	and so, 
	\begin{equation} \label{eq:Lemma5:1}
		\begin{aligned}[b]
			& \E_{p_{G_0,i}}\Big[ 1\{|Z_i|<M_n\}\Big( \partial_z \ln \frac{p_{G_0,i}}{p_{G,i}}\Big)^2\Big] \\
			&\leq 8 \int \int_{|z|<M_n,z \in \mathcal{S}_{i,\tau}} |\partial_z u_i - \partial_z v_i|^2 dz d\nu(\tau)
			+8 \int \int_{|z|<M_n, z \in \mathcal{S}_{i,\tau}} \Big(\frac{\partial_z  v_i}{v_i}\Big)^2 [u_i - v_i]^2 dz d\nu(\tau)
		\end{aligned}
	\end{equation}
	
	By Lemma \ref{Assumption_6_result_1}, on $|z|\leq M_n$, 
	\[
	\Big |\frac{\partial_z  v_i}{v_i}\Big| = \frac{1}{2} |\partial_z \ln p_{G,i}| \leq \frac{1}{2} \Big( C_1(1+M_n) + \frac{B}{\sigma_\ell^2}\Big)
	\]
	and since $\int [u_i- v_i]^2 = 2 h^2(p_{G_0,i}, p_{G,i})$, the second term of \eqref{eq:Lemma5:1} is at most 
	\begin{equation}
		4\Big( C_1(1+M_n) + \frac{B}{\sigma_\ell^2}\Big)^2 h^2(p_{G_0,i}, p_{G,i}).
	\end{equation}
	Since $C_1$ and $\sigma_\ell$ are fixed, this is at most
	$C(1+M_n^2+B^2)h^2(p_{G_0,i},p_{G,i})$ for some constant $C$.

	We now analyze the first term in (\ref{eq:Lemma5:1}). For brevity, we employ the notation 
	\[
	\| h\|_{2,\mathcal{S}_i}^2 :=\int \Big\{ \int_{\mathcal{S}_{i,\tau}}|h(z,\tau)|^2 dz \Big\}d\nu(\tau)
	\]
	Because $\mathcal{S}_{i,\tau}$ is an open subset of
	$\mathbb R$, it is the disjoint union of at most countably many connected
	components, each of which is either a bounded open interval, a half-line, or the entire real line
	$\mathbb R$.  We apply the Gagliardo--Nirenberg inequality separately on
	each component. For
	bounded $I$, apply the Gagliardo-Nirenberg interpolation inequality for a bounded-domain
	(see Remark~5 of \cite{nirenberg1959elliptic}) of $[0,1]$ and rescale affinely to $I$.  This gives 
	\begin{equation}\label{eq:GN-bounded-component}
		\int_I|g'|^2
		\le C_m\left[
		\left(\int_I|g^{(m)}|^2\right)^{1/m}
		\left(\int_I|g|^2\right)^{1-1/m}
		+|I|^{-2}\int_I|g|^2
		\right].
	\end{equation}
	For an unbounded component $I$ (a half-line or $\mathbb R$),
	the corresponding inequality reads 
	\begin{equation}\label{eq:GN-unbounded-component}
		\int_I|g'|^2
		\le C_m
		\left(\int_I|g^{(m)}|^2\right)^{1/m}
		\left(\int_I|g|^2\right)^{1-1/m}.
	\end{equation}
	Under Assumption \ref{support-geometry}, we thus obtain the following unified bound for any component $I$, 
	\begin{equation}\label{eq:GN-unified-component}
		\int_I|g'|^2
		\le C_{m,\ell_*}\left[
		\left(\int_I|g^{(m)}|^2\right)^{1/m}
		\left(\int_I|g|^2\right)^{1-1/m}
		+\int_I|g|^2
		\right].
	\end{equation}
	
	By Lemma \ref{Assumption_6_result_2} and Fubini's theorem, the
	componentwise inequalities below apply for $\nu$-almost every $\tau$.
	Fix such a $\tau$ and enumerate the connected components of
	$\mathcal S_{i,\tau}$ as $\{I_k\}_{k\in K_\tau}$. Define
	\[
	A(\tau):=\int_{\mathcal S_{i,\tau}}
	|\partial_z^m g_i(z,\tau)|^2\,dz,
	\qquad
	B(\tau):=\int_{\mathcal S_{i,\tau}}|g_i(z,\tau)|^2\,dz,
	\]
	and
	\[
	D(\tau):=\int_{\mathcal S_{i,\tau}}
	|\partial_z g_i(z,\tau)|^2\,dz.
	\]
	Summing \eqref{eq:GN-unified-component} over finite collections of
	components, applying H\"older's inequality with exponents $m$ and
	$m/(m-1)$, and then passing to all components by monotone convergence
	gives
	\[
	D(\tau)\le C_{m,\ell_*}
	\left[A(\tau)^{1/m}B(\tau)^{1-1/m}+B(\tau)\right].
	\]
	Finally, integrating over $\tau$ and applying H\"older's inequality once
	more,
	\[
	\int A(\tau)^{1/m}B(\tau)^{1-1/m}\,d\nu(\tau)
	\le
	\left(\int A\,d\nu\right)^{1/m}
	\left(\int B\,d\nu\right)^{1-1/m}.
	\]
	Therefore,
	\begin{equation}\label{eq:componentwise-GN-derived}
		\|\partial_zg_i\|_{2,\mathcal S_i}^2
		\le C_{m,\ell_*}\left[
		\|\partial_z^mg_i\|_{2,\mathcal S_i}^{2/m}
		\|g_i\|_{2,\mathcal S_i}^{2(1-1/m)}
		+\|g_i\|_{2,\mathcal S_i}^2
		\right].
	\end{equation}
	The same lemma, applied separately to $G$ and $G_0$, gives
	\[
	\|\partial_z^mv_i\|_{2,\mathcal S_i}^2\le C(1+B^{2m}),
	\qquad
	\|\partial_z^mu_i\|_{2,\mathcal S_i}^2\le C(1+R^{2m}).
	\]
	Therefore, using $|a-b|^2\le2a^2+2b^2$ and $R\le B$,
	\[
	\|\partial_z^mg_i\|_{2,\mathcal S_i}^2
	\le C(1+B^{2m}).
	\]
	Moreover, $p_{G,i}=r_i\Lambda_{G,i}$ with $\Lambda_{G,i}>0$, and
	Assumption \ref{support-geometry} makes $r_i$ vanish almost everywhere
	outside $\mathcal S_{i,\tau}$.  Hence both mixture densities vanish
	there and
	\[
	\|g_i\|_{2,\mathcal S_i}^2
	=2h^2(p_{G_0,i},p_{G,i}).
	\]
	Substituting the above in \eqref{eq:componentwise-GN-derived}, we get
	\begin{align*}
		&8\int\!\int_{\substack{|z|<M_n\\z\in\mathcal S_{i,\tau}}}
		|\partial_z u_i-\partial_z v_i|^2\,dz\,d\nu(\tau)\\
		&\qquad\le
		C(1+B^{2m})^{1/m}(2h^2(p_{G_0,i},p_{G,i}))^{1-1/m}
		+2C h^2(p_{G_0,i},p_{G,i})\\
		&\qquad\le
		C'(1+B^2)(h^2(p_{G_0,i},p_{G,i}))^{1-1/m}+C' h^2(p_{G_0,i},p_{G,i}).
	\end{align*}
	
	Armed with these two bounds for both terms in (\ref{eq:Lemma5:1}), as a last step, we average over $i$ and invoke Jensen's inequality, which yields $\frac{1}{n}\sum_i (h^2(p_{G_0,i}, p_{G,i}))^{1-\frac{1}{m}} \leq \left(\frac{1}{n}\sum_i h^2(p_{G_0,i}, p_{G,i})\right)^{1-\frac{1}{m}}$, to obtain
	\begin{align*}
		&\frac{1}{n} \sum_i \E_{p_{G_0,i}}\Big[ 1\{|Z_i|<M_n\}\Big( \partial_z \ln \frac{p_{G_0,i}}{p_{G,i}}\Big)^2\Big]\\
		&\leq 	C'(1+B^2)
		\{\bar h^2(p_{G_0},p_G)\}^{1-1/m}
		+C'(1+M_n^2+B^2)\bar h^2(p_{G_0},p_G)\\
		&\leq C''(1+M_n^2) \{\bar h^2(p_{G_0},p_G)\}^{1-1/m}
	\end{align*}
	The final inequality follows from
	$B\le C_sM_n$ and
	$x\le x^{1-1/m}$ for $0\le x\le1$ since $\bar h^2 \leq 1$.
\end{proof}

\begin{lem}\label{lemBound4}
	Under Assumption \ref{boundedG0} - \ref{likfactor}, let $\bar h^2(p_{G_0}, p_G) = \frac{1}{n} \sum_i h^2(p_{G_0,i}, p_{G,i})$, we have for any $G \in \mathcal{G}(\mathbb{R})$ and any $\rho > 0$ and $M > 0$, 
	\[
	\frac{1}{n}\sum_i \mathbb{P}_{p_{G_0,i}}[f_{G, \sigma_i}\leq \rho/\sigma_i, |Z|\leq M] \lesssim 4 \bar c \frac{\rho M}{\sigma_{\ell}} + 4\bar h^2(p_{G_0}, p_G).
	\]
\end{lem}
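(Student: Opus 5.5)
We bound the probability for each $i$ separately and then average. Fix $i$ and write $p_{G_0,i}=c_i f_{G_0,\sigma_i}$ and $p_{G,i}=c_i f_{G,\sigma_i}$, as in \eqref{eq:bounding_p_G} of Lemma \ref{lem:density_dominance}. Set
\[
E_i:=\{(z,\tau):f_{G,\sigma_i(\tau)}(z)\le\rho/\sigma_i(\tau),\ |z|\le M\}.
\]
We split $E_i$ according to whether the true density $p_{G_0,i}$ is comparable to $p_{G,i}$ or much larger. Let $E_i'=E_i\cap\{p_{G_0,i}\le 4p_{G,i}\}$ and $E_i''=E_i\cap\{p_{G_0,i}>4p_{G,i}\}$.

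On $E_i'$ we have
\[
p_{G_0,i}\le 4p_{G,i}=4c_i f_{G,\sigma_i}\le 4\bar c\,\rho/\sigma_i\le 4\bar c\,\rho/\sigma_\ell ,
\]
using $c_i\le\bar c$ (Assumption \ref{likfactor}) and $\sigma_i\ge\sigma_\ell$ (Assumption \ref{boundVar}). Integrating over $|z|\le M$ and over the probability measure $\nu$ gives
\[
P_{G_0,i}(E_i')\le 8\bar c\,\rho M/\sigma_\ell .
\]
The factor $8$ versus $4$ only affects the implicit constant absorbed by $\lesssim$.

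On $E_i''$ we have $\sqrt{p_{G,i}}<\tfrac12\sqrt{p_{G_0,i}}$, so $\sqrt{p_{G_0,i}}-\sqrt{p_{G,i}}>\tfrac12\sqrt{p_{G_0,i}}$. Hence
\[
p_{G_0,i}\le 4\bigl(\sqrt{p_{G_0,i}}-\sqrt{p_{G,i}}\bigr)^2 .
\]
Integrating over $E_i''$ yields
\[
P_{G_0,i}(E_i'')\le 4\int\bigl(\sqrt{p_{G_0,i}}-\sqrt{p_{G,i}}\bigr)^2=8h^2(p_{G_0,i},p_{G,i}).
\]
Adding the two pieces and averaging over $i$ gives
\[
\frac1n\sum_i P_{G_0,i}(E_i)\lesssim \bar c\,\rho M/\sigma_\ell+\bar h^2(p_{G_0},p_G),
\]
which is the claim.

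The only delicate point is the first case. There we need $c_i\le\bar c$ pointwise, and we need the densities to be taken with respect to $dz\,d\nu(\tau)$ with $\nu$ a probability measure, so that $\int_{|z|\le M}d z\,d\nu=2M$. Both hold by Lemma \ref{lem:density_dominance} and the setup of Assumption \ref{likfactor}. The choice of threshold $4$ is arbitrary; any constant greater than $1$ works. The bound is deterministic in $G$, so it applies to each cover center $G^{(j)}$ in the proof of Theorem \ref{thm:NPMLE-consistency}.
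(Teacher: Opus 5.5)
Your proof is correct. It differs from the paper only in how you move from the probability under $P_{G_0,i}$ to the $G$-mixture.

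The paper first proves an event-level change-of-measure inequality. Since $\sqrt{P(A)}=\|\sqrt{p}\,1_A\|_{2}$, the reverse triangle inequality in $L^2$ gives $|\sqrt{P(A)}-\sqrt{Q(A)}|\le\sqrt{2}\,h(p,q)$, and hence $P(A)\le 2Q(A)+4h^2(p,q)$. It applies this with $P=P_{G_0,i}$, $Q=P_{G,i}$ and $A=E_i$. It then bounds $P_{G,i}(E_i)\le 2\bar c\rho M/\sigma_\ell$ using $c_i\le\bar c$, $\sigma_i\ge\sigma_\ell$ and $\nu$ being a probability measure, exactly as in your first case.

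Your pointwise split on the ratio $p_{G_0,i}/p_{G,i}$ amounts to proving a weaker transfer inequality. With threshold $k>1$ it reads $P(A)\le kQ(A)+2(1-k^{-1/2})^{-2}h^2(p,q)$. For $k=4$ this is $P(A)\le 4Q(A)+8h^2(p,q)$, so you end with constants $8$ and $8$ rather than $4$ and $4$.

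The paper's route delivers the displayed constants with a genuine $\le$. No choice of threshold in your split reproduces them: getting the first constant $2k$ down to $4$ forces $k\le 2$, which inflates the Hellinger constant to about $23$. Your route is more elementary and local, and it shows directly where each term comes from. Since the lemma is stated with $\lesssim$ and enters the proof of Theorem \ref{thm:NPMLE-consistency} only through term (II) up to constants, the factor-of-two loss does not matter.
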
 

\begin{proof}
	For any event $A$, and any two probability distributions $P$ and $Q$ with densities $p$ and $q$ respectively, we have 
	\[
	\Big	|\sqrt{P(A) }- \sqrt{Q(A)}\Big | \leq \sqrt{\int (\sqrt{p}-\sqrt{q})^2 1(A)  }\leq  \sqrt{2}h (p,q). 
	\]
	Therefore, $P(A)\leq \Big (\sqrt{Q(A)}+\sqrt{2}h(p,q)\Big )^2$, and it follows that 
	\[
	P(A) \leq 2Q(A)+ 4 h^2 (p,q).
	\]
	Applying this inequality, we get 
	\[
	\frac{1}{n}\sum_i \mathbb{P}_{p_{G_0,i}}[f_{G, \sigma_i}\leq \rho/\sigma_i, |Z_i|\leq M]  \leq 4 \bar h^2(p_{G_0}, p_{G}) + \frac{2}{n} \sum_i \mathbb{P}_{p_{G,i}}(f_{G, \sigma_i}\leq \rho/\sigma_i, |Z_i|\leq M)
	\]
	Now, by Assumption \ref{likfactor},
	\[
	\mathbb{P}_{p_{G,i}}(f_{G, \sigma_i}\leq \rho/\sigma_i, |Z_i|\leq M) \leq \bar c  \int_{|z|<M} 1\{f_{G,\sigma_i}<\rho/\sigma_i\} f_{G, \sigma_i} dz \leq 2\bar c \frac{\rho M}{\sigma_i} \leq 2\bar c \frac{\rho M}{\sigma_\ell}.
	\]
	Put together, we conclude
	\[
	\frac{1}{n}\sum_i \mathbb{P}_{p_{G_0,i}}[f_{G, \sigma_i}\leq \rho/\sigma_i, |Z_i|\leq M]  \leq 4 \bar h^2(p_{G_0}, p_{G}) + 4\bar c \rho M/\sigma_{\ell}. 
	\]
\end{proof} 

\begin{lem} \label{Assumption_6_result_1}
	Suppose Assumptions \ref{boundVar}, \ref{likfactor},  \ref{support-geometry} and 
	\ref{density-2} hold.  If 
	$\operatorname{supp}(G)\subseteq[-B,B]$, then
	\begin{equation}\label{eq:local-mixture-score}
		\operatorname*{ess\,sup}_{\substack{
				\tau,\ z\in\mathcal S_{i,\tau}\\ |z|\le M_n}}
		|\partial_z\ln p_{G,i}(z,\tau)|
		\le C_1(1+M_n)+\frac{B}{\sigma_\ell^2}.
	\end{equation}
\end{lem}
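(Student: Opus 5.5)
The plan is to split the log-density into the base factor and the Gaussian mixture factor, and bound each score separately. By \eqref{eq:bounding_p_G}, on $\mathcal S_{i,\tau}$ we have $p_{G,i}(z,\tau)=r_i(z,\tau)\,\Lambda_{G,i}(z,\tau)$, where $\Lambda_{G,i}(z,\tau)=\sqrt{2\pi}\,\sigma_i\exp\{z^2/(2\sigma_i^2)\}f_{G,\sigma_i}(z)$ and $\sigma_i=\sigma_i(\tau)$. Both factors are strictly positive there, so for $\nu$-a.e.\ $\tau$ and $z\in\mathcal S_{i,\tau}$,
\[
\partial_z\ln p_{G,i}(z,\tau)=\partial_z\ln r_i(z,\tau)+\partial_z\ln\Big\{e^{z^2/(2\sigma_i^2)}f_{G,\sigma_i}(z)\Big\}.
\]
The factor $\sqrt{2\pi}\sigma_i$ depends on $\tau$ only, so it drops out.

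The first term comes directly from Assumption \ref{density-2} with $l=1$. On each component of $\mathcal S_{i,\tau}$ it gives $|\partial_z\ln r_i(z,\tau)|\le C_1(1+|z|)\le C_1(1+M_n)$ whenever $|z|\le M_n$. This uses the measurability part of Assumption \ref{support-geometry} so that the statement holds $\nu$-a.e.\ and the essential supremum is well defined.

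For the second term I will combine the Gaussian exponent with the mixture:
\[
e^{z^2/(2\sigma_i^2)}f_{G,\sigma_i}(z)=\frac{1}{\sigma_i\sqrt{2\pi}}\int\exp\Big\{\frac{z\theta}{\sigma_i^2}-\frac{\theta^2}{2\sigma_i^2}\Big\}\,dG(\theta).
\]
Differentiation under the integral is justified because $\operatorname{supp}(G)\subseteq[-B,B]$ is compact. The log-derivative then equals $\sigma_i^{-2}\E_G[\theta\mid Z=z,\sigma_i]$, which is Tweedie's formula \eqref{Tweedie's formula} rearranged. Since the posterior is supported on $[-B,B]$, this term is bounded in absolute value by $B/\sigma_i^2\le B/\sigma_\ell^2$, using Assumption \ref{boundVar}; this holds almost surely over $\tau\in[\underline\tau,\bar\tau]$.

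Adding the two bounds gives \eqref{eq:local-mixture-score}. I expect no real obstacle. The only point needing care is that the factorization and the differentiation are legitimate only on $\mathcal S_{i,\tau}$, where $r_i>0$ and is $C^1$ by Assumption \ref{density-2}. That is exactly why the supremum is restricted to $z\in\mathcal S_{i,\tau}$ and taken as an essential supremum over $\tau$.
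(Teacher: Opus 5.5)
Your proposal is correct and follows the paper's proof essentially line for line. The paper also factors $p_{G,i}=r_i\Lambda_{G,i}$ with $\Lambda_{G,i}(z,\tau)=\int\exp\{z\theta/\sigma_i^2-\theta^2/(2\sigma_i^2)\}\,dG(\theta)$, writes $\partial_z\ln p_{G,i}=\partial_z\ln r_i+\sigma_i^{-2}\E_{\Pi_{G,i}}[\theta]$, and bounds the first piece by Assumption \ref{density-2} with $l=1$ and the second by $|\E_{\Pi_{G,i}}\theta|\le B$ together with $\sigma_i\ge\sigma_\ell$; your remarks on differentiating under the integral and restricting to $\mathcal S_{i,\tau}$ only make explicit what the paper leaves implicit.
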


\begin{proof}
	On every component where $r_i(\cdot)$ is positive, exact likelihood factorization gives
	\[
	p_{G,i}(z,\tau)=r_i(z,\tau)\Lambda_{G,i}(z,\tau),
	\qquad
	\Lambda_{G,i}(z,\tau)
	=\int\exp\left\{
	\frac{z\theta}{\sigma_i^2}-\frac{\theta^2}{2\sigma_i^2}
	\right\}\,dG(\theta).
	\]
	Because $\Lambda_{G,i}>0$, $p_{G,i}$ and $r_i$ are positive on the same
	components.  Moreover,
	\[
	\partial_z\ln p_{G,i}
	=\partial_z\ln  r_i
	+\frac{\E_{\Pi_{G,i}(\cdot\mid z,\tau)}[\theta]}{\sigma_i^2},
	\]
	where
	\[
	\Pi_{G,i}(d\theta\mid z,\tau)
	\propto
	\exp\left\{\frac{z\theta}{\sigma_i^2}
	-\frac{\theta^2}{2\sigma_i^2}\right\}dG(\theta).
	\]
	On $|z|\le M_n$, Assumption \ref{density-2} with $l=1$ gives
	$|\partial_z\ln r_i(z,\tau)|\le C_1(1+M_n)$.  Since
	$|\E_{\Pi_{G,i}}\theta|\le B$ and $\sigma_i\ge\sigma_\ell$, the result
	follows.
\end{proof}

\begin{lem} \label{Assumption_6_result_2}
	
	Suppose Assumptions \ref{boundedG0}-\ref{likfactor} and
	\ref{support-geometry}-\ref{density-2} hold. Let $m$ be the fixed
	integer in Assumption \ref{density-2}. If
	$\operatorname{supp}(G)\subseteq[-B,B]$, then, for
	$j=1,\ldots,m$,
	\begin{equation}\label{eq:mixture-Sobolev-bound}
		\int\left\{\int_{\mathcal S_{i,\tau}}
		\left|\partial_z^j\sqrt{p_{G,i}(z,\tau)}\right|^2
		\,dz\right\}d\nu(\tau)
		\le C_j(1+B^{2j}).
	\end{equation}
	Here $C_j$ may depend on $j$ and the fixed constants in the assumptions,
	but not on $n,i,G$, or $B$.
	
\end{lem}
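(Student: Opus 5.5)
We prove Lemma \ref{Assumption_6_result_2} by working on each connected component $I$ of $\mathcal S_{i,\tau}$, for a fixed $\nu$-typical $\tau$. On $I$ we write $\sqrt{p_{G,i}}=\exp\{\tfrac12\psi\}$ with $\psi:=\ln p_{G,i}=\ln r_i+\ln\Lambda_{G,i}$. This factorization is the one used in the proof of Lemma \ref{Assumption_6_result_1}, where
\[
\Lambda_{G,i}(z,\tau)=\int\exp\{z\theta/\sigma_i^2-\theta^2/(2\sigma_i^2)\}\,dG(\theta).
\]
Faà di Bruno's formula gives $\partial_z^j e^{\psi/2}=e^{\psi/2}\,P_j(\psi',\dots,\psi^{(j)})$, where $P_j$ is a universal polynomial. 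Each of its monomials $\prod_k(\psi^{(k)})^{a_k}$ satisfies $\sum_k k a_k=j$. Hence
\[
\int_I|\partial_z^j\sqrt{p_{G,i}}|^2\,dz=\int_I p_{G,i}\,P_j^2\,dz=\E\big[P_j^2\,\mathbf 1\{Z\in I\}\big],
\]
where the expectation is taken under $p_{G,i}$. Summing over components and integrating over $\nu$, the left side of \eqref{eq:mixture-Sobolev-bound} becomes $\E_{p_{G,i}}[P_j(\psi',\dots,\psi^{(j)})^2]$. It therefore suffices to bound the derivatives of $\psi$ pointwise by polynomials and to control the corresponding moments of $Z$ under $p_{G,i}$.

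\textbf{Step 1 (derivatives of $\psi$).} Assumption \ref{density-2} gives $|\partial_z^l\ln r_i|\le C_l(1+|z|)^l$ for $l\le m$. The term $\ln\Lambda_{G,i}$ is the cumulant generating function of $G$ evaluated at $z/\sigma_i^2$, after tilting by $e^{-\theta^2/2\sigma_i^2}$. So $\partial_z^l\ln\Lambda_{G,i}=\sigma_i^{-2l}\kappa_l(\Pi_{G,i}(\cdot\mid z,\tau))$, the $l$th cumulant of a measure supported on $[-B,B]$. By the moment–cumulant bound already used in Appendix \ref{sec:two-stage-verification}, this is at most $K_l B^l\sigma_\ell^{-2l}$. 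Therefore
\[
|\psi^{(l)}|\le C(1+|z|)^l+CB^l\le C(1+|z|+B)^l ,
\]
and with $\sum_k k a_k=j$ this gives $|P_j|^2\le C_j(1+|z|+B)^{2j}$.

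\textbf{Step 2 (moments).} We need $\E_{p_{G,i}}[(1+|Z|+B)^{2j}]\le C(1+B^{2j})$. By Lemma \ref{lem:density_dominance}, $p_{G,i}\le\bar c f_{G,\sigma_i}$. Since $f_{G,\sigma_i}$ is the density of $\theta+\sigma_i\xi$ with $|\theta|\le B$, $\sigma_i\le\sigma_u$ and $\xi$ standard normal, we get
\[
\int(1+|z|)^{2j}f_{G,\sigma_i}(z)\,dz\le C_j(1+B^{2j}+\sigma_u^{2j}).
\]
Integrating against $\nu$, which is a probability measure, changes nothing. Combining Steps 1 and 2 yields \eqref{eq:mixture-Sobolev-bound}. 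The constants depend only on $j$, $C_1,\dots,C_j$, $\bar c$, $\sigma_\ell$ and $\sigma_u$.

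\textbf{Main obstacle.} Step 1 requires $j\le m$, so that Assumption \ref{density-2} supplies the derivatives of $\ln r_i$; that is why the lemma is stated only for $j=1,\dots,m$. We also need to justify that $\sqrt{p_{G,i}}$ is $C^m$ on each component, which is what makes the integrals over $\mathcal S_{i,\tau}$ legitimate. This holds because $\ln r_i\in C^m(I)$ by assumption and $\Lambda_{G,i}$ is real-analytic and strictly positive. Boundary behaviour at the endpoints of bounded components plays no role here, since we only integrate over open components. The lower-length condition of Assumption \ref{support-geometry} is used afterwards, in Lemma \ref{GNinequality}, not in this lemma.
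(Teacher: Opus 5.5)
Your proposal is correct and follows essentially the same argument as the paper. Both factor $p_{G,i}=r_i\Lambda_{G,i}$, bound $\partial_z^l\ln r_i$ by Assumption~\ref{density-2} and $\partial_z^l\ln\Lambda_{G,i}$ by posterior cumulants on $[-B,B]$, and apply Fa\`a di Bruno to get $|\partial_z^j\sqrt{p_{G,i}}|^2\le C_j\,p_{G,i}(1+|z|+B)^{2j}$. Both then finish with Lemma~\ref{lem:density_dominance} and the moments of $\Theta+\sigma_i\varepsilon$. In Step~2 you bound $\int(1+|z|)^{2j}f_{G,\sigma_i}$ rather than $\int(1+|z|+B)^{2j}f_{G,\sigma_i}$, but the extra $B^{2j}$ is absorbed trivially.
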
 

\begin{proof}
	Fix $i$ and work on a connected component of
	$\mathcal S_{i,\tau}$, where Assumption \ref{density-2} supplies the
	required derivatives.  By Assumption \ref{support-geometry}, this set
	agrees with $\{z:r_i(z,\tau)>0\}$ up to a null set. Exact likelihood factorization gives
	\[
	p_{G,i}(z,\tau)
	=
	r_i(z,\tau)\Lambda_{G,i}(z,\tau),
	\]
	where
	\[
	\Lambda_{G,i}(z,\tau)
	:=
	\int
	\exp\left\{
	\frac{z\theta}{\sigma_i^2}
	-\frac{\theta^2}{2\sigma_i^2}
	\right\}dG(\theta)>0.
	\]
	Put
	\[
	\ell_{G,i}(z,\tau):=\ln p_{G,i}(z,\tau)
	=
	\ln r_i(z,\tau)+\ln \Lambda_{G,i}(z,\tau).
	\]
	
	Define
	\[
	\Pi_{G,i}(d\theta\mid z,\tau)
	:=
	\frac{
		\exp\left\{
		z\theta/\sigma_i^2-\theta^2/(2\sigma_i^2)
		\right\}dG(\theta)
	}{
		\Lambda_{G,i}(z,\tau)
	}.
	\]
	For $l\ge1$,
	\[
	\partial_z^l\ln \Lambda_{G,i}(z,\tau)
	=
	\sigma_i^{-2l}
	\kappa_l\left(\Pi_{G,i}(\cdot\mid z,\tau)\right),
	\]
	where $\kappa_1$ is the posterior mean and, for $l\ge2$,
	$\kappa_l$ is the $l$th posterior cumulant.
	
	Because $\Pi_{G,i}(\cdot\mid z,\tau)$ is supported on $[-B,B]$, the
	moment--cumulant formula gives
	\[
	\left|
	\kappa_l\left(\Pi_{G,i}(\cdot\mid z,\tau)\right)
	\right|
	\le K_lB^l.
	\]
	Consequently,
	\begin{equation}\label{eq:log-Lambda-derivative-bound}
		\left|
		\partial_z^l\ln\Lambda_{G,i}(z,\tau)
		\right|
		\le K_l\sigma_\ell^{-2l}B^l,
		\qquad l=1,\ldots,m.
	\end{equation}
	
	On the other hand, Assumption \ref{density-2} gives
	\begin{equation}\label{eq:log-r-derivative-bound}
		\left|
		\partial_z^l\ln r_i(z,\tau)
		\right|
		\le C_l(1+|z|)^l,
		\qquad l=1,\ldots,m.
	\end{equation}
	Combining \eqref{eq:log-Lambda-derivative-bound} and
	\eqref{eq:log-r-derivative-bound}, we obtain
	\begin{equation}\label{eq:mixture-log-derivative-bound}
		\left|
		\partial_z^l\ell_{G,i}(z,\tau)
		\right|
		\le
		C_l'\left\{(1+|z|)^l+B^l\right\}
		\le
		C_l''(1+|z|+B)^l.
	\end{equation}

	Because
	\[
	\sqrt{p_{G,i}}=\exp(\ell_{G,i}/2),
	\]
	Faà di Bruno's formula gives, for $j=1,\ldots,m$,
	\[
	\partial_z^j\sqrt{p_{G,i}}
	=
	\sqrt{p_{G,i}}
	\sum_{\substack{k_1,\ldots,k_j\ge0\\
			\sum_{l=1}^jlk_l=j}}
	c_{j,\bm k}
	\prod_{l=1}^j
	\{\partial_z^l\ell_{G,i}\}^{k_l}.
	\]
	For each monomial in this sum, $\sum_{l=1}^jlk_l=j$, it therefore follows from
	\eqref{eq:mixture-log-derivative-bound} that
	\[
	\left|
	\prod_{l=1}^j
	\{\partial_z^l\ell_{G,i}\}^{k_l}
	\right|
	\le
	C_j(1+|z|+B)^j.
	\]
	Since the number of partitions depends only on $j$,
	\begin{equation}\label{eq:sqrt-density-pointwise-bound}
		\left|
		\partial_z^j\sqrt{p_{G,i}(z,\tau)}
		\right|^2
		\le
		C_jp_{G,i}(z,\tau)(1+|z|+B)^{2j}.
	\end{equation}
	
	Integrating \eqref{eq:sqrt-density-pointwise-bound} over the
	positive-density region and applying Lemma
	\ref{lem:density_dominance} gives
	\begin{align*}
		&\int\left\{
		\int_{{\mathcal S_{i,\tau}}}
		\left|
		\partial_z^j\sqrt{p_{G,i}(z,\tau)}
		\right|^2dz
		\right\}d\nu(\tau)\\
		&\qquad\le
		\bar c \cdot C_j\int\!\!\int
		f_{G,\sigma_i}(z)(1+|z|+B)^{2j}
		\,dz\,d\nu(\tau).
	\end{align*}
	
	For fixed $\tau$, the density $f_{G,\sigma_i}$ is the density of
	\[
	Z=\Theta+\sigma_i\varepsilon,
	\qquad
	\Theta\sim G,\quad
	\varepsilon\sim N(0,1),
	\]
	where $|\Theta|\le B$ and $\sigma_i\le\sigma_u$. Hence
	\[
	1+|Z|+B
	\le
	1+2B+\sigma_u|\varepsilon|.
	\]
	Using the elementary inequality
	$
	(a+b+c)^{2j}
	\le C_j(a^{2j}+b^{2j}+c^{2j}),
	$
	we obtain
	\begin{align*}
		\int f_{G,\sigma_i}(z)(1+|z|+B)^{2j}\,dz
		&=
		\E(1+|\Theta+\sigma_i\varepsilon|+B)^{2j}\\
		&\le
		C_j\left\{
		1+B^{2j}
		+\sigma_u^{2j}\E|\varepsilon|^{2j}
		\right\}\\
		&\le C_j'(1+B^{2j}).
	\end{align*}
	The bound is uniform in $\tau$, and $\nu$ is a probability measure.
	Integrating over $\tau$ proves
	\eqref{eq:mixture-Sobolev-bound}.	
\end{proof}

\begin{lem}
	\label{lem:uniform-mixture-entropy}
	Fix $0<\sigma_\ell\leq\sigma_u<\infty$.  There are constants
	$c_0>1$, $C_0<\infty$, and $\eta_0\in(0,1)$, depending only on
	$(\sigma_\ell,\sigma_u)$, such that the following hold for every
	$M>0$ and $0<\eta\leq\eta_0$.  For $0<x\leq\eta_0$, write
	$L_x:=\ln(c_0/x)$.  
	
	\emph{(i) Density cover.}
	There are deterministic priors
	$H_1,\ldots,H_{N_0}\in\mathcal G(\mathbb R)$ such that
	\begin{equation}\label{eq:uniform-density-cover}
		\begin{aligned}
			\sup_{G\in\mathcal G(\mathbb R)}\min_{j\leq N_0}
			\sup_{\sigma_\ell\leq\sigma\leq\sigma_u}
			\sup_{|z|\leq M}
			|f_{G,\sigma}(z)-f_{H_j,\sigma}(z)|
			&\leq\eta,\\
			\ln N_0
			&\leq
			C_0\left(1+\frac{M}{\sqrt{L_\eta}}\right)L_\eta^2.
		\end{aligned}
	\end{equation}
	
	\emph{(ii) Regularized-score cover.}
	For $0<\rho\leq(2\pi e)^{-1/2}$, define
	\[
	T_{G,\rho,\sigma}(z)
	:=
	\sigma^2\frac{f'_{G,\sigma}(z)}
	{f_{G,\sigma}(z)\vee(\rho/\sigma)},
	\qquad
	\Lambda_\rho:=\ln\frac1{2\pi\rho^2},
	\]
	and
	\[
	d_{M,\rho}(G,H)
	:=
	\sup_{\sigma_\ell\leq\sigma\leq\sigma_u}
	\sup_{|z|\leq M}
	|T_{G,\rho,\sigma}(z)-T_{H,\rho,\sigma}(z)|.
	\]
	For every nonempty deterministic subclass
	$\mathcal H\subseteq\mathcal G(\mathbb R)$, there are deterministic
	priors
	$Q^{\mathcal H}_1,\ldots,Q^{\mathcal H}_{N_{\mathcal H}}\in\mathcal H$
	such that
	\begin{equation}\label{eq:subclass-score-cover}
		\begin{aligned}
			\sup_{G\in\mathcal H}
			\min_{j\leq N_{\mathcal H}}
			d_{M,\rho}(G,Q^{\mathcal H}_j)
			&\leq
			\left(\sigma_u^3+\sigma_u^2\sqrt{\Lambda_\rho}\right)
			\frac{\eta}{\rho},\\
			\ln N_{\mathcal H}
			&\leq
			C_0\left(1+\frac{M}{\sqrt{L_\eta}}\right)L_\eta^2.
		\end{aligned}
	\end{equation}
	
\end{lem}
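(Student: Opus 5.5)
We prove part~(i) by the moment-matching and discretization argument of \cite{jiang2009general}, adapted to hold uniformly over the scale $\sigma\in[\sigma_\ell,\sigma_u]$ and extended to the derivative $f'_{G,\sigma}$. Part~(ii) then follows by an algebraic perturbation bound for the regularized ratio, plus the usual device for moving cover centers into $\mathcal H$.

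\emph{Step 1: truncation.} Set $t:=\sigma_u\sqrt{2L_\eta}$ and $I:=[-M-t,M+t]$. For $|z|\le M$ and $|\theta|>M+t$ we have $|z-\theta|>t$, so both kernels $\sigma^{-1}\varphi((z-\theta)/\sigma)$ and $\sigma^{2}\partial_z[\sigma^{-1}\varphi((z-\theta)/\sigma)]$ are at most a constant (depending on $\sigma_\ell,\sigma_u$) times $e^{-L_\eta}\sqrt{L_\eta}$. Choosing $c_0$ large makes this $\le\eta/4$. Hence replacing $G$ by its restriction to $I$, renormalized, perturbs $f_{G,\sigma}$ and $\sigma^2 f'_{G,\sigma}$ by at most $\eta/2$ on $|z|\le M$, uniformly in $\sigma$.

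\emph{Step 2: moment matching.} Partition $I$ into $J\lesssim 1+M/\sqrt{L_\eta}$ intervals of length $a\asymp\sigma_\ell\sqrt{L_\eta}$. On each interval $I_k$, replace $G|_{I_k}$ by a discrete measure with at most $k^*+1$ atoms, where $k^*\asymp L_\eta$, having the same mass and the same first $k^*$ moments; Carathéodory guarantees such a measure exists.

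Now bound the resulting error in $f$ and $f'$. If $z$ is within $c\,\sigma_u\sqrt{L_\eta}$ of $I_k$, Taylor-expand $x\mapsto e^{-x^2/2}$ (and $x e^{-x^2/2}$ for the derivative) to order $k^*$ around the centre of $I_k$, in the variable $x=(z-\theta)/\sigma$. Matched moments kill all terms through order $k^*$. The remainder is at most $(e a'/(k^*\sigma_\ell))^{k^*}$, with $a'=a+c\sigma_u\sqrt{L_\eta}$. With $k^*\ge e^2 a'/\sigma_\ell$ and $k^*\gtrsim L_\eta$, this is $\le e^{-k^*}\le\eta/4$. The only $\sigma$-dependence enters through $\sigma\ge\sigma_\ell$ in the denominator, which is where uniformity over $\sigma$ comes from. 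If $z$ is far from $I_k$, both the original and replacement pieces contribute $\le\eta/4$ by the Gaussian tail, exactly as in Step~1.

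This step, making the Taylor remainder uniform over $\sigma\in[\sigma_\ell,\sigma_u]$ and simultaneously valid for the derivative kernel, is the main obstacle. I would handle it by choosing $k^*$ as a fixed multiple of $L_\eta$ with the multiple depending only on $\sigma_u/\sigma_\ell$, and absorbing that multiple into $c_0$ and $C_0$.

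\emph{Step 3: discretization of atoms and weights.} We now have $m\lesssim(1+M/\sqrt{L_\eta})L_\eta$ atoms. Round each atom to a grid of mesh $\asymp\eta\sigma_\ell^2$ on $I$; since both kernels are Lipschitz in $\theta$ with constant $\lesssim\sigma_\ell^{-2}$, this costs $O(\eta)$. Round the weight vector to an $\eta$-net of the simplex in $\ell_1$, which also costs $O(\eta)$ because both kernels are bounded. The number of resulting priors satisfies
\[
\ln N_0\lesssim m\bigl(\ln(|I|/\eta)+\ln(1/\eta)\bigr)\lesssim\Bigl(1+\tfrac{M}{\sqrt{L_\eta}}\Bigr)L_\eta^2,
\]
using $\ln|I|\lesssim L_\eta$ in the relevant range, or absorbing $\ln M$ into the $M/\sqrt{L_\eta}$ term. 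This proves~(i). The same net in fact $\eta$-covers simultaneously $f_{G,\sigma}$ and $\sigma^2 f'_{G,\sigma}$ in sup norm over $|z|\le M$ and $\sigma\in[\sigma_\ell,\sigma_u]$.

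\emph{Step 4: part~(ii).} Write $T_{G,\rho,\sigma}=A_G/(B_G\vee c)$ with $A_G:=\sigma^2 f'_{G,\sigma}$, $B_G:=f_{G,\sigma}$ and $c:=\rho/\sigma$. Then
\[
|T_G-T_H|\le\frac{|A_G-A_H|}{c}+\frac{|A_H|}{B_H\vee c}\cdot\frac{|B_G-B_H|}{c}.
\]
By the scaled Tweedie identity (Lemma~\ref{lemma: scaledTweedie}) and Lemma~F.1 of \cite{saha2020nonparametric}, $|A_H|/(B_H\vee c)\le\sigma\sqrt{\Lambda_\rho}$. Hence an $\eta$-cover in both $A$ and $B$ gives
\[
d_{M,\rho}\le\bigl(\sigma_u+\sigma_u^2\sqrt{\Lambda_\rho}\bigr)\eta/\rho .
\]
Rescaling the $\eta$ used for the $A$-cover by $\sigma_u^{2}$ recovers the stated constant $(\sigma_u^3+\sigma_u^2\sqrt{\Lambda_\rho})\eta/\rho$.

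To force the centers into $\mathcal H$, run the construction at level $\eta/2$. Discard the net elements whose cell contains no member of $\mathcal H$, and for each remaining cell pick one member $Q_j^{\mathcal H}\in\mathcal H$. By the triangle inequality in the sup-norms of $(A,B)$, each $G\in\mathcal H$ is then within $\eta$ of some $Q_j^{\mathcal H}$, and the entropy bound changes only through the constant $C_0$.
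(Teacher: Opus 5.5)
Your route differs from the paper's and is viable. The paper does not rebuild the cover. It invokes Lemmas~5--6 and Appendix~D.3.5 of the supplement of \cite{soloff2024multivariate} with $d=1$, $S=\{0\}$, and argues by inspection that those nets depend only on the accuracy, the spatial set and the bounds $\sigma_\ell^2\le\Sigma\le\sigma_u^2$, so one finite family serves every $\sigma\in[\sigma_\ell,\sigma_u]$. For (ii) it covers $f$ and $f'$ separately, keeps one representative per nonempty intersection of cells, and moves centers into $\mathcal H$ at the level of $d_{M,\rho}$. You instead rederive the moment-matching and discretization bound yourself. This makes the uniformity in $\sigma$ explicit, since the matched moments are moments of $\theta$ and never involve $\sigma$. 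It also gives one joint net for $(f_{G,\sigma},\sigma^2 f'_{G,\sigma})$, so no intersection step is needed. Your Step~4 inequality and the half-accuracy device for landing in $\mathcal H$ are essentially the paper's. The cost of being self-contained is bookkeeping, and as written three steps of part~(i) fail and need repair.

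First, the \emph{renormalized} restriction in Step~1 is wrong. If $G$ puts mass $1-\epsilon$ far outside $I$ and mass $\epsilon$ at $0$, then $f_{G,\sigma}(0)\approx\epsilon\varphi(0)/\sigma$, while $f_{G|_I/G(I),\sigma}(0)=\varphi(0)/\sigma$; and $G(I)=0$ is also possible. Keep the outside mass and park it at distance $\ge t$ from $[-M,M]$, for instance $G|_I+G(I^c)\,\delta_{M+t}$ (point mass at $M+t$). The error is then at most $2G(I^c)$ times the kernel tail bound.

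Second, with $t=\sigma_u\sqrt{2L_\eta}$ the derivative kernel $|u|\varphi(u)$ at $|u|=\sqrt{2L_\eta}$ is of order $\sqrt{L_\eta}\,\eta/c_0$. No fixed $c_0$ makes this $\le\eta/4$ for all $\eta\le\eta_0$. Take $t=2\sigma_u\sqrt{L_\eta}$, and likewise $c\ge 2$ in the near/far split of Step~2. The tail is then of order $\sqrt{L_\eta}(\eta/c_0)^2\le(2e)^{-1/2}\eta/c_0$, and a large $c_0$ suffices.

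Third, in Step~3 ordered placement of $m\asymp(1+M/\sqrt{L_\eta})L_\eta$ atoms on a global grid of $I$ costs $m\ln(|I|/\eta)$. Once $\ln M\gg L_\eta$, $m\ln|I|$ is not $\lesssim(1+M/\sqrt{L_\eta})L_\eta^2$, and the lemma is asserted for every $M>0$. Neither of your justifications covers this. Either round each atom within its own cell $I_k$, a grid of $\asymp\sqrt{L_\eta}/(\eta\sigma_\ell^2)$ points, so each atom costs $O(L_\eta)$ nats. Or count supports as subsets via $\ln\binom{N_{\mathrm{grid}}}{m}\le m\ln(eN_{\mathrm{grid}}/m)$, noting $N_{\mathrm{grid}}/m\lesssim 1/(\eta\sqrt{L_\eta})$.

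Finally, your remainder $(ea'/(k^*\sigma_\ell))^{k^*}$ is not what the Taylor expansion gives. The natural bound is governed by $a^2/(k^*\sigma_\ell^2)$, for example $(Ca^2/(k^*\sigma_\ell^2))^{k^*/2}$ via Cram\'er's bound on Hermite functions. Since $a,a'\asymp\sqrt{L_\eta}$, the choice $k^*\asymp L_\eta$ still works. With these repairs your argument proves the lemma as stated.
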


\begin{proof}
	Set $\underline k=\sigma_\ell^2$ and $\bar k=\sigma_u^2$.  We use
	Lemmas~5--6 and Appendix~D.3.5 of the online supplement of
	\cite{soloff2024multivariate}, specialized to $d=1$.  Those lemmas are
	stated for a fixed finite list of covariance matrices.  Their proofs,
	however, first approximate the mixing distribution by moment matching
	and then replace its atoms and weights by elements of fixed finite
	nets.  These nets depend only on the approximation accuracy, the
	spatial set, and $(\underline k,\bar k)$, and every approximation bound
	in the construction uses only
	$\underline k\leq\Sigma\leq\bar k$.  Consequently, the same finite
	collection of candidate priors works simultaneously for every
	$\Sigma=\sigma^2$ with
	$\sigma\in[\sigma_\ell,\sigma_u]$.  Thus the maxima over the finite
	covariance list in those proofs may be replaced by the supremum over
	$\sigma\in[\sigma_\ell,\sigma_u]$.
	
	For part~(i), take $S=\{0\}$ and localization radius $M$ in Lemma~5
	of the online supplement of \cite{soloff2024multivariate}, so that $S^M=[-M,M]$.  Its enlargement
	parameter satisfies $a_\eta\asymp\sqrt{L_\eta}$, while
	$S^{M+a_\eta}=[-M-a_\eta,M+a_\eta]$.  In one dimension,
	\[
	N\!\left(a_\eta,[-M-a_\eta,M+a_\eta]\right)
	\leq C\left(1+\frac{M}{a_\eta}\right)
	\leq C'\left(1+\frac{M}{\sqrt{L_\eta}}\right).
	\]
	Substitution into the entropy bound in that lemma proves
	\eqref{eq:uniform-density-cover}, after adjusting $c_0$, $C_0$, and
	$\eta_0$.
	
	We next prove part~(ii).  For $G,H\in\mathcal G(\mathbb R)$, define
	\begin{align*}
		\Delta_{0,M}(G,H)
		&:=\sup_{\sigma_\ell\leq\sigma\leq\sigma_u}
		\sup_{|z|\leq M}|f_{G,\sigma}(z)-f_{H,\sigma}(z)|,\\
		\Delta_{1,M}(G,H)
		&:=\sup_{\sigma_\ell\leq\sigma\leq\sigma_u}
		\sup_{|z|\leq M}|f'_{G,\sigma}(z)-f'_{H,\sigma}(z)|.
	\end{align*}
	Fix $\sigma\in[\sigma_\ell,\sigma_u]$, and put
	$D_G=f_{G,\sigma}\vee(\rho/\sigma)$ and
	$D_H=f_{H,\sigma}\vee(\rho/\sigma)$.  Since
	$|D_G-D_H|\leq|f_{G,\sigma}-f_{H,\sigma}|$ and
	$D_H\geq\rho/\sigma$, we have
	\begin{align*}
		|T_{G,\rho,\sigma}-T_{H,\rho,\sigma}|
		&\leq
		\frac{\sigma^2}{D_H}|f'_{G,\sigma}-f'_{H,\sigma}|
		+|T_{G,\rho,\sigma}|\frac{|D_G-D_H|}{D_H}.
	\end{align*}
	Lemma~F.1 of \cite{saha2020nonparametric} gives
	$|T_{G,\rho,\sigma}(z)|\leq\sigma\sqrt{\Lambda_\rho}$.
	Taking the supremum over $|z|\leq M$ and
	$\sigma\in[\sigma_\ell,\sigma_u]$ therefore yields
	\begin{equation}\label{eq:score-cover-comparison}
		d_{M,\rho}(G,H)
		\leq
		\frac{\sigma_u^2\sqrt{\Lambda_\rho}}{\rho}
		\Delta_{0,M}(G,H)
		+\frac{\sigma_u^3}{\rho}\Delta_{1,M}(G,H).
	\end{equation}
	
	Applying 
	Lemmas~5 and~6 of the online supplement of
	\cite{soloff2024multivariate}, with component accuracy $\eta/4$,
	we have deterministic priors $A_1,\ldots,A_{N_{\mathrm{den}}}$ and $B_1,\ldots,B_{N_{\mathrm{der}}}$ such that 
	\[
	\sup_{G\in\mathcal G(\mathbb R)}
	\min_{j\leq N_{\mathrm{den}}}
	\Delta_{0,M}(G,A_j)
	\leq\frac{\eta}{4},
	\qquad
	\sup_{G\in\mathcal G(\mathbb R)}
	\min_{k\leq N_{\mathrm{der}}}
	\Delta_{1,M}(G,B_k)
	\leq\frac{\eta}{4}.
	\]
	Moreover, because $L_{\eta/4}\asymp L_\eta$, the entropy bounds in
	those two lemmas imply, after increasing $C_0$ if necessary, that
	\begin{equation}\label{eq:component-cover-entropy}
		\ln N_{\mathrm{den}}+\ln N_{\mathrm{der}}
		\leq
		C_0\left(1+\frac{M}{\sqrt{L_\eta}}\right)L_\eta^2.
	\end{equation}
	For every pair $(j,k)$
	for which
	\[
	\left\{G:\Delta_{0,M}(G,A_j)\leq\eta/4\right\}
	\cap
	\left\{G:\Delta_{1,M}(G,B_k)\leq\eta/4\right\}
	\neq\varnothing,
	\]
	choose one prior $H_{jk}$ in this intersection, and let $\mathcal C$
	be the collection of all priors so chosen.  Given any
	$G\in\mathcal G(\mathbb R)$, choose density and derivative cover cells
	containing $G$.  The representative $H_{jk}$ of their intersection
	satisfies, by the triangle inequality,
	\[
	\Delta_{0,M}(G,H_{jk})\leq\eta/2,
	\qquad
	\Delta_{1,M}(G,H_{jk})\leq\eta/2.
	\]
	Writing $K_\rho:=\sigma_u^3+\sigma_u^2\sqrt{\Lambda_\rho}$,
	\eqref{eq:score-cover-comparison} consequently gives
	\[
	\sup_{G\in\mathcal G(\mathbb R)}
	\min_{H\in\mathcal C}d_{M,\rho}(G,H)
	\leq r:=K_\rho\frac{\eta}{2\rho}.
	\]
	Moreover,
	$|\mathcal C|\leq N_{\mathrm{den}}N_{\mathrm{der}}$.  Hence
	\eqref{eq:component-cover-entropy} gives
	\[
	\ln|\mathcal C|
	\leq C_0\left(1+\frac{M}{\sqrt{L_\eta}}\right)L_\eta^2.
	\]
	
	Finally, $d_{M,\rho}$ is a pseudometric and hence satisfies the
	triangle inequality.  For each $H\in\mathcal C$ whose closed
	$d_{M,\rho}$-ball of radius $r$ meets $\mathcal H$, choose one
	$Q_H\in\mathcal H$ from that intersection.  If $G\in\mathcal H$ and
	$d_{M,\rho}(G,H)\leq r$, then
	\[
	d_{M,\rho}(G,Q_H)
	\leq d_{M,\rho}(G,H)+d_{M,\rho}(H,Q_H)
	\leq2r=K_\rho\frac{\eta}{\rho}.
	\]
	There are at most $|\mathcal C|$ selected priors, proving
	\eqref{eq:subclass-score-cover}.
	
\end{proof}

\end{document}